\newtheorem{theorem}{Theorem}
\newtheorem{lemma}{Lemma}
\newtheorem{definition}{Definition}
\tikzset{block/.style={draw, thick, text width=8cm ,minimum height=0.75cm, align=center},   
line/.style={-latex}     
}  
\newcommand{\crs}{\mathsf{crs}}
\newcommand{\LL}{\mathcal{L}}
\newcommand{\PP}{\mathsf{P}}
\newcommand{\VV}{\mathsf{V}}
\newcommand{\OO}{\mathcal{O}}
\newcommand{\uniform}{\xleftarrow{\$}
}
\newcommand{\bH}{\mathbf{H}}
\newcommand{\bT}{\mathbf{T}}
\newcommand{\Invert}{\mathsf{Invert}}
\newcommand{\GAME}{\mathbf{Game}}
\newcommand{\PVSS}{\mathsf{PVSS}}
\newcommand{\Dec}{\mathsf{Dec}}
\newcommand{\Key}{\mathsf{Key}}
\newcommand{\Gen}{\mathsf{Gen}}
\newcommand{\Trap}{\mathsf{Trap}}
\newcommand{\TrapGen}{\mathsf{TrapGen}}
\newcommand{\xmark}{\ding{55}}%
\newcommand{\Adversary}{\mathcal{A}}
\newcommand{\PKE}{\mathsf{PKE}}
\newcommand{\bff}{\mathbf{f}}
\newcommand{\bd}{\mathbf{d}}
\newcommand{\BadChallenge}{\mathsf{BadChallenge}}
\newcommand{\negl}{\mathsf{negl}}
\newcommand{\pk}{\mathsf{pk}}
\newcommand{\bv}{\mathbf{v}}
\newcommand{\bx}{\mathbf{x}}
\newcommand{\bb}{\mathbf{b}}
\newcommand{\bk}{\mathbf{k}}
\newcommand{\ba}{\mathbf{a}}
\newcommand{\sk}{\mathsf{sk}}
\newcommand{\RAND}{\mathcal{RAND}}
\newcommand{\tempcaption}{}% stores the caption
\newcommand{\condprob}[2]{\Pr\left[ \begin{array}{l}
		#1
	\end{array}\;\middle|\;\begin{array}{l}
		#2
	\end{array} \right]}
\newcommand{\Simulator}{\mathcal{S}}
\newcommand{\Share}{\mathsf{Share}}
\newcommand{\pp}{\mathsf{pp}}
\newcommand{\by}{\mathbf{y}}
\newcommand{\bt}{\mathbf{t}}
\newcommand{\bz}{\mathbf{z}}
\newcommand{\bh}{\mathbf{h}}
\newcommand{\bmm}{\mathbf{m}}
 \newcommand{\bs}{\mathbf{s}}
  \newcommand{\bu}{\mathbf{u}}
  \newcommand{\bw}{\mathbf{w}}
 \newcommand{\be}{\mathbf{e}}
  \newcommand{\bA}{\mathbf{A}}
        \newcommand{\bzero}{\mathbf{0}}
   \newcommand{\RR}{\mathcal{R}}
   \newcommand{\bB}{\mathbf{B}}
 \newcommand{\uniformly}{\stackrel{\$}{\leftarrow}}
  \newcommand{\br}{\mathbf{r}} 
 \newcommand{\bC}{\mathbf{C}}
 \newcommand{\msg}{\mathsf{msg}}
 \newcommand{\tr}{\mathsf{tr}}
  \newcommand{\Enc}{\mathsf{Enc}}
\newcommand{\SSS}{\mathsf{SSS}}
\newcommand{\Prove}{\mathsf{Prove}} 
\newcommand{\Verify}{\mathsf{Ver}}
\newcommand{\NIZK}{\mathsf{NIZK}} 
\newcommand{\Setup}{\mathsf{Setup}} 
\newcommand{\Combine}{\mathsf{Combine}} 
\newcommand{\bc}{\mathbf{c}} 
\newcommand{\Adv}{\mathbf{Adv}}
\newcommand{\CC}{\mathcal{C}}
\newcommand{\Correctness}{\mathsf{Correctness}}
\newenvironment{framedfigure}[1][\makered{Missing caption}]
{
	\def\myenvargumentI{#1} 
	\begin{figure*}[!h]
		\small
		\begin{tabular}{|p{\textwidth}|}
			\hline 
            \small
		}
		{
			\\\hline
		\end{tabular}
		\caption{\myenvargumentI}
		\vspace*{-1em}
	\end{figure*}
}
\newenvironment{xenumerate}%
{\begin{enumerate}}%
{\end{enumerate}}%
\journal{Computer Standards $\&$ Interfaces}
\begin{document}

\begin{frontmatter}

%% Title, authors and addresses

%% use the tnoteref command within \title for footnotes;
%% use the tnotetext command for theassociated footnote;
%% use the fnref command within \author or \affiliation for footnotes;
%% use the fntext command for theassociated footnote;
%% use the corref command within \author for corresponding author footnotes;
%% use the cortext command for theassociated footnote;
%% use the ead command for the email address,
%% and the form \ead[url] for the home page:
%% \title{Title\tnoteref{label1}}
%% \tnotetext[label1]{}
%% \author{Name\corref{cor1}\fnref{label2}}
%% \ead{email address}
%% \ead[url]{home page}
%% \fntext[label2]{}
%% \cortext[cor1]{}
%% \affiliation{organization={},
%%             addressline={},
%%             city={},
%%             postcode={},
%%             state={},
%%             country={}}
%% \fntext[label3]{}

\title{Publicly Verifiable Secret Sharing: Generic Constructions and Lattice-Based Instantiations in the Standard Model}

%% use optional labels to link authors explicitly to addresses:
%% \author[label1,label2]{}
%% \affiliation[label1]{organization={},
%%             addressline={},
%%             city={},
%%             postcode={},
%%             state={},
%%             country={}}
%%
%% \affiliation[label2]{organization={},
%%             addressline={},
%%             city={},
%%             postcode={},
%%             state={},
%%             country={}}

%% Author affiliation

\author[add1,add2]{Pham Nhat Minh}
\ead{pnminh.sdh232@hcmut.edu.vn}
\author[add3]{Khoa Nguyen}
\ead{khoa@uow.edu.au}
\author[add3]{Willy Susilo}
\ead{ wsusilo@uow.edu.au}
\author[add1,add2]{Khuong Nguyen-An\corref{cor1}}
\ead{nakhuong@hcmut.edu.vn}

\cortext[cor1]{Corresponding author.}
\affiliation[add1]{organization={Department of Computer Science, Faculty of Computer Science and Engineering,\\ University of Technology (HCMUT)},
             addressline={268 Ly Thuong Kiet Street, District 10},\\
             city={Ho Chi Minh City},
             postcode={70000},
             country={Vietnam}}

\affiliation[add2]{organization={Vietnam National University Ho Chi Minh City},
             addressline={Linh Trung Ward, Thu Duc City},\\
             city={Ho Chi Minh City},
             postcode={70000},
             country={Vietnam}}

\affiliation[add3]{organization={Institute of Cybersecurity and Cryptology, School of Computing and Information Technology, University of Wollongong},
             city={Wollongong, NSW 2522},
             country={Australia}}

%% Abstract

\begin{abstract}
%% Text of abstract
Publicly verifiable secret sharing (PVSS) allows a dealer to share a secret among a set of shareholders so that the secret can be reconstructed later from any set of qualified participants. In addition, any public verifier should be able to check the correctness of the sharing and reconstruction process. PVSS has been demonstrated to yield various applications, such as e-voting, decentralized random number generation protocols, and secure computation on distributed networks. Although many concrete PVSS protocols have been proposed, their security is either proven in the random oracle model or relies on quantum-vulnerable assumptions such as factoring or discrete logarithm. In this work, we propose a generic construction of a PVSS from any public key encryption scheme and non-interactive zero-knowledge arguments for suitable gap languages. We then provide lattice-based instantiations of the underlying components, which can be proven in the standard model.  
As a result, we construct the first post-quantum PVSS in the standard model, with a reasonable level of asymptotic efficiency.
\end{abstract}

%%Graphical abstract
%\begin{graphicalabstract}
%\includegraphics{grabs}
%\end{graphicalabstract}

%%Research highlights
\begin{highlights}
\item The paper proposes a generic construction for publicly verifiable secret sharing (PVSS) schemes from any IND-CPA-secure PKE scheme and a NIZK system for certain gap languages. 

%\item The gap languages are defined to capture PVSS requirements and lattice-based instantiations.

\item The construction can be instantiated in the standard model based on the LWE assumption and can support Shamir's secret sharing scheme. To obtain the underlying NIZK system, we design dedicated trapdoor $\Sigma$-protocols for lattice-based relations and then make them non-interactive using recent techniques on Fiat-Shamir in the standard model.

%The PVSS is proven secure in the standard model. This can be achieved by constructing trapdoor $\Sigma$-protocols and use existing NIZK compiler from trapdoor $\Sigma$-protocols  in the CRS model.
%\item The resulting lattice-based PVSS achieves post-quantum security as it only relies on the LWE assumption.
\item The resulting lattice-based PVSS scheme is the first one achieving post-quantum security in the standard model.
\end{highlights}

%% Keywords
\begin{keyword}
%% keywords here, in the form: keyword \sep keyword
PVSS \sep generic construction \sep lattice-based cryptography\sep LWE \sep standard model\sep trapdoor $\Sigma$-protocols\sep NIZK arguments
%% PACS codes here, in the form: \PACS code \sep code

%% MSC codes here, in the form: \MSC code \sep code
%% or \MSC[2008] code \sep code (2000 is the default)

\end{keyword}

\end{frontmatter}

\sloppy % to avoid exceeding lines text in the margins
\section{Introduction}
Secret sharing scheme (SSS) \cite{Sha79} allows a dealer to share a secret among a committee of shareholders so that any qualified set of participants can recover the secret, while any unqualified set of participants learns nothing about it. Verifiable secret sharing~\cite{CGMA85} (VSS) allows the shareholder to verify the process of sharing and reconstruction against malicious dealers (who might distribute invalid shares) and participants (who might submit wrong shares). Publicly verifiable secret sharing (PVSS) takes one step further by allowing anyone, not just the shareholder, to publicly verify the correctness of the sharing and reconstruction process. Normally, PVSS requires at least two phases (and some also need a key generation phase at the beginning), but a PVSS can be considered non-interactive (for example,~\cite{GHL22,Sch99}) if all phases are non-interactive. PVSS has many vital applications, for example, e-voting \cite{Sch99}, e-cash \cite{YY00}, decentralized random number generation protocols (DRNGs) \cite{KRDO17,CD17,SJSW18,GSX20,CD20,DKI022,BSLKN21,CDSV23,BL23}, and is also potentially used for secure computation on distributed networks \cite{GHL22}. So far, many PVSS constructions have been proposed in literature, from group-based, Paillier encryption-based, to hybrid lattice and group-based \cite{Stadler96,FO98,B99,Sch99,YY00,FS01,AJ05,HV08,J11,JVS14,CD17,CD20,DKI022,GHL22,CDGK22,CDSV23,CD24}. However, existing constructions suffer from one of two drawbacks: They either i) require Fiat-Shamir heuristic and could only achieve security in the random oracle model (ROM) or ii) do not achieve post-quantum security due to the reliance on the hardness of either the factoring or the discrete logarithm problems. For the former drawback, \cite{CGH04,GK03} and recently \cite{KRS25} provided counterexamples of protocols that achieve security in the ROM but become insecure when instantiated with any hash functions, making these protocols insecure in the real world. The former two were ``artificial'' counterexamples, but the latter claims to be a realization of a more practical and natural proof system.
Nevertheless, these counterexamples might raise the concern (and this is an open problem) of whether most natural protocols using the Fiat-Shamir heuristic could achieve provable security in the real world (and whether more natural counterexamples could be found). The latter problem goes without saying. Today, quantum computers are being developed, and it is known that discrete log and factoring problems can be easily solved by Shor's quantum algorithm \cite{Shor99}. Thus, when quantum computers are ready to be deployed, the security of existing PVSSs will be compromised. Given the importance of both properties for security, it would be desirable to construct a PVSS that achieves post-quantum security and does not need to rely on the ROM.

Recently, using correlation intractable hash functions \cite{CCHLRRW19}, the authors of \cite{CCHLRRW19,PS19} finally solved the problem of realizing non-interactive zero-knowledge (NIZK) arguments following the Fiat-Shamir paradigm in the common reference string (CRS) model for all NP problems from standard \textit{lattice assumptions}. On the bright side, using the result of \cite[Subsection 3.3]{GMW86} together with the FLS compiler of~\cite{FLS99} to achieve multi-theorem NIZKs for all NP languages, we might be able to build a post-quantum PVSS against malicious participants, from standard post-quantum assumptions only. However, their constructions require reducing the underlying problem to the Graph Hamiltonicity problem, which would be extremely inefficient. It is highly desirable to obtain a reasonably efficient instantiation of the NIZKs without resorting to generic techniques, which can be used as building blocks for designing PVSS schemes. Therefore, this work aims to propose such NIZKs to fully realize a post-quantum secure PVSS using standard assumptions.

\subsection{Our Contribution}
We put forward a generic construction of PVSS from i) any IND-CPA scheme where each public key has a unique corresponding secret key; and ii) a NIZK for suitable \textit{gap languages}, which provides a generalized formal framework for PVSS schemes following the GMW approach of \cite{GMW86} such as \cite{GMW86,GHL22,CDGK22,CD24}. Note that while previous works, e.g.,~\cite{GMW86} or \cite{CDGK22}, have already proposed generic PVSS constructions, the work of \cite{GMW86} only described an informal construction, while the work of \cite{CDGK22} requires the encryption scheme to be linearly homomorphic. In addition, their constructions require NIZK with exact languages. On the other hand, we would like to capture the lattice setting, and thus decide to employ NIZK for gap languages, which can be considered as a generalization of exact languages. Hence, our construction is more generic than \cite{CDGK22}. The only restriction in the PKE is that it needs the public key to have a unique corresponding secret key pair. However, this is still generic enough to capture previous schemes where their public-secret key pair are either from group, Paillier, or LWE-based cryptosystems, which also have this property. We are the first to give a \textit{formal} and generic PVSS based only on an IND-CPA encryption scheme with unique corresponding public-secret key pairs and an NIZK for gap languages.

We then propose nontrivial instantiations of the NIZKs supporting the key generation, sharing, and decryption algorithms from the \textit{plain} LWE assumption. One major point is that the NIZKs in the CRS model are not the result of using Karp reduction to the Graph Hamiltonicity problem (hence nontrivial). Instead, we construct \textit{concrete} trapdoor $\Sigma$-protocols for correct key generation, sharing, and share decryption and use the compiler of \cite{LNPT20} to achieve NIZK with adaptive soundness and adaptive multi-theorem zero-knowledge. As analyzed in Appendix~\ref{appendix-comparison-with-generic-sols}, our NIZK will be \textit{more efficient} than those using the generic Karp reduction.

Finally, we combine the NIZK mentioned above to achieve the \textit{first} nontrivial instantiation of a lattice-based \textit{non-interactive} PVSS in the CRS model without relying on random oracles. PVSS constructions have already been made in the standard model, such as \cite{CD17}. However, previous constructions all require pairing-based assumptions, which are not post-quantum secure. Our construction is from plain LWE only and, therefore, is post-quantum secure. For a technical note, our construction is not as efficient as previous work, as our trapdoor $\Sigma$-protocols require binary challenges and must be repeated $\lambda$ times (where $\lambda$ is the security parameter) to make the soundness error exponentially small, while constructions such as \cite{Stadler96,Sch99,CD17,GHL22,CDGK22} employ challenges over $\mathbb{Z}_p$ (for an exponentially large $p$) and do not need to be repeated. However, this is a trade-off as our strong point is that we could achieve post-quantum security in the standard model, while the previous construction could not. We also achieve the asymptotically smallest required modulus among existing PVSSs so far by only requiring a modulus polynomially large in $\lambda$ and $n$, where $n$ is the number of users. To summarize, among known PVSS schemes, ours is the \textit{most} efficient one that achieves post-quantum security in the standard model. We remark that our PVSS requires a third party to generate the CRS. However, this setting has been considered in previous works such as \cite{GHL22} as well.

The primary contribution of this work is foundational: we establish the theoretical feasibility of the first post-quantum PVSS scheme that is provably secure in the standard model under lattice assumptions. Our focus is on the novel cryptographic constructions and the rigorous security arguments required to solve this long-standing open problem. While a full implementation, concrete parameter selection, and performance benchmarking are critical for practical deployment, such an endeavor constitutes a substantial engineering and research challenge in its own right. It is therefore considered beyond the scope of this foundational paper, which prioritizes establishing the theoretical underpinnings and asymptotic efficiency of our approach. We believe that this work sets the necessary foundations for future research dedicated to practical optimization and implementation.

\subsection{Technical Overview}\label{section-technical-overview}

We follow the (informal) framework of \cite[Subsection 3.3]{GMW86} to have a non-interactive PVSS. Based on the framework, we require four components: An IND-CPA public key encryption scheme (which we denote $\PKE$), a NIZK for participants to prove correct key generation, a NIZK for the dealer to prove the correctness of sharing, and a NIZK for participants to prove the correctness of decrypting the shares\footnote{The framework of \cite{GMW86} only mentions the sharing phase, where the dealer needs to prove the correctness of sharing. However, PVSS also requires a reconstruction phase, and this phase requires an NIZK of correct decryption, which is not mentioned in their construction. In the sharing phase, the authors only consider interactive ZKP protocols, which require $\lambda$ rounds. It would be natural to have the minimal round complexity; hence existing PVSSs decide to at least use NIZK instead so that all phases are noninteractive}. The flow of the PVSS is very simple: Participants generate the key pairs of the PKE and prove the validity of the public keys. When sharing the secrets, the dealer encrypts the shares and proves the validity of the encryption. Finally, the participants decrypt the shares and prove that the decrypted result is correct. For PKE, we will choose the lattice-based scheme of \cite{ACPS09}. Our remaining work is to construct the NIZKs without relying on the Random Oracle Model (ROM) while avoiding the inefficient Karp reduction to the Graph Hamiltonicity problem, and this is also the main technical challenge in this paper. 

To achieve this, we leverage the concept of trapdoor $\Sigma$-protocols (defined in~\cite{CCHLRRW19}). These allow us to design NIZKs for specific languages directly, without Karp reductions. We can then use the compiler of \cite{LNPT20} to transform these protocols into NIZKs with adaptive soundness and adaptive multi-theorem zero-knowledge properties. This approach reduces our task to the construction of appropriate $\Sigma$- protocols for three key languages. In the remainder of this section, we will show how to define the appropriate generic gap languages for the PVSS and provide the trapdoor $\Sigma$-protocol for these languages, instantiated with lattice-based schemes.\\

\noindent \textbf{Defining the Gap Languages for the Generic PVSS Construction.} Before going to the instantiation, we would like to construct a generic PVSS from NIZKs for \textit{gap languages} to capture possible instantiation from lattices. Hence, we first need to design suitable gap languages for key generation, sharing, and decryption, which we denote by $\LL^\Key,\LL^\Enc,\LL^\Dec$. Unlike the usual case of exact languages, defining gap languages involving public-key encryption systems to capture lattice-based instantiations (and other instantiations such as group-based) is quite non-trivial, as elaborated below.

For key generation, to capture lattice-based instantiation, we need to define an additional $\PKE.\Key\Verify$ algorithm to check whether a public-secret key pair is valid. The key verification algorithm should satisfy the following: If $(\pk,\sk) \leftarrow \PKE.\Key\Gen(\pp)$, then $\PKE.\Key\Verify(\pp,\pk,\sk)=1$. In addition, for any $(\pk,\sk)$ s.t. $\PKE.\Key\Verify(\pp,\pk,\sk)=1$\footnote{Here, the pair $(\pk, \sk)$ was not necessarily generated by $\PKE.\Key\Gen$, so this is a somewhat relaxed property.}, if $E$ is an encryption of $m$ using $\pk$, then $\PKE.\Dec(\pp,\pk,\sk,E)$ should return $m$\footnote{In particular, it holds for all $(\pk,\sk)$ honestly generated by $\PKE.\Key\Gen$. Thus we still capture the basic requirements of a PKE.}. An example for lattice-based schemes is that: For a public key $(\bA,\bb)$ and secret key $\bs$, then $\PKE.\Key\Verify$ checks whether~$||\bb-\bs^\top\cdot \bA||<B^{\Key\star}_\be$ for some bound $B^{\Key\star}_\be$, where $B^{\Key\star}_\be$ will be \textit{bigger} than the expected bound of $||\be||$. In the best case, an honest participant would prove that the keys are honestly generated, i.e., $(\pk,\sk)=\PKE.\Key\Gen(\pp,r)$ for some randomness $r$. In the worst case, we would require that: Even for dishonest participants who have passed verification, then $\PKE.\Key\Verify(\pp,\pk,\sk)=1$ so that encryption correctness is still ensured with such keys. (note that this does not necessarily mean that $(\pk,\sk)$ is generated from $\PKE.\Key\Gen$. Hence, there is a gap between the two sets of $\pk$. This situation is often seen in proving relaxed lattice-based relations, such as \cite{LNPT20} or \cite[Figure 2]{GHL22}). Thus we define the gap language $\LL^\Key$ s.t. $\LL^\Key_{zk}$ the set of is all tuples $(\pp,\pk)$ s.t. $(\pk,\sk)=\PKE.\Key\Gen(\pp,r)$ for some randomness $r$ and $\LL^\Key_{sound}$ is the set of all tuples $(\pp,\pk)$ s.t. $\PKE.\Key\Verify(\pp,\pk,\sk)=1$ for some secret key $\sk$. 

Next, we define the gap language $\LL^\Enc$ for correct sharing. At the very least, we would like to capture that: For any dealer that has passed verification, then all honest participants agree on some secret $s$. Now, define $\LL^\SSS_{n',t}$ to be the set of all valid shares $s_1,s_2,\dots,s_{n'}$ come from a Shamir secret sharing with threshold $t$ (see Subsection \ref{section-special-ss} for details). In the best case, we simply define $\LL^\Enc_{zk}$ to be the set of all $(\pp,n',t,(\pk_i,E_i)_{i=1}^{n'})$  s.t. i) $E_i=\PKE.\Enc(\pp,\pk_i,s_i,r_i)$, for some randomness $r_i$, ii) $(s_1~||~\dots~||~s_{n'}) \in \LL^\SSS_{n',t}$,  and iii) $(\pp,\pk_i) \in \LL^\Key_{sound}$ for all $i$. Although we define the last condition, the dealer does not need to prove that: The condition $(\pp,\pk_i) \in \LL^\Key_{sound}$ has been proved by participant $P_i$ previously. Thus, in the NIZK, we design it so that the dealer only needs to prove the former two conditions using witnesses $(s_i,r_i)_{i=1}^{n'}$. 

Now, defining $\LL^\Enc_{sound}$ will be the most challenging step. We need the language $\LL^\Enc_{sound}$ so that we can use the trapdoor to extract the randomness in the PKE to check if $E_i$ is a valid encryption of some message $m_i$. In our lattice-based instantiation (we will see later), while we have an easy trapdoor for key generation to directly extract $\sk$, we do not have the trapdoor to extract the randomness in the encryption process (even the relaxed ones). Hence, we need to define $\LL^\Enc_{sound}$ in another way, but still need to capture the properties we need in a PVSS. Fortunately, we can follow the idea of \cite{LNPT20}: In their definition, they define the language $\LL_{zk}$ to be the set of ciphertext that is honestly encrypted, while $\LL_{sound}$ to be the set of ciphertext s.t. decrypting them would provide valid message $m$ and witness~$f$ of small norm. Technically, while they cannot extract the randomness of the encryption process, they can still extract the witness $f$ in the decryption process. We might follow their idea to define our language  $\LL^\Enc_{sound}$: First, for participant~$P_i$ who passed the key verification process, we assume that~$(\pp,\pk_i) \in \LL^{\Key}_{sound}$ already and consider the secret keys $(\sk_i)_{i=1}^{n'}$ s.t. $((\pp,\pk_i),\sk_i) \in \RR^\Key_{sound}$ and assume they are unique. This is reasonable, as previously we forced participants to prove the existence of~$\sk_i$. The reason for uniqueness will be for the definitions in Subsection \ref{subsection-pvss-definition} to make sense and for the security proof. Now, we define $\LL^\Enc_{sound}$ to be the set of all $(\pp,n',t,(\pk_i,E_i)_{i=1}^{n'})$ such that: i) $(\pp,\pk_i) \in \LL^\Key_{sound}$ for all $i$  and ii) if we \textit{honestly} compute $s_i=\PKE.\Dec(\pp,\pk_i,\sk_i,E_i)$, then it holds that~$(s_1~||~\dots~||~s_{n'}) \in \LL^\SSS_{n',t}$, where $\sk_i$ is the corresponding witness of $(\pp,\pk_i)$ which can be extracted with a trapdoor as long as $(\pp,\pk_i) \in \LL^\Key_{sound}$. We now show that this captures our requirement in the worst case. Indeed, for a set of honest participants with public-secret keys $(\pk_i,\sk_i)$ and suppose $(\pp,n',t,(\pk_i,E_i)_{i=1}^{n'}) \in \LL^\Enc_{sound}$. Then recall that $s_i=\PKE.\Dec(\pp,\pk_i,\sk_i,E_i)$, then it holds that $(s_1~||~\dots~||~s_{n'}) \in \LL^\SSS_{n',t}$, or equivalently, $s_1,s_2,\dots,s_{n'}$ are valid shares of some secret $s$. Honest participants $P_i$ receive $s_i$ by executing $\PKE.\Dec$ and agree on $s$ due to the correctness of the secret sharing scheme. Thus, our definition of $\LL^\Enc_{sound}$ fully captures the fact that all honest participants agree on the same $s$ in the worst case.

Finally, we are left with the gap language $\LL^\Dec$ for correct decryption. For the motivation of our definition, we consider the LWE-based decryption protocol. To show correctness of decryption of such scheme given ciphertexts $(\bc_1,\bc_2,m)$, we must show that there exists $\bs,f$ s.t. $\bc_2-\bs^\top\cdot \bc_1=p\cdot m+f \pmod{q}$ for some small vector $f$. Hence, we have an additional witness $f$ when performing decryption and need to capture this situation. To do so, we define a set $\mathcal{W}^\Dec$ to capture the set of additional witnesses and let $\PKE.\Dec$ additionally output it. In other words, if~$E\leftarrow \PKE.\Enc(\pp,\pk,m)$, then it holds that $\PKE.\Dec(\pp,\pk,\sk,E)=(m,w)$ for some $w \in \mathcal{W}^\Dec$. The participant then uses $w$ as the witness to prove the validity of $m$. With this idea, we consider two sets $\mathcal{W}^\Dec_{zk}$ and $\mathcal{W}^\Dec_{sound}$ s.t. $\mathcal{W}^\Dec_{zk} \subseteq \mathcal{W}^\Dec_{sound}$. For $\LL^{\Dec}_{zk}$, we will let it to be the set of all $(\pp,\pk,E,m)$ such that there are~$(\sk,r,w)$ satisfying $(\pk,\sk)=\PKE.\Key\Gen(\pp,r)$, $(m,w)=\PKE.\Dec(\pp,\pk,\sk,E)$ and $w \in \mathcal{W}^\Dec_{zk}$. The language $\LL^{\Dec}_{sound}$ is also similar, except that we require~$\PKE.\Key\Verify(\pp,\pk,\sk)=1$ and $w \in \mathcal{W}^\Dec_{sound}$ instead. This definition of $\LL^\Dec$ does indeed generalize most of the existing well-known encryption schemes: For  RSA and group-based PKE such as ElGamal, then $\mathcal{W}^\Dec_{zk}=\mathcal{W}^\Dec_{zk}=\perp$, as there is no additional witness, while for LWE-based encryption scheme, $\mathcal{W}^\Dec_{zk}$ and $\mathcal{W}^\Dec_{sound}$ would be the set of all scalar $f$ s.t $|f|<B^{\Dec}_f$ and $B^{\Dec\star}_f$ respectively for some public bound $B^{\Dec}_f<B^{\Dec\star}_f$.  Thus, we have informally defined all the gap languages required for the PVSS. The formal definition of the gap languages, the PVSS construction, and its security proof will be given in Section \ref{section-generic-pvss}.\\

\noindent \textbf{Instantiating the NIZKs.} Now that we have described the gap languages, we will provide the instantiation for the languages. We choose the encryption scheme of \cite{ACPS09} over the one in \cite{GHL22} because to encrypt a scalar over $\mathbb{Z}_q$, the scheme of \cite{GHL22} must encode it into a vector in $\mathbb{Z}_q^\ell$ first, then encrypt the whole vector. Thus, the decryption complexity will be increased by a factor of $\ell$. Instead, we use the scheme of \cite{ACPS09}. It has the modulus $q=p^2$ and can directly encrypt a scalar over $\mathbb{Z}_p$ without any encoding process. Hence, the scheme of \cite{ACPS09} would be more efficient (see Subsection~\ref{subsection-pke} for details). In addition, while there are also many schemes outside \cite{GHL22,ACPS09}, only the particular modulus $q=p$ of \cite{GHL22} or $q=p^2$ of \cite{ACPS09} allows us to successfully design trapdoor $\Sigma$-protocols required for secret space $\mathbb{Z}_p$. Thus, we decided to choose the scheme of \cite{ACPS09} as our encryption scheme due to better efficiency. We also choose to directly encrypted the share $s_i$ by $E_i \leftarrow \PKE.\Enc(\pp,\pk_i,s_i)$ instead of using an amortized scheme like \cite{GHL22}. The reason will be discussed in Section \ref{subsection-final-remark}.

Now, we need to instantiate a trapdoor $\Sigma$-protocol for the generation of LWE keys for the scheme of \cite{ACPS09}. It has been described in \cite[Appendix G]{LNPT20}, but we will describe it as a warm-up and use its technique for the remaining two. More specifically, given a matrix $\bA \in \mathbb{Z}_q^{v\times u}$ and a vector $\bb \in \mathbb{Z}_q^u$, for $\LL^\Key_{zk}$, we would like to prove that there are $\bs \in \mathbb{Z}_q^v,\be \in \mathbb{Z}^u$ such that $\bb=\bs^\top\cdot \bA+\be^\top \pmod{q}$ and $||\bs||<B^\Key_{\bs},||\be||<B^\Key_{\be}$ for some public bounds  $B^\Key_{\bs},B^\Key_{\be}$. The trapdoor $\Sigma$-protocol for this is as follows. 
\begin{itemize}
\item Given $(\bA,\bb=\bs^\top\cdot \bA+\bff \pmod{q})$, the prover samples short vectors $\br,\bf$ and provide $\br^\top\cdot \bA+\be^\top \pmod{q}$ to verifier.
\item After receiving the challenge $c$ from verifier, prover provides the values $\bz=\br+c\cdot \bs$ and $\bt=\bff+c\cdot \be$ to verifier. 
\item Verifier accepts iff $\bz^\top\cdot \bA+\bt^\top=\bd+c\cdot \bb \pmod{q}$ and $||(\bz~||~\bt)||$ is short.
\end{itemize}
To generate the trapdoor, one uses the algorithm of \cite{MP12} to generate the trapdoor~$\bT$ and the matrix $\bA$. For the $\BadChallenge$ function, it uses the trapdoor $\bT$ to extract the witnesses $(\bs,\be)$ and returns the value $c$ s.t. $\bd+c\cdot \bb$ is not in the form of $\bz^\top\cdot \bA+\bt$ for some short $\bz,\bt$. In Appendix \ref{appendix-proof-of-decryption}, we will later prove that, if verifier accepts, then $(\bA,\bb) \in \LL^{\Key}_{sound}$ in the sense that: There exists $\bs,\be$ such that $\bb=\bs^\top\cdot \bA+\be^\top \pmod{q}$ and $||\bs||<B^{\Key\star}_{\bs},||\be||<B^{\Key\star}_{\be}$ for some public bound  $B^{\Key\star}_{\bs}>B^{\Key}_{\bs},B^{\Key\star}_\be>B^\Key_{\be}$.  The detailed construction of the protocol above will be presented in Subsection \ref{subsection-nizk-for-key-generation}.

 Our next step is to provide a trapdoor $\Sigma$-protocol for correct sharing. For a vector $\bmm=(m_1~||~m_2~||~\dots~||~m_{n'}) \in \mathbb{Z}_p^{n'}$ \textit{valid public keys} $(\bb_i)_{i=1}^{n'}$, we compute the encryptions $\bc_{1i}=\bA\cdot \br_i \pmod{q}$ and $\bc_{2i}=\bb_i\cdot \br_i+e_i+p\cdot m_i \pmod {q}$ for some short vectors $\br_i$ and small scalars $e_i$. In addition, we need that $m_1,\dots,m_{n'}$ are valid shares of some secret ($\bmm \in \LL^\SSS_{n',t}$). Now, the language $\LL^\Enc_{zk}$ is straightforward: It is the set of all $(\bA,n',t,(\bb_i,\bc_{1i},\bc_{2i})_{i=1}^{n'})$ s.t. $(\bA,\bb_i) \in \LL^{\Key}_{sound}$ and there exists~$(m_i,\br_i,e_i)_{i=1}^{n'}$ satisfy the above conditions. The language $\LL^\Enc_{sound}$ is straightforward as well: It is the set of all $(\bA,n',t,(\bb_i,\bc_{1i},\bc_{2i})_{i=1}^{n'})$ s.t $(\bA,\bb_i) \in \LL^{\Key}_{sound}$, and there exists $(m_i,f_i)_{i=1}^{n'}$ $\bc_{2i}-\bs_i^\top\cdot \bc_{1i}=p\cdot m_i+f_i \pmod{q}$, $|f_i|<B^{\Enc\star}_f$ and $\bmm \in \LL^\SSS_{n',t}$. Here $\bs_i$ is the corresponding secret key of $\bb_i$ given that $(\bA,\bb_i) \in \LL^\Key_{sound}$. By setting $B^{\Enc\star}_f$ big enough and let $\mathcal{W}^\Dec_{zk}$ to be the set of all $f$ s.t. $|f|<B^{\Enc\star}_f$, the description of $\LL^\Enc_{sound}$ fits the generic description we proposed previously when instantiated with the PKE of \cite{ACPS09}. We are now left with one final problem: Given a vector $\bm$, how do we check whether $\bmm \in \LL^\SSS_{n',t}$? Fortunately, for Shamir secret sharing scheme, it is known that there is a parity check matrix $\bH^{t}_{n'}$ such that $\bmm \in \LL^\SSS_{n',t}$ iff $\bmm^\top\cdot \bH^{t}_{n'}=\bzero \pmod{p}$  (see Subsection \ref{section-special-ss}). 
 
 With the language $\LL^\Enc$ above, we will design the trapdoor $\Sigma$-protocol based  on the technique of \cite{LNPT20} with minor modification to additionally check whether $\bmm^\top\cdot \bH^{t}_{n'}=\bzero \pmod{p}$. The scheme is straightforward as follows.
 \begin{itemize}
 \item Prover samples a short vector $(\bv~||~\bk)$ and a vector $~\bu=(u_1~||~u_2~||~\dots~||~u_n) \in \mathbb{Z}_p^n$ s.t. $\bu^\top\cdot \bH^{t}_n=\bzero \pmod{p}$. It parses $\bv=( \bv_1~|| \bv_2 ~||~ \dots ~||~ \bv_n)$, $\bk=( k_1~|| k_2 ~||~ \dots ~||~ k_n)$. Finally it  provides $\ba_{1i}=\bA\cdot \bv_i \pmod{q},~ \ba_{2i}=\bb_i\cdot \bv_i+k_i+p\cdot u_i \pmod{q}$ to verifier for all $1 \leq i \leq n$.
\item After receiving the challenge $c$ from verifier, prover provides the value $\bz_i=\bv_i+c \cdot \br_i,$ $h_i=k_i+c\cdot e_i,~$ $t_i=u_i+c\cdot m_i \pmod{p}$ to $\VV$ for all $1 \leq i \leq n$.
\item $\VV$ checks whether $t_i \in \mathbb{Z}_p$, $\bA\cdot \bz_i=\ba_{1i}+c\cdot \bc_{1i} \pmod{q},~$ $\bb_i\cdot \bz_i+h_i+p\cdot =\ba_{2i}+c\cdot \bc_{2i} \pmod{q}$. It also checks whether $||(\bz_i~||~h_i)||$ is small for all $1 \leq i \leq n$. Finally, check if $\bt^\top\cdot \bH^{t}_n=\bzero \pmod{p}$. Accept iff all checks pass.
 \end{itemize}
 The trapdoor $\bT$ is the same as the trapdoor $\Sigma$-protocol for key generation. For bad challenge function, we simply extract back $(\bs_i,\be_i)$ from the trapdoor $\bT$ to decrypt $(c\cdot \bc_{1i}+\ba_{1i},c\cdot \bc_{2i}+\ba_{2i})$ to receive $(m_i,f_i)$ for each $i \in \{0,1\}$. If $|f_i|>B^{\Enc\star}_f$ for some $i$ or $\bmm^\top \cdot \bH^{t}_n \neq  \bzero \pmod{p}$, we simply return $1-c$. Later, we will prove that, if $(\bA,\bb_i) \in \LL^\Key_{sound}$ with corresponding witness $\bs_i$ (extracted by $\bT$) for all $i \leq n'$, and if $(\bA,n',t,(\bb_i,\bc_{1i},\bc_{2i})_{i=1}^{n'})$ is accepted by verifier, then the honestly decrypted message $m_i$ from $(\bc_{1i},\bc_{2i})$ using $\bs_i$ must satisfy $\bmm \in \LL^\SSS_{n,t}$, which is what we want to capture in the worst case.  The detailed construction for the trapdoor $\Sigma$-protocol will be given in Subsection \ref{subsection-nizk-for-sharing} and its security proof will be given in Appendix \ref{appendix-proof-of-sharing}.

Finally, the trapdoor $\Sigma$-protocol for correct decryption is similar to key generation. For an instance $(\bA,\bb,\bc_1,\bc_2,m)$ one would have to prove the existence of $\bs,\be,f$ s.t. $\bb=\bs^\top\cdot \bA+\be^\top \pmod{q}$, $\bc_2-p\cdot m=\bs^\top\cdot \bc_1+f \pmod{q}$, and both $\be,f$ has small norm. The role of $\bc_2-p\cdot m$ is the same as $\bb$, the role of $\bc_1$ is the same as $\bA$, and the role of $\be$ is the same as $f$. Hence, one could easily design a trapdoor $\Sigma$-protocol for decryption based on the one for key generation. The language $\LL^\Dec_{zk}$ will be the set of all tuples $(\bA,\bb,\bc_1,\bc_2,m)$ s.t there exists~$\bs,\be,f$ above with $||\bs|| \leq B^\Dec_{\bs},||\be|| \leq B^\Dec_{\be}, |f| \leq B^\Dec_{f}$. The language~$\LL^\Dec_{sound}$ is similar, except that we require $||\bs|| \leq B^{\Dec\star}_{\bs},||\be|| \leq B^{\Dec\star}_{\be}, |f| \leq B^{\Dec\star}_{f}$ for some~$B^{\Dec\star}_{\bs}>B^{\Dec}_{\bs},B^{\Dec\star}_{\be}>B^{\Dec}_{\be}, B^{\Dec\star}_{f}>B^{\Dec}_{f}$. We see that this language definition easily captures the generic language $\LL^\Dec$ earlier when $f$ is the additional witness $w$, and $\mathcal{W}^\Dec_{zk}$ is the set of $f$ s.t. $|f|<B^{\Dec}_{f}=B^{\Enc\star}_{f}$, while $\mathcal{W}^\Dec_{sound}$ is the set of $f$ s.t.~$|f|<B^{\Dec\star}_{f}$. Due to the similarity of the trapdoor $\Sigma$-protocol for key generation, we will not describe the construction here. Instead, we refer the detailed construction to Subsection \ref{subsection-nizk-for-decryption}, and its security will be proven in Appendix \ref{appendix-proof-of-decryption}. 

The formal construction of the NIZKs will be described in Section \ref{section-nizk}. Having described the supporting NIZKs, we next plug the NIZK instantiations and the PKE into the generic PVSS to achieve a lattice-based PVSS in the standard model. This will be done in Section \ref{section-lattice-based-pvss}, with parameter setting in Subsection \ref{appendix-parameter-setting} and finally its complexities in Subsection \ref{section-complexity-analysis}. The security of the PVSS is implied by its generic version. Finally, in Appendix \ref{appendix-comparison-with-generic-sols}, we show that our construction is indeed more efficient than the generic technique of Karp reduction in \cite{GMW86}, hence we achieve the most efficient solution to the problem of post-quantum secure PVSS in the standard model so far.

\subsection{Related Works}
\noindent \textbf{Publicly Verifiable Secret Sharing.} Publicly verifiable secret sharing was initially introduced by Stadler to enable any party to publicly verify the correctness of both the sharing and reconstruction processes \cite{Stadler96}. This verification capability encompasses two key aspects: (i) during the sharing process, a public verifier, without knowledge of any shares or the secret, can validate the transcript produced by the dealer, and (ii) during the reconstruction process, a public verifier can verify the correctness of each participant's revealed share.

Most PVSS schemes, including those proposed in \cite{Stadler96,FO98,B99,Sch99,YY00,FS01,JVS14,CD20,GHL22,CDGK22,CDSV23,CD24} follow or derive from the GMW framework \cite{GMW86}, which involves encrypting the shares and employing Non-Interactive Zero-Knowledge (NIZK) arguments to prove: (i) the correctness of the encryption and (ii) that these encrypted shares constitute valid shares of some secret. During secret reconstruction, participants decrypt their shares and use additional NIZKs to prove decryption correctness. These NIZKs rely on the Fiat-Shamir heuristic and therefore achieve security in the random oracle model. 

Several constructions seek to minimize required assumptions by utilizing bilinear pairings or specific properties of the Paillier encryption scheme, thus avoiding dependence on random oracles \cite{AJ05,HV08,J11,CD17,DKI022}. Consequently, these PVSS schemes achieve security in the standard model. However, regardless of whether they avoid random oracles, all aforementioned constructions lack post-quantum security since their security relies minimally on discrete logarithm or factoring problems. 

Therefore, no existing PVSS construction simultaneously achieves both standard model security and post-quantum security. Our construction addresses this gap, and although it may not achieve the computational efficiency of previous constructions, it represents the first PVSS scheme proven secure in the standard model under post-quantum assumptions. We provide Table \ref{table:comparison-table-of-existing-work} to compare our construction with previous works. Note that the table does not include the costs of computing shares or Lagrange coefficients for secret reconstruction. Additionally, the computational costs presented represent the total time for secret sharing and subsequent reconstruction, including verification of the dealer's proof and $O(n)$ share decryptions. Although our construction may be less computationally efficient than previous schemes, its fundamental contribution lies in being the first PVSS protocol proven secure in the standard model under post-quantum assumptions.\\

\noindent \textbf{NIZK in the CRS Model and Trapdoor $\Sigma$-Protocols.} The majority of ZKP protocols employ the Fiat-Shamir heuristic to achieve NIZK in the random oracle model. However, \cite{GK03,KRS25} have shown that there exist protocols secure in ROM but insecure when the RO is instantiated with any standard hash function. Consequently, extensive research has been done to realize the Fiat-Shamir paradigm in the standard model. One such instantiation would be \textit{correlation intractable hash function} (CIHF), these hash functions are designed so that it would be infeasible to find inputs $x$ s.t. $(x,h(x)) $ belongs to specific relations.  A line of works \cite{KRR17,CCRR18,CCHJLRR18,HL18} focused on constructing CIHFs to soundly instantiate the Fiat-Shamir heuristic in the standard model. Canetti et al. \cite{CCHLRRW19} proved that it is possible to construct CIHFs for all efficiently searchable relations and proved that such a construction is sufficient to realize the Fiat-Shamir heuristic. Fortunately, the authors also introduced the notion of trapdoor $\Sigma$-protocols and provided a direct compiler of \textit{single-theorem} NIZK from any trapdoor $\Sigma$-protocol. Unfortunately, their NIZK construction requires the circular security of FHE, which is a somewhat less standard assumption. Peikert and Shiehian \cite{PS19} later constructed a CIHF from plain LWE only, effectively giving a NIZK compiler from standard assumptions in the CRS model. However, even requiring only plain LWE, the compiler only provides a single-theorem NIZK. While it is possible to additionally use the compiler of \cite{FLS99} to achieve a multi-theorem NIZK, the combined compiler of \cite{CCHLRRW19} and FLS \cite{FLS99} only works if the trapdoor $\Sigma$-protocols already are protocols for proving an NP-complete language~$\LL$. In this case, we have to take the \textit{inefficient} Karp reduction. \textit{If $\LL$ is not an NP-complete language, then in the FLS compiler, one might have to come up with a concrete trapdoor $\Sigma$-protocol} for \textit{OR proofs} that proves the validity of a statement in $\LL$ OR correct computation of PRG (with the witness as input). Unfortunately, we are not aware of any such solution so far, let alone their efficiency.  Recently, several constructions \cite{LNPT20,CX23} provided a \textit{direct multi-theorem NIZK compiler from any trapdoor $\Sigma$-protocols}. Among these, the compiler of \cite{CX23} is not post-quantum secure, while the compiler of \cite{LNPT20} is post-quantum secure but does not provide the adaptive soundness property for all languages. Fortunately, the compiler gives an adaptively sound NIZK for all \textit{trapdoor languages} (languages having a trapdoor for efficient membership check), which is sufficient in our application.

\begin{table}[h]
\vspace*{0.15cm}
\centering
	\caption{\footnotesize{Comparison with selected works. ``PQS'' refers to ``Post-Quantum Secure'' and  ``Std. Model'' refers to ``Standard Model'' (meaning that the construction does not rely on idealized models such as ROM). We denote $\mathbb{G}$ to be a cyclic group with order $q$ and assume that each element in $\mathbb{G}$ has $\log q$ bits. The notation $\text{op}_{\mathbb{G}}$ refers to the number of exponentiations in $\mathbb{G}$ and $\text{op}_{\mathbb{Z}_q}$ refers to the number of arithmetic operations in $\mathbb{Z}_q$. The cost $O(n \log^2 n)$ refers to computing Lagrange coefficients due to \cite{TCZAPGD20}. The cost of computing the shares is also $O(n \log^2 n)$ as well using the technique in \cite[Theorem 4.3]{KT73}. In the scheme of \cite{GHL22}, the authors omitted the cost of Lagrange interpolation; hence, the actual number of multiplications is $O(v^2+vn+n \log^2 n)$. The scheme uses Bulletproofs \cite{BBBPWM18} so the modulus $q$ is $2^{\Omega(\lambda) }$ and $v=\Omega(\lambda)^{1+\epsilon}$ for $\epsilon>0$. The use of Bulletproofs for NIZK instantiation also makes the scheme not post-quantum secure. For ours, we use the $\Omega$ notation because we are estimating the cost of the NIZKs via the cost of trapdoor $\Sigma$-protocols, meaning that the actual NIZK cost will be greater, as the compiler of \cite{LNPT20} has some complicated components that makes it hard to give the exact cost. Our PVSS has a factor $\lambda$ in both communication and computation cost due to using trapdoor $\Sigma$-protocols with binary challenges, thus we employ parallel repetition $\lambda$ times to achieve negligible soundness error. From the table, we see that our PVSS is the only construction that is post-quantum secure in the standard model, which is the advantage of our protocol over previous works}. }
	\label{table:comparison-table-of-existing-work}
    \vspace{0.2cm}
	\scalebox{0.74}{\begin{tabular}{lllclcc}
	\hline
	\hline
	\textbf{Work} & \textbf{Communication} & \textbf{Computation} & \textbf{Secret} & \textbf{Modulus} & \textbf{PQS} & \textbf{Std. Model.}  \\
	\hline
	%\hline
        \cite{Stadler96} & $O(n \cdot \log q)$ & $O(n^2\cdot \text{op}_{\mathbb{G}}+n \log^2 n\cdot  \text{op}_{\mathbb{Z}_q})$  & $\mathbb{G}$ & $q=O(2^{\lambda})$ & \xmark & \xmark\\
    \cite{Sch99} & $O(n \cdot \log q)$ & $O(n^2\cdot \text{op}_{\mathbb{G}}+n \log^2 n\cdot  \text{op}_{\mathbb{Z}_q})$  & $\mathbb{G}$ & $q=O(2^{\lambda})$ & \xmark & \xmark\\
   \cite{AJ05} & $O(n \cdot \log N)$ & $O(n^2\cdot \text{op}_{\mathbb{Z}_{N^2}}+n \log^2 n\cdot  \text{op}_{\mathbb{Z}_N})$  & $\mathbb{Z}_N$ & $N=O(2^{\lambda^3})$ & \xmark & \checkmark \\
    \cite{HV08} & $O(n \cdot \log q)$ & $O(n^2\cdot \text{op}_{\mathbb{G}}+n \log^2 n\cdot  \text{op}_{\mathbb{Z}_q})$ & $\mathbb{G}$ & $q=O(2^\lambda)$ & \xmark & \checkmark\\
   \cite{JVS14} & $O(n \cdot \log N)$ & $O(n^2\cdot \text{op}_{\mathbb{Z}_{N^2}}+n \log^2 n\cdot  \text{op}_{\mathbb{Z}_N})$  & $\mathbb{Z}_N$ & $N=O(2^{\lambda^3})$ & \xmark & \xmark \\
  \cite{CD17} & $O(n \cdot \log q)$ & $O(n\cdot \text{op}_{\mathbb{G}}+n \log^2 n\cdot  \text{op}_{\mathbb{Z}_q})$ & $\mathbb{G}$ & $q=O(2^\lambda)$ & \xmark & \checkmark \\
  \cite{CD20} & $O((n+\ell)\cdot  \log q)$ & $O(n^2\cdot \text{op}_{\mathbb{G}}+\ell\cdot n \log^2 n\cdot  \text{op}_{\mathbb{Z}_q}))$  & $\mathbb{G}^\ell$ & $q=O(2^\lambda)$ & \xmark & \xmark \\
  \cite{GHL22} & $O(n\cdot (u+ v) \cdot  \log q)$ & $O(v^2+vn+ n \log^2 n)\cdot \text{op}_{\mathbb{Z}_q}$ & $\mathbb{Z}_q$ & $q=O(2^\lambda)$ & \xmark & \xmark \\

\cite{CDGK22} & $O(n  \cdot \log q)$ & $O(n\cdot \text{op}_{\mathbb{G}}+n \log^2 n\cdot  \text{op}_{\mathbb{Z}_q})$ & $\mathbb{G}$ & $q=O(2^\lambda)$ & \xmark &\xmark \\
  
   \cite{CD24} & $O(n \log q)$ & $O(n\cdot \text{op}_{\mathbb{G}}+n \log^2 n\cdot  \text{op}_{\mathbb{Z}_q})$ & $\mathbb{Z}_q$ & $q=O(2^\lambda)$ & \xmark & \xmark\\ 
  \hline 
  Ours & $\Omega(n \lambda (u+ v)\cdot  \log q)$ & $\Omega(\lambda(n^2+ nuv)) \cdot \text{op}_{\mathbb{Z}_q} )$ & $\mathbb{Z}_q$ & $q=\tilde{O}(\lambda^{12}\cdot n)$ & \checkmark & \checkmark \\
	
	\hline
	\hline
	\end{tabular}}
    \vspace{0.2cm}
\end{table}

 \subsection{Structure of the Paper}
The rest of the paper is organized as follows. Section \ref{section-preliminaries} presents the preliminaries. Section \ref{section-generic-pvss} describes a generic PVSS based on an IND-CPA secure PKE and NIZKs for suitable gap languages. Section \ref{section-nizk} describes the supporting trapdoor $\Sigma$-protocols for the required gap languages of the PVSS we described in Section \ref{section-generic-pvss}. Finally, Section \ref{section-lattice-based-pvss} describes the lattice-based PVSS instantiation by plugging the PKE of  \cite{ACPS09} and the trapdoor $\Sigma$-protocols in Section \ref{section-nizk} into the generic PVSS in  Section \ref{section-generic-pvss}. We also provide the choice of parameters and complexity analysis of the lattice-based instantiation there. In Appendix \ref{appendix-comparison-with-generic-sols}, we compare our solution with the generic technique using Karp reduction. In Appendix \ref{appendix-pvss-security}, we present the security proof of the generic PVSS in Section \ref{section-generic-pvss}. In Appendix \ref{appendix-nizk-security}, we present the security proof of the supporting trapdoor $\Sigma$-protocols in Section \ref{section-nizk}. 
\section{Preliminaries}\label{section-preliminaries}

For $q \geq 2$, denote $\mathbb{Z}_q$ to be the ring of integers mod $q$.  Our work considers a prime $p$ and $q=p^2$. Unless specified otherwise, when performing modulo $q$, the result is an element in $\mathbb{Z}_q$. However, in some cases, we require the result to be an integer in $[-(q-1)/2,(q-1)/2]$ instead. We will notify the reader when this is the case. We use $x \leftarrow \mathcal{D}$ to denote that $x$ is sampled from a (not necessarily uniform) distribution $ \mathcal{D}$. We also use $x \xleftarrow{\$}
 \mathcal{S}$ to denote that $x$ is uniformly sampled from a \textit{set} $\mathcal{S}$. For a vector $\bx=(x_1,x_2,\dots,x_v) \in \mathbb{Z}^v$, we denote $||\bx||=\sqrt{\sum_{i=1}^v x_i^2}$ to be its norm. For column vectors $\bx,\by$, we denote $(\bx~||~\by)$ (instead of $(\bx^\top~||~\by^\top)^\top$) to denote the concatenated vector of $\bx$ and $\by$. For two matrices $\bA \in \mathbb{Z}^{v \times u},\bB \in \mathbb{Z}^{v \times w}$ having the same number of rows, we denote $\bC=\begin{bmatrix} \bA&|& \bB \end{bmatrix} \in \mathbb{Z}^{v \times (u+w)}$ to denote the concatenated matrix of $\bA,\bB$. We use $\negl(\lambda)$ to denote a negligible function in~$\lambda$. We use $[n]$ to denote $\{1,2,\dots,n\}$. For a language $\mathcal{L}$, we denote its corresponding binary relation by $\mathcal{R}$. We denote $(x,w) \in \mathcal{R}$ to say that an instance $x$ with witness~$w$ is in $\mathcal{R}$. Also, for a polynomial $p(X)$, we use deg$(P)$ to denote its degree.

 We denote $D_{\sigma}$ to be the \textit{continuous} Gaussian probability distribution with standard deviation $\sigma$. We also denote
$D_{\mathbb{Z}^n,\sigma,\be}$ to be the \textit{discrete} Gaussian probability distribution, assigning probability equal to $e^{-\pi \cdot ||\bx-\be||^2/\sigma^2}/(\sum_{\by \in \mathbb{Z}^v} e^{-\pi \cdot ||\by-\be||^2/\sigma^2})$ for each $\bx \in \mathbb{Z}^n$. When $\be=\bzero$, we simply write $D_{\mathbb{Z}^n,\sigma}$. 

We say that two distribution ensembles $\mathcal{X}_{\lambda},\mathcal{Y}_{\lambda}$ (depend on $\lambda$) are indistinguishable if for all PPT adversary $\Adversary,$ it holds that $\left|\condprob{\Adversary(x)=1} { x\leftarrow \mathcal{X}_{\lambda}}-\condprob{\Adversary(y)=1}{y \leftarrow \mathcal{Y}_{\lambda}}\right| \leq \negl(\lambda)$. It can be seen that if $\mathcal{X}_{\lambda},\mathcal{Y}_{\lambda}$ are indistinguishable and $\mathcal{Z}_{\lambda},\mathcal{Y}_{\lambda}$ are indistinguishable, then $\mathcal{X}_{\lambda},\mathcal{Z}_{\lambda}$ are also indistinguishable as well.

For a protocol with $n$ participants, we consider a \textit{static} adversarial model who can corrupt up to $t$ participants like many previous PVSS schemes \cite{Stadler96,Sch99,AJ05,J11,JVS14,CD17,CD20,DKI022,CDGK22,CDSV23,CD24} where $t$ is some positive integer less than $n/2$. The adversary is probabilistic polynomial time (PPT) and has access to a quantum computer. We also assume that all participants are given access to a broadcast channel so that once a message has been sent, it will be seen by anyone and cannot be deleted or modified.

Below, we state several lemmas for bounding the norm of a vector $\bx \leftarrow D_{\mathbb{Z}^v,\sigma}$.
\begin{lemma}[\cite{MR04}, Theorem 4.4 and \cite{LST18}, Lemma 2.1, Adapted]\label{lemma-bound}
For any $\sigma=\omega(v)$, it holds that
$$\condprob{||\bx||>\sigma\cdot \sqrt{v}}{\bx \leftarrow D_{\mathbb{Z}^v,\sigma}}< 2^{-\Omega(v)}.$$
\end{lemma}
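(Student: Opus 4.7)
The plan is to invoke Banaszczyk's discrete Gaussian tail inequality, which is essentially the content of \cite{MR04} Theorem~4.4; my task reduces to checking that the hypothesis $\sigma = \omega(v)$ lets us apply that inequality at radius $\sigma\sqrt{v}$.

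First, I would verify that $\sigma$ comfortably exceeds the smoothing parameter of $\mathbb{Z}^v$. Using the standard bound $\eta_\epsilon(\mathbb{Z}^v) \leq \sqrt{\ln(2v(1+1/\epsilon))/\pi}$ with $\epsilon = 2^{-v}$, one gets $\eta_\epsilon(\mathbb{Z}^v) = O(\sqrt{v})$, which is $o(\sigma)$ since $\sigma$ grows superlinearly in $v$. Consequently the correction factor $(1+\epsilon)/(1-\epsilon)$ appearing in the smoothed tail estimate is $1 + o(1)$, and contributes only a benign constant multiplier to the final bound.

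Next, I would apply the Banaszczyk-style tail bound in the form: for all $t \geq 1$,
\[
\Pr_{\bx \leftarrow D_{\mathbb{Z}^v,\sigma}}\bigl[\lVert \bx \rVert > t\,\sigma\,\sqrt{v/(2\pi)}\bigr] \;\leq\; \frac{1+\epsilon}{1-\epsilon}\,\bigl(t \cdot e^{(1-t^2)/2}\bigr)^v.
\]
Setting $t = \sqrt{2\pi}$ makes $t\,\sigma\,\sqrt{v/(2\pi)}$ coincide with $\sigma\sqrt{v}$, so the tail probability is bounded by $(1+o(1))\,\bigl(\sqrt{2\pi}\cdot e^{(1-2\pi)/2}\bigr)^v$. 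A direct numerical check shows $\sqrt{2\pi}\cdot e^{(1-2\pi)/2} < 1/2$, so the bound collapses to $2^{-\Omega(v)}$, matching the claim.

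The main (and only) delicate point is keeping track of the normalization used in the paper, namely density proportional to $e^{-\pi \lVert \bx \rVert^2 / \sigma^2}$; this normalization is what forces the $\sqrt{2\pi}$ factor when rescaling from the natural Banaszczyk radius $\sigma\sqrt{v/(2\pi)}$ to the paper's target radius $\sigma\sqrt{v}$. Everything else is bookkeeping, and the result is a direct specialization of \cite{MR04} Theorem~4.4 (equivalently, \cite{LST18} Lemma~2.1), so no fresh argument is needed beyond the computation above.
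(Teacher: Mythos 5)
The paper gives no proof of this lemma; it simply cites \cite{MR04} (Theorem/Lemma 4.4) and \cite{LST18}, and your argument is exactly an unpacking of that citation: Banaszczyk's tail bound in the $\rho_\sigma(\bx)=e^{-\pi\|\bx\|^2/\sigma^2}$ normalization, evaluated at $t=\sqrt{2\pi}$ (equivalently $c=1$ in the usual statement), with base $\sqrt{2\pi}\,e^{(1-2\pi)/2}\approx 0.179<1/2$, and the smoothing-parameter check correctly absorbed by the (much stronger than necessary) hypothesis $\sigma=\omega(v)$. Your derivation is correct and coincides with the intended justification.
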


\begin{lemma}\label{lemma-nextbound}
Let $S=\{c_1,c_2,\dots,c_k\} \subseteq [v]$ and $\bx_S=(x_{c_1}~||~x_{c_2}~||~\dots~||~x_{c_k})$. For any~$\sigma=\omega(k)$, then it holds that
$$\condprob{||\bx_S||>\sigma\cdot \sqrt{k}}{\bx \leftarrow D_{\mathbb{Z}^v,\sigma}}<2^{-\Omega(k)}.$$
\end{lemma}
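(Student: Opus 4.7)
The plan is to reduce this to the previous lemma (\cite{MR04}, Theorem 4.4 / \cite{LST18}, Lemma 2.1) by observing that the marginal distribution of any subset of coordinates of a zero-centered discrete Gaussian on $\mathbb{Z}^v$ is itself a discrete Gaussian on $\mathbb{Z}^k$ with the same parameter $\sigma$.

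First, I would use the product structure of the unnormalized density. Since $\|\bx\|^2 = \sum_{i=1}^v x_i^2$, the unnormalized weight factorizes as
$$e^{-\pi \|\bx\|^2 / \sigma^2} = \prod_{i=1}^v e^{-\pi x_i^2 / \sigma^2},$$
and the normalizing constant $\sum_{\by \in \mathbb{Z}^v} e^{-\pi \|\by\|^2/\sigma^2}$ factors similarly as $\bigl(\sum_{y \in \mathbb{Z}} e^{-\pi y^2/\sigma^2}\bigr)^v$. Consequently $D_{\mathbb{Z}^v,\sigma}$ is the product of $v$ independent copies of the one-dimensional discrete Gaussian $D_{\mathbb{Z},\sigma}$, so when $\bx \leftarrow D_{\mathbb{Z}^v,\sigma}$ the coordinates $x_1, \dots, x_v$ are mutually independent.

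Second, since independence is preserved under taking subsets, the joint distribution of $(x_{c_1}, x_{c_2}, \dots, x_{c_k})$ is exactly a product of $k$ copies of $D_{\mathbb{Z},\sigma}$, which (by the same factorization argument applied in reverse) equals $D_{\mathbb{Z}^k,\sigma}$. Thus $\bx_S$ is distributed exactly as a sample from $D_{\mathbb{Z}^k,\sigma}$.

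Finally, I would invoke Lemma~\ref{lemma-bound} with dimension $k$ in place of $v$: for any $\sigma = \omega(k)$,
$$\condprob{\|\bx_S\| > \sigma \cdot \sqrt{k}}{\bx_S \leftarrow D_{\mathbb{Z}^k,\sigma}} < 2^{-\Omega(k)},$$
which yields exactly the claimed bound. There is no real obstacle here; the only point that requires a moment of care is the justification that the marginal of a product measure on the selected coordinates is again the corresponding lower-dimensional discrete Gaussian, and this follows directly from the fact that the density factorizes over coordinates when the center is $\bzero$.
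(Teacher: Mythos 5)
Your proof is correct and follows essentially the same route as the paper: both establish that the marginal of $D_{\mathbb{Z}^v,\sigma}$ on a coordinate subset $S$ is exactly $D_{\mathbb{Z}^k,\sigma}$ (the paper by explicitly summing $\rho_\sigma$ over completions, you by invoking the product/independence structure of the coordinates — the same factorization fact stated two ways), and then both apply Lemma~\ref{lemma-bound} in dimension $k$.
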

\begin{proof}
Let $\rho_\sigma(\bx)=e^{-||\bx||^2/\sigma^2}$ and $\rho_\sigma(\mathbb{Z}^v)=\sum_{\bx \in \mathbb{Z}^v} \rho_\sigma(\bx)$. Then if $\bx=(x_1~||~x_2~||~\dots~||~x_v)$ it holds that $\rho_\sigma(x_1)\cdot \rho_\sigma(x_2) \dots \cdot \rho_\sigma(x_v) $. Thus $\rho_\sigma(\mathbb{Z}^v)=\sum_{x_1,x_2,\dots, x_v \in \mathbb{Z}} \rho_\sigma(x_1)\cdot \rho_\sigma(x_2) \dots \rho_\sigma(x_v)=\rho_\sigma(\mathbb{Z})^v$. Note that, for any $\br$, we have 
\begin{equation*}
\begin{aligned}
\Pr[\bx_S=\br]&=\sum_{\bx \in \mathbb{Z}^v,\bx_S =\br}
\rho_\sigma(\bx) /\rho_\sigma(\mathbb{Z}^v)=\sum_{\bx \in \mathbb{Z}^v,\bx_S=\br} \rho_\sigma(\bx_S)\cdot  \rho_\sigma(\bx_{[v] \setminus S})/\rho_\sigma(\mathbb{Z}^v)\\
&=\sum_{\bx' \in \mathbb{Z}^{v-k}} \rho_\sigma(\br)\cdot  \rho_\sigma(\bx')/\rho_\sigma(\mathbb{Z}^v)=\rho_\sigma(\br)\cdot \rho(\mathbb{Z}^{v-k})/\rho(\mathbb{Z}^{v})=\rho_\sigma(\br)/\rho(\mathbb{Z}^{k}).
\end{aligned}
\end{equation*}
The last equation is the probability that a vector $\br$ is returned when $\bx_S$ is sampled in $ D_{\mathbb{Z}^k,\sigma}$. Hence, when $\bx \leftarrow D_{\mathbb{Z}^v,\sigma}$, then $\bx_S $ is distributed according to $D_{\mathbb{Z}^k,\sigma}$. Thus, by applying Lemma \ref{lemma-bound} when $\bx_S$ is sampled in $ D_{\mathbb{Z}^k,\sigma}$, we get what we need.
\end{proof}

%In addition, note that the norm of a value $x$ sampled from $\chi_\sigma$ is also small as well. Indeed, a value $x$ from $x \leftarrow D_{\sigma}$ has norm at most $\sigma\cdot \sqrt{v}$ with probability $1-1/2^{O(v)}$ due to Gaussian tail bound \footnote{For instance, see \href{https://vatsalsharan.github.io/lecture_notes/lec4_final.pdf}{here} or \href{https://www.stat.berkeley.edu/~mjwain/stat210b/Chap2_TailBounds_Jan22_2015.pdf}{here} for a proof of the tail bound}, thus the value $\lfloor q\cdot x \rceil $ has norm at most $\sqrt{v}\cdot \sigma\cdot q$ with overwhelming probability. If this value is less than $q/2$, then the result $\lfloor q\cdot x \rceil \pmod{q}$ is exactly equal to $\lfloor q\cdot x \rceil $ with probability $1-1/2^{O(v)}$. 

We also make use of the following important lemma, which will be used to prove the correctness and zero-knowledge property of our NIZKs in the later sections. 
\begin{lemma}[\cite{L12}, Theorem 4.6]\label{lemma-rejection-sampling}
Let $V$ be a subset of $\mathbb{Z}^u$ with norm less than $B$, and~$\sigma \in \mathbb{R}$ such that $\sigma=\omega(B\cdot \sqrt{\log u})$, and $h: V\rightarrow \mathbb{R}$ be a distribution. Then there exists a constant $M$ such that the distribution of the following algorithm $\Adversary:$
\begin{xenumerate}
\item Sample $\bv \leftarrow h,$ and $\bz \leftarrow D_{\sigma,\bv}^u,$
\item Output $(\bz,\bv)$ with probability $\min\left(\frac{D_{\sigma}^u(\bz)}{M\cdot D_{\sigma,\bv}^u(\bz)},1\right)$.
\end{xenumerate}
is within statistical distance  $\frac{2^{-\omega(\log u)}}{M}$ of  the distribution of the following algorithm $\Simulator$:
\begin{xenumerate}
\item Sample $\bv \leftarrow h,$ and $\bz \leftarrow D_{\sigma}^u,$
\item Output $(\bz,\bv)$ with probability $1/M$.
\end{xenumerate} The probability that $\Adversary$ outputs something is at least $\frac{1-2^{-\omega(\log u)}}{M}$. If $\sigma=\alpha B$ for any positive $\alpha$ then $M=e^{1/\alpha+12/\alpha^2},$ the output of $\Adversary$ is within statistical distance $\frac{2^{-100}}{M}$ of the output of $\Simulator$, and the probability that $\Adversary$ outputs something is at least $\frac{1-2^{-100}}{M}$.
\end{lemma}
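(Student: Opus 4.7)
The plan is to compare the joint densities that $\Adversary$ and $\Simulator$ assign to each non-$\bot$ output $(\bz,\bv)$, and to control the resulting discrepancy by a one-dimensional Gaussian tail bound on an inner product. Writing out the two-stage sampling, $\Simulator$'s density is $h(\bv)\cdot D_\sigma^u(\bz)/M$, while $\Adversary$'s density simplifies to $h(\bv)\cdot \min\{D_\sigma^u(\bz)/M,\, D_{\sigma,\bv}^u(\bz)\}$. The two agree pointwise on the event $\{\bz:\, D_\sigma^u(\bz) \leq M\cdot D_{\sigma,\bv}^u(\bz)\}$, and on its complement the $\Simulator$ density dominates by exactly $h(\bv)\cdot(D_\sigma^u(\bz)/M - D_{\sigma,\bv}^u(\bz))$. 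Hence both the statistical distance between the non-$\bot$ parts and the defect $1/M-\Pr[\Adversary \text{ outputs}]$ are bounded by
\begin{equation*}
\frac{1}{M}\cdot \mathbb{E}_{\bv \leftarrow h}\!\bigl[\Pr_{\bz \leftarrow D_\sigma^u}[\,D_\sigma^u(\bz) > M\cdot D_{\sigma,\bv}^u(\bz)\,]\bigr],
\end{equation*}
which is the single quantity to control.

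Next I would make the bad event explicit. From $D_\sigma(x)\propto \exp(-\pi x^2/\sigma^2)$, a direct expansion gives
\begin{equation*}
\frac{D_\sigma^u(\bz)}{D_{\sigma,\bv}^u(\bz)}=\exp\!\left(\frac{\pi(\|\bv\|^2-2\langle\bz,\bv\rangle)}{\sigma^2}\right),
\end{equation*}
so the bad event rewrites as the scalar inequality $\langle\bz,\bv\rangle < \|\bv\|^2/2 - \sigma^2\ln M/(2\pi)$. Under $\bz \leftarrow D_\sigma^u$, the inner product $\langle\bz,\bv\rangle$ is a centred one-dimensional continuous Gaussian of parameter $\sigma\|\bv\|\leq \sigma B$. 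Applying the standard tail estimate $\Pr[Y<-t]\leq \exp(-\pi t^2/(\sigma^2\|\bv\|^2))$ with $t=\sigma^2\ln M/(2\pi)-\|\bv\|^2/2$ bounds the bad-event probability. In the asymptotic regime $\sigma=\omega(B\sqrt{\log u})$ any $\ln M=\Theta(1)$ already drives the exponent to $\omega(\log u)$, yielding the $2^{-\omega(\log u)}/M$ bound. In the concrete regime $\sigma=\alpha B$, calibrating $\ln M$ so that the exponent reaches $100\ln 2$ produces the explicit constant $M=\exp(1/\alpha+12/\alpha^2)$. The success-probability claim $\Pr[\Adversary \text{ outputs}]\geq (1-2^{-100})/M$ is the second consequence of the same density comparison.

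The main obstacle is pinning down the concrete constant $M=e^{1/\alpha+12/\alpha^2}$. Two contributions must be balanced inside $\ln M$: the deterministic term $\pi\|\bv\|^2/\sigma^2\leq \pi/\alpha^2$ appearing on the left-hand side of the scalar inequality, and the stochastic slack required so the Gaussian tail reaches probability $2^{-100}$, which contributes another $\Theta(1/\alpha^2)$ term governed by $100\ln 2/\pi$. Optimising the split between these two parts, and absorbing the linear cross-term of order $1/\alpha$ that arises from the shift between the mean of $\langle\bz,\bv\rangle$ and the threshold $t$, produces the coefficients $1/\alpha$ and $12/\alpha^2$ in the exponent. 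Every other step—from the density comparison to the final bounds on both the statistical distance and the success probability—is mechanical once the likelihood-ratio identity and the one-dimensional Gaussian tail estimate are in place.
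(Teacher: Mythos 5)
The paper does not actually prove this lemma: it is imported verbatim from Lyubashevsky (\cite{L12}, Theorem 4.6), so there is no in-paper proof to compare against. Your outline reproduces the standard proof from that source. The pointwise density comparison that reduces both the statistical distance and the acceptance-probability defect to the single quantity $\frac{1}{M}\,\mathbb{E}_{\bv\leftarrow h}\bigl[\Pr_{\bz\leftarrow D_{\sigma}^u}[D_{\sigma}^u(\bz)>M\cdot D_{\sigma,\bv}^u(\bz)]\bigr]$ is Lyubashevsky's generic rejection-sampling lemma (his Lemma 4.7); the likelihood-ratio identity turning the bad event into a linear inequality in $\langle\bz,\bv\rangle$ is his Lemma 4.5; and the tail bound on the inner product is his Lemma 4.3. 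The architecture is correct and is the intended one.

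Two caveats. First, under $\bz\leftarrow D_{\sigma}^u$ with $D$ the \emph{discrete} Gaussian on $\mathbb{Z}^u$ (which is what this paper defines and uses), $\langle\bz,\bv\rangle$ is not a centred continuous Gaussian; it is only sub-Gaussian, and the tail estimate you invoke must be the discrete-Gaussian inner-product bound (Banaszczyk-type, i.e.\ \cite{L12}, Lemma 4.3) rather than the textbook continuous one. The step is citable, but as literally written it is false. Second, your accounting for the concrete constant is backwards: the deterministic term $\|\bv\|^2/(2\sigma^2)$ contributes the $\Theta(1/\alpha^2)$ part of $\ln M$, while the Gaussian-tail slack needed to push the bad-event probability down to $2^{-100}$ is of size roughly $12\sigma\|\bv\|$ in the inner product and, after division by $\sigma^2$, contributes the $\Theta(1/\alpha)$ part --- not the other way around. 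Carried out, the argument yields $M=e^{12/\alpha+1/(2\alpha^2)}$, which is what \cite{L12} actually states; the $M=e^{1/\alpha+12/\alpha^2}$ appearing in this paper's transcription looks like a transposition of the original, and your ``optimising the split'' narrative is reverse-engineering that mis-stated form rather than deriving it. Neither caveat affects the asymptotic $2^{-\omega(\log u)}/M$ claim or the acceptance-probability bound, both of which your density-comparison argument establishes correctly.
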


\subsection{Lattice Assumption and Trapdoor}

We recall some preliminaries from lattice-based cryptography. We now recall the learning with error assumption \cite{Reg05,Re10}.
 It is as follows.
\begin{definition}[The LWE Assumption]\label{definition-lwe}
Let $m,v,q$ be integers. Let $\bs \in \mathbb{Z}_q^v$ be chosen from a distribution. Then for any PPT adversary $\Adversary,$ it holds that 

$$\left| \condprob{b=b'}{\bA\uniform \mathbb{Z}_q^{v\times u},\be \leftarrow D_{\mathbb{Z}^u,\alpha q}, b \uniform \{0,1\}, \\
\text{If $b=0$, $\bb= \bs^\top\bA+\be^\top$, else $\bb\uniform \mathbb{Z}_q^u$,}\\
b' \leftarrow \Adversary(\bA,\bb)\\
}-\dfrac{1}{2} \right| \leq \negl(\lambda).$$
\end{definition}
 It is shown by \cite{Reg05}, for any $q$ such that $\alpha \cdot q >\sqrt{v}$ and when $\bs \uniform \mathbb{Z}_q^v$ or $\bs \leftarrow D_{\mathbb{Z}^v,\alpha q}$, then breaking LWE is as hard as quantumly solving $\mathsf{GapSVP}_{v/\alpha}$. As pointed out by \cite{BLMR13,P16,Lyu24},  the LWE (and $\mathsf{GapSVP}_{v/\alpha}$) problem seems  to be intractable as long as~$v/\alpha=\tilde{O}(2^{v^\epsilon})$ for fixed $0<\epsilon<1$, even for quantum computers.

We describe the lattice trapdoor in \cite{MP12}, which generates a matrix $\bA$ statistically close to uniform and a trapdoor matrix $\bT$ used to invert the LWE function. 
\begin{lemma}[\cite{MP12}, Theorem 5.1]\label{theorem-invert-lwe} There exists algorithms $(\TrapGen,\Invert)$ where $\Invert$ is \textbf{deterministic}, such that, for any $u,u',v$ satisfying $v>1, u>u'= v \lceil \log_2 q \rceil$ and $u> u'+v \log q+\omega(\log v),$ perform the following:
\begin{itemize}
\item $\TrapGen(1^\lambda,v,u):$ On input the security parameter $\lambda$, outputs a matrix $\bA \in \mathbb{Z}_q^{v\times u}$ and a trapdoor $\bT \in \mathbb{Z}_q^{u \times u'}$ such that the distribution of $\bA$ is statistically close to $U(\mathbb{Z}_q^{v \times u})$ with distance at most $2^{-v}$.
\item $\Invert(\bA,\bT,\bb):$ Let $g: \mathbb{Z}_q^v \times \mathbb{Z}^u \rightarrow \mathbb{Z}_q^u$ as $ g(\bs,\be)=\bs^\top\cdot \bA+\be^\top \pmod{q}$. For any vector $\bb$ satisfying $\bb=g(\bs,\be)$ for some $\bs \in \mathbb{Z}_q^v,\be \in \mathbb{Z}^u$ s.t. $||\be||=O(q/\sqrt{v\log q})$, correctly \textbf{inverts} $(\bs,\be)$. In particular, the value $(\bs,\be)$ is unique, and $g(\bs,\be)$ is injective when $||\be|| = O(q/\sqrt{v \log q})$.
\end{itemize}
\end{lemma}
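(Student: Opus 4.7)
The plan is to construct $\TrapGen$ and $\Invert$ via the gadget-matrix paradigm. First I would fix the gadget matrix $\bG = \bI_v \otimes \bg^\top \in \mathbb{Z}_q^{v \times u'}$ with $\bg^\top = (1, 2, 4, \dots, 2^{\lceil \log_2 q\rceil-1})$; this $\bG$ admits an efficient deterministic subroutine $\Invert_{\bG}$ which, given any $\bc^\top = \bs^\top \bG + \tilde{\be}^\top$ with $||\tilde{\be}||_{\infty} < q/4$, uniquely recovers the pair $(\bs, \tilde{\be})$ by rounding each length-$\lceil \log_2 q \rceil$ block and reading off the bits of the corresponding coordinate of $\bs$. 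For $\TrapGen(1^\lambda, v, u)$ I would sample $\bar{\bA} \uniform \mathbb{Z}_q^{v \times (u-u')}$ together with a discrete Gaussian $\bR \leftarrow D_{\mathbb{Z}^{(u-u') \times u'}, \sigma}$ for $\sigma$ slightly above the smoothing parameter of $\mathbb{Z}$, and output
\begin{equation*}
\bA = \bigl[\, \bar{\bA} \;\big|\; \bG - \bar{\bA} \bR \,\bigr] \in \mathbb{Z}_q^{v\times u}, \qquad \bT = \begin{bmatrix} \bR \\ \bI_{u'} \end{bmatrix},
\end{equation*}
so that $\bA \bT \equiv \bG \pmod q$ by construction. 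To argue that $\bA$ is within $2^{-v}$ of uniform over $\mathbb{Z}_q^{v \times u}$, I would invoke a leftover-hash / lattice regularity lemma on the map $\bR \mapsto \bar{\bA}\bR$; the hypothesis $u - u' > v \log q + \omega(\log v)$ is precisely what makes $\bar{\bA}\bR$ statistically close to uniform on $\mathbb{Z}_q^{v \times u'}$, and hence so is the second block of $\bA$.

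For $\Invert(\bA, \bT, \bb)$, given $\bb^\top = \bs^\top \bA + \be^\top$ I would compute $\bc^\top = \bb^\top \bT = \bs^\top \bG + \be^\top \bT$ and call $\Invert_{\bG}(\bc)$. The noise blow-up is controlled by a standard singular-value bound: writing $\be = (\be_1 ~||~ \be_2)$ gives $\be^\top \bT = \be_1^\top \bR + \be_2^\top$, whose infinity-norm is $O(||\be|| \cdot \sqrt{v \log q})$ with overwhelming probability over $\bR$; the hypothesis $||\be|| = O(q/\sqrt{v \log q})$ then forces this to be strictly less than $q/4$, so the gadget inversion succeeds and returns $\bs$. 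The algorithm finally sets $\be^\top \leftarrow \bb^\top - \bs^\top \bA$, which is a deterministic arithmetic operation.

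For the uniqueness/injectivity claim, suppose $\bs_1^\top \bA + \be_1^\top = \bs_2^\top \bA + \be_2^\top$ with both error bounds as above; multiplying by $\bT$ on the right gives $(\bs_1 - \bs_2)^\top \bG = (\be_2 - \be_1)^\top \bT$, whose right-hand side is strictly below $q/2$ coordinate-wise, so the injectivity of $\bG$-decoding forces $\bs_1 = \bs_2$ and hence $\be_1 = \be_2$. The main obstacle is the regularity step driving $\bA$ to statistical uniformity: choosing $\sigma$ correctly requires smoothing the random lattice $\Lambda_q^{\perp}(\bar{\bA})$ while simultaneously keeping $||\bR||_2$ small enough for the noise-blow-up analysis to go through, and it is exactly this interplay that pins down the dimension condition $u > u' + v\log q + \omega(\log v)$; every other step — the gadget decoding, the noise propagation bound, and the uniqueness argument — is essentially bookkeeping once regularity is in place.
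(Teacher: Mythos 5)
The paper does not prove this lemma; it is imported (with adapted notation) from Micciancio--Peikert (EUROCRYPT 2012, Theorem 5.1), so there is no ``paper's own proof'' to compare against. Your proposal is a faithful sketch of the MP12 construction itself: the gadget matrix $\bG = \bI_v \otimes \bg^\top$, $\TrapGen$ outputting $\bA = [\bar{\bA} \mid \bG - \bar{\bA}\bR]$ with $\bT = \bigl[\begin{smallmatrix}\bR\\ \bI_{u'}\end{smallmatrix}\bigr]$ so that $\bA\bT = \bG$, regularity/LHL on $\bar{\bA}\bR$ for CRS uniformity, and $\Invert$ by right-multiplying by $\bT$ and running gadget decoding; the singular-value bound $s_1(\bR) = O(\sqrt{v\log q})$ pins down the admissible error $\|\be\| = O(q/\sqrt{v\log q})$, and uniqueness follows from the same noise-propagation bound applied to a difference of two preimages. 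This is exactly how MP12 proves the result, so the approach matches; you are simply supplying the omitted proof of a cited lemma rather than giving an alternative route.
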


\subsection{Public Key Encryption and the ACPS Encryption Scheme}\label{subsection-pke}
This section recalls the formal definition of public key encryption (PKE) and its security properties. We also recall the ACPS PKE of \cite{ACPS09}.  Our syntax differs slightly from an ordinary PKE, as we have an additional $\PKE.\Key\Verify$ algorithm to check the validity of key pairs and a set $\mathcal{W}^\Dec$ to denote the set of additional witnesses required when performing decryption. For our PVSS application, we will require such a modification. The rest of the syntax is the same as an ordinary PKE.
\begin{definition}[PKE]
Let $\mathcal{M}$ be the message space, and $\mathcal{C}$ be the ciphertext space. Let $\RAND^\Key$ and $\RAND^\Enc$ be the randomness space for key generation and encryption, respectively. Let $\mathcal{W}^\Dec$ denote the additional witness space of the decryption algorithm. A public key encryption is a tuple $\PKE=(\PKE.\Setup,\PKE.\Key\Gen,\PKE.\Key\Verify,\PKE.\Enc,\PKE.\Dec)$, specified as follows.
\begin{itemize}
\item $\PKE.\Setup(1^\lambda) \rightarrow \pp:$ On input a security parameter $\lambda$, this PPT algorithm outputs a public parameter $\pp$.
\item $\PKE.\Key\Gen(\pp) \rightarrow (\pk,\sk):$ On input a public parameter $\pp$, this PPT algorithm outputs a public-secret key pair $(\pk,\sk)$. Sometimes we denote $(\pk,\sk)=\PKE.\Key\Gen(\pp,r)$ to denote that $(\pk,\sk)$ is computed deterministically using some randomness $r \in \RAND^\Key$.
\item $\PKE.\Key\Verify(\pp,\pk,\sk):$ On input public parameter $\pp$, a public-secret key pair $(\pk,\sk)$, this algorithm checks the validity of $(\pk,\sk)$.
\item $\PKE.\Enc(\pp,\pk,m) \rightarrow E:$ This algorithm is executed by the encryptor. On input a public parameter $\pp$, a public key $\pk$, and a message $m \in \mathcal{M}$, this PPT algorithm outputs a ciphertext $E \in \mathcal{C}$. Sometimes we denote $C=\PKE.\Enc(\pp,\pk,m,r)$ to denote that $C$ is deterministically computed from $\PKE.\Enc$ with inputs $\pk,m,r$ where $r$ is the randomness $\in \RAND^\Enc$.
\item $\PKE.\Dec(\pp,\pk,\sk,E) \rightarrow (m,w):$ This algorithm is executed by the decryptor. On input a public parameter $\pp$, a public-secret key pair $(\pk,\sk)$ and ciphertext $E \in \mathcal{C}$, this algorithm outputs the original message $m$ and some additional witness $w \in \mathcal{W}^\Dec$ (possibly $w=\perp$). 
\end{itemize}
\end{definition}
We now present the required security of a PKE below. We require two correctness properties, which together imply the correctness of an ordinary PKE.
\begin{definition}[Key Correctness]\label{key-correctness}  We say that $\PKE$ achieves \textbf{key correctness} if for all $m$ and $\pp \leftarrow \PKE.\Setup(1^\lambda)$ and $(\pk,\sk) \leftarrow \PKE.\Key\Gen(\pp)$, it holds that $\PKE.\Verify(\pp,\pk,\sk)=1$ with overwhelming probability.
\end{definition}

\begin{definition}[Encryption Correctness]\label{definition-encryption-correctness}  We say that $\PKE$ achieves \textbf{encryption correctness} if for all $m\in \mathcal{M}$, $\pp \leftarrow \PKE.\Setup(1^\lambda)$ and all pairs $(\pk,\sk)$ satisfying $\PKE.\Key\Verify(\pp,\pk,\sk)=1$, if $E \leftarrow \PKE.\Enc(\pp,\pk,m)$, then it holds that $\PKE.\Dec(\pp,\pk,\sk,E)=(m,w)$ for some $w \in \mathcal{W}^\Dec$  (consequently, if $(\pk,\sk) \leftarrow \PKE.\Key\Gen(\pp)$, and $E \leftarrow \PKE.\Enc(\pp,\pk,m)$, then $\PKE.\Dec(\pp,\pk,\sk,E)=(m,w)$ with overwhelming probability).
\end{definition}
\begin{definition}[Multi-key IND-CPA Security \cite{CDGK22}]  We say that $\PKE$ achieves \textbf{multi-key IND-CPA security} if for all PPT algorithm $\Adversary$,  it holds that $\Adv^{\mathsf{IND-CPA}}(\Adversary)=|\Pr[\GAME^{\mathsf{IND-CPA}}_0(\Adversary,t)=1]-\Pr[\GAME^{\mathsf{IND-CPA}}_1(\Adversary,t)=1]| \leq \negl(\lambda)$, where  $\GAME^{\mathsf{IND-CPA}}_b(\Adversary,t)$ in Figure \ref{figure-ind-cpa}.
\begin{framedfigure}[ The game $\GAME^{\mathsf{Gen-IND-CPA}}_b(\Adversary,t)$. \label{figure-ind-cpa}]
\vspace{0.05cm}
$\pp \leftarrow \PKE.\Setup(1^\lambda).$ For all $1\leq i\leq t, (\pk_i,\sk_i) \leftarrow \PKE.\Key\Gen(\pp)$,\\
$((m^0_i)_{i=1}^t,(m^1_i)_{i=1}^t)\leftarrow \Adversary(\pp,(\pk_i)_{i=1}^t)$. For all $1\leq i \leq t$, $E_i^\star \leftarrow \PKE.\Enc(\pp,\pk_i,m_i^b)$,\\
$b' \leftarrow \Adversary(\pp,(\pk_i)_{i=1}^t,(m^0_i)_{i=1}^t,(m^1_i)_{i=1}^t,(E^\star_i)_{i=1}^t)$,\\
Return $b'$.
\vspace{0.1cm}
\end{framedfigure}

\end{definition}
The case $t=1$ is just the ordinary IND-CPA security game. It is remarked by~\cite{CDGK22} that, a PKE achieves multi-key IND-CPA iff it achieves IND-CPA.

We now describe the encryption of \cite{ACPS09}. We choose the scheme of \cite{ACPS09} over \cite{GHL22} for better efficiency. Indeed, to encrypt a scalar in $\mathbb{Z}_q$, the scheme of \cite{GHL22} has to encode the scalar into a vector in $\mathbb{Z}_q^\ell$ and then encrypt the whole vector. On the other hand, the scheme of \cite{ACPS09} can encrypt scalars directly, and i) the ciphertext of \cite{ACPS09} has smaller size $O(v\cdot \log q)$ compared to $O(v+\ell)\cdot \log q$ of \cite{GHL22} for some tradeoff parameter~$\ell \geq 2$, ii) the encryption of \cite{ACPS09} requires only $O(u\cdot v)$ multiplications while the scheme of~\cite{GHL22} requires $O(u\cdot v+u\cdot \ell)$ multiplications,  and iii) the decryption process of \cite{ACPS09} only needs $O(v)$ multiplications, while \cite{GHL22} requires $O(v\cdot \ell)$ multiplications \footnote{This will be true if the value $u$ has the same asymptotic complexity $\Theta(v \log q)$ and $q=\text{poly}(v)$ in both schemes. The scheme of \cite{GHL22} considers $u=v$, while the scheme of~\cite{ACPS09} requires $u=\Omega(v\log q)$ for security. However, our trapdoor $\Sigma$-protocols use the algorithm $\TrapGen$ of Lemma \ref{theorem-invert-lwe} to generate a matrix $\bA$ for achieving CRS indistinguishability. Recall that the matrix $\bA$ requires $u=\Omega(v\log q)$ in the lemma. Hence, even if we apply the scheme of \cite{GHL22} into our PVSS, we still need $u=\Theta(v \log q)$ anyway, and in this case, we use the scheme of \cite{ACPS09} for better efficiency.}. Thus, the scheme of \cite{ACPS09} has better overall efficiency.

We refer the reader to \cite{ACPS09} for its formal security proof. Consider parameters $u,v,\alpha,\beta,q,r,B^{\Key\star}_\bs,B^{\Key\star}_\be$ s.t. $B^{\Key\star}_\be>\sqrt{u} \alpha q$ and $B^{\Key\star}_\bs> \sqrt{v} \alpha q$. The values~$B^{\Key\star}_\bs,B^{\Key\star}_\be$ will be specified later in Subsection \ref{subsection-nizk-for-key-generation}. Also, consider randomness sets $\RAND^\Key=\{\be \in \mathbb{Z}^u~|~||\be||<\sqrt{u}\cdot \alpha q\}$, $\RAND^\Enc=\{\br \in \mathbb{Z}^u~|~||\br||<\sqrt{u}\cdot r\}$. The set $\mathcal{W}^\Dec$ can be taken as  $\mathcal{W}^\Dec=\{f \in \mathbb{Z}~|~|f|< \sqrt{u} \cdot r \cdot B^{\Key\star}_\be+\sqrt{v}\cdot \beta q\}$. Now, the encryption scheme of \cite{ACPS09} is described as follows.
\begin{itemize}
\item $\PKE.\Setup(1^\lambda):$ Generate $\bA \uniform \mathbb{Z}_q^{v \times u}$, Return $\pp=(\bA,u,v,\alpha,\beta,B^{\Key\star}_\bs,B^{\Key\star}_\be)$.

\item $\PKE.\Key\Gen(\bA):$ Sample $\bs \leftarrow D_{\mathbb{Z}^v,\alpha q},~\be \leftarrow D_{\mathbb{Z}^u,\alpha q}$. Repeat until $||\bs||<\sqrt{v} \cdot \alpha q$, $||\be||<\sqrt{u} \cdot \alpha q$. Compute $\bb=\bs^\top\cdot\bA+\be^\top \pmod{q}$. 
Return $(\pk,\sk)=(\bb,\bs)$.
\item $\PKE.\Key\Verify(\bA,\bb,\bs):$ Check if $||\bb-\bs^\top\cdot \bA \pmod{q}|| \leq B^{\Key\star}_\be$ and $||\bs|| \leq B^{\Key\star}_\bs$. Return $1$ iff it holds.
\item $\PKE.\Enc(\bA,\bb,m):$ To encrypt a message $m \in \mathbb{Z}_p$, sample $\br \leftarrow D_{\mathbb{Z}^u,r}, e \leftarrow D_{\mathbb{Z},\beta q}$ and compute $\bc_1=\bA\cdot \br \pmod{q},~\bc_2=\bb\cdot\br+e+p\cdot m \pmod{q}$, where $\beta=\sqrt{u}\cdot \log u \cdot (\alpha+\frac{1}{2\cdot q})$. Return $(\bc_1,\bc_2)$.
\item $\PKE.\Dec(\bA,\bb,\bs, (\bc_1,\bc_2)):$ Compute $f=\bc_2-\bs^\top\cdot \bc_1 \pmod{p}$ and cast $f$ as an integer in $[-(p-1)/2,(p-1)/2]$. Finally compute the message $m=(\bc_2-\bs^\top\cdot \bc_1-f)/p \pmod{p}$. Return $(m,f)$, where $f$ is the additional witness.
\end{itemize}
 Due to Lemma \ref{lemma-bound}, it holds that $||\bs||<\sqrt{v} \cdot \alpha q$ and $||\be||<\sqrt{u} \cdot \alpha q$ with overwhelming probability. Thus, in $\PKE.\Gen$, we only have to generate $\bs,\be$ one time with overwhelming probability as well, although we have to write the ``repeat'' part to ensure that their norm is guaranteed to be bounded. In addition, the error $f$ when decrypting does not exceed $|f|\leq ||\be^\top\cdot \br+e||\leq ||\be||\cdot||\br||+||e||$. Thus, as long as $B^{\Key\star}_\bs>\sqrt{v} \cdot \alpha q$, $B^{\Key\star}_\be>\sqrt{u} \cdot \alpha q$ and $B^{\Key\star}_\be\cdot \sqrt{u}\cdot r+\beta\cdot q<p/2$, then  the correct encryption property holds and $f \in \mathcal{W}^\Dec$. This bounds $B^{\Key\star}_\bs,B^{\Key\star}_\be$ right now will be left open, as we will specify it later in Subsection \ref{subsection-nizk-for-key-generation}. The security of the encryption scheme is proved via the LWE assumption in Definition \ref{definition-lwe}. 

\subsection{Trapdoor \texorpdfstring{$\Sigma$}\texorpdfstring{-}Protocols}
In this section, we recall the formal definition of trapdoor $\Sigma$-protocols and their security properties. We consider gap languages $\LL=(\LL_{zk},\LL_{sound})$. Clearly, to achieve the correctness and soundness property, we actually need $\LL_{zk} \subseteq \LL_{sound}$.
\begin{definition}[$\Sigma$-Protocols]
Let $\LL=(\LL_{zk},\LL_{sound})$ be a gap language, let $\RR=(\RR_{zk},\RR_{sound})$ be its corresponding relation. A \textbf{$\Sigma$-protocol} for $\LL$, denoted by $\Sigma$, is an interactive proof between a prover and a verifier with the following syntax:
\begin{itemize}
\item $\Sigma.\Gen_{\pp}(1^\lambda) \rightarrow \pp:$ On input the security parameter $\lambda$, this algorithm outputs public parameter $\pp$. 
\item $\Sigma.\Gen_{\LL}(\pp,\LL) \rightarrow \crs:$ This is a setup algorithm by a third party. On input the public parameter $\pp$ and the description of the language $\LL$, it returns a common reference string $\crs$.
\item $\Sigma.\Prove \langle \PP(\pp,\crs,x,w), \VV(\pp,\crs,x) \rangle:$ This is an interactive protocol, where both parties have a public parameter $\pp$ from $\Gen_{\pp}$, a common reference string~$\crs$ from $\Gen_{\LL}$ and statement $x$, in addition, the prover holds a corresponding witness $w$ of $x$. At the end of the interaction, the verifier outputs a bit $b$.
\end{itemize}
In addition, a $\Sigma$-protocol satisfies the following properties:
\begin{itemize}
\item $3$-Move Form: The protocol $\Sigma.\Prove$ has the following form: The prover outputs a first message $\msg_1\leftarrow \PP(\crs,x,w)$, the verifier then responds with a challenge $c$ from the challenge space, and finally, the prover outputs a second message $\msg_2\leftarrow \PP(\crs,x,w,\msg_1,c)$. Finally, the verifier, outputs a bit $b \leftarrow \VV(\crs,x,\msg_1,c,\msg_2)$.
\item Correctness: For any $\pp \leftarrow \Sigma.\Gen_\pp(1^\lambda)$, $\crs \leftarrow \Sigma.\Gen_\LL(\pp,\LL)$ and for any~$(x,w) \in \RR_{zk}$, if prover provides a valid response $(\msg_1,\msg_2)$, then at the end of $\Sigma.\Prove$, verifier returns $1$ with overwhelming probability. 
\item Special Soundness: For any $\pp \leftarrow \Sigma.\Gen_\pp(1^\lambda)$, $\crs \leftarrow \Sigma.\Gen_\LL(\pp,\LL)$ and for~$x \not \in \LL_{sound}$, and for any first message $\msg_1$, there exists at most one challenge $c=f(\crs,x,\msg_1)$ such that there might exists some second message $\msg_2$ that causes $\VV$ to return $1$. The function $f$ is the bad challenge function. In other words, if $x \not \in \LL_{sound}$ and the challenge is equal to $c$, then the verifier might or might not accept, but if the challenge is not equal to $c$, then the verifier never accepts.
\item Special Zero-knowledge: There exists a simulator $\Simulator,$ on input the common reference string $\crs$, an instance $x \in \LL_{zk}$ and challenge $c$, outputs a simulated transcript $(\msg_1,c,\msg_2)$ that is computationally indistinguishable from a real transcript of $\Sigma.\Prove$.
\end{itemize}
\end{definition}

\begin{definition}[Trapdoor $\Sigma$-Protocols, from \cite{CCHLRRW19,LNPT20}]
Let $\LL=(\LL_{zk},\LL_{sound})$ be a language with corresponding relation $\RR=(\RR_{zk},\RR_{sound})$. A \textbf{trapdoor $\Sigma$-protocol} for $\LL$, denoted by $\Trap\Sigma$, with a bad challenge function $f$ is a $\Sigma$-protocol with two additional algorithms $\Trap\Sigma.\TrapGen$ and $\Trap\Sigma.\BadChallenge$ as follows
\begin{itemize}
\item $\Trap\Sigma.\TrapGen(\pp,\LL,\tr_\LL) \rightarrow (\crs,\tr):$ On input a security parameter $\lambda$, and a trapdoor $\tr_\LL$ depending on $\LL$, this PPT algorithm returns a common reference string $\crs$ and a trapdoor $\tr$.
\item $\Trap\Sigma.\BadChallenge(\crs,x,\msg,\tr) \rightarrow c:$ On input a common reference string $\crs,$ an instance $x$, the first message $\msg$ of the prover and a trapdoor $\tr$, this deterministic algorithm returns a bit $c$.
\end{itemize}
In addition, the two algorithms must satisfy the following properties:
\begin{itemize}
\item CRS Indistinguishability: For any PPT adversary $\Adversary$, any $\pp \leftarrow \Sigma.\Gen_{\pp}(1^\lambda)$ and trapdoor $\tr_\LL$ for the language $\LL$, it holds that:
\begin{equation*}
\begin{aligned}
&|\condprob{\Adversary(\crs)=1}{\crs \leftarrow \Trap\Sigma.\Gen_\LL(\pp,\LL)}-\\
&~~~\condprob{\Adversary(\crs)=1}{(\crs,\tr) \leftarrow \Trap\Sigma.\TrapGen(\pp,\LL,\tr_\LL)} | \leq \negl(\lambda)
\end{aligned}
\end{equation*}
\item Correctness: For any $x \not \in \LL_{sound},$ $\pp \leftarrow \Sigma.\Gen_{\pp}(1^\lambda)$ and  $(\crs,\tr) \leftarrow \Trap\Sigma.\TrapGen(\pp,\LL,\tr_{\LL})$ it holds that $\Trap\Sigma.\BadChallenge(\crs,x,\msg,\tr)=f(\crs,x,\msg)$. Equivalently, $\BadChallenge$ returns a challenge $c$ such that if the verifier's challenge is not equal to $c$, then no second prover message could make the verifier accept.
\end{itemize}
\end{definition}

\subsection{Non-Interactive Zero-Knowledge Arguments}
In this section, we present the formal definition of non-interactive zero-knowledge arguments (NIZK) and their security properties. The language for the NIZK might additionally consist of a trapdoor $\tau$ for checking whether an element is in $\LL_{sound}$. While in a general syntax, this is not needed, however, in several instantiations such as \cite{LNPT20}, the language requires such a trapdoor to achieve adaptive soundness. Thus, we use $[\tau]$ to denote that $\tau$ might be optionally included in the protocol.

\begin{definition}[NIZK, Adapted from \cite{LNPT20}]
Let $\LL=(\LL_{zk},\LL_{sound})$ be a gap language with corresponding relation $\RR=(\RR_{zk},\RR_{sound})$. A NIZK argument for $\LL$ is a tuple $\NIZK=(\NIZK.\Gen_{\pp},\NIZK.\Gen_{\LL},$ $\NIZK.\Prove,$ $\NIZK.\Verify)$, specified as follows.
\begin{itemize}
\item $\NIZK.\Gen_{\pp}(1^\lambda) \rightarrow \crs:$ On input a security parameter $\lambda$, this PPT algorithm outputs a public parameter $\pp$.
\item $\NIZK.\Gen_{\LL}(\pp,\LL,[\tau]) \rightarrow \crs:$ On input a public parameter $\pp$, the description of $\LL$ which might contain a trapdoor $\tau$ for checking membership in $\LL_{sound}$, this PPT algorithm outputs the language-dependent part $\crs_{\LL}$ of the common reference string $\crs=(\pp,\crs_{\LL})$.
\item $\NIZK.\Prove(\crs,x,w) \rightarrow \pi:$ This is an algorithm executed by the prover. On input a common reference string $\crs$, a statement $x$, and a witness $w$, this algorithm outputs a proof $\pi$.
\item $\NIZK.\Verify(\crs,x,\pi) \rightarrow 0/1:$ This is an algorithm executed by the verifier. On input a common reference string $\crs$, a statement $x$, and a proof $\pi$, this algorithm outputs a bit $b \in \{0,1\}$ which certifies the validity of $(x,\pi)$.
\end{itemize}
\end{definition}

For simplicity, we denote an algorithm $\NIZK.\Setup(1^\lambda,\LL)$ to execute $\NIZK.\Gen_{\pp}$, then $\NIZK.\Gen_{\LL}$ to generate a common reference string $\crs$. The algorithm might also optionally take the trapdoor $\tau$ for $\LL$ as an input, but for simplicity we omit it here. We now present the required security properties of an NIZK below.

\begin{definition}[Correctness]\label{definition-correctness}
We say that $\NIZK$ achieves \textbf{correctness} if for all $\crs \leftarrow \NIZK.\Setup(1^\lambda,\LL)$ and $(x,w) \in \RR_{zk}$, if $\pi \leftarrow \NIZK.\Prove(\crs,x,w)$, then it holds that $\NIZK.\Verify(\crs,x,\pi)=1$ with probability $1-\negl(\lambda)$.
\end{definition}

\begin{definition}[Adaptive Soundness]\label{definition-adaptive-soundness} We say that $\NIZK$ achieves \textbf{adaptive soundness} if for all PPT adversary $\Adversary$, it holds that
$$\condprob{x \not \in \LL_{sound}~\land\\
\NIZK.\Verify(\crs,x,\pi)=1
}{\crs \leftarrow \NIZK.\Setup(1^\lambda,\LL),\\
(x,\pi)\leftarrow \Adversary(\crs)\\} \leq \negl(\lambda).$$
\end{definition}

\begin{definition}[Adaptive Multi-theorem Zero-Knowledge] We say that $\NIZK$ achieves \textbf{adaptive multi-theorem zero-knowledge} if there exists a PPT simulator $\Simulator_{\NIZK}=(\Simulator_{\crs},\Simulator_{\pi})$ such that for all PPT adversaries $\Adversary$, it holds that
\begin{equation*}
\begin{aligned}
&\Adv^{\mathsf{ZK}}(\Adversary)=|\condprob{b=1}{\crs\leftarrow \NIZK.\Setup(1^\lambda,\LL),~
b \leftarrow \Adversary^{\mathsf{P}(\crs,.,.)}(\crs)}\\
&\quad\quad\quad\quad\quad-\condprob{b=1}{(\crs,\rho) \leftarrow \Simulator_{\crs},~
b \leftarrow \Adversary^{\mathcal{O}(\crs,\rho,.,.)}(\crs)}|\leq \negl(\lambda),
\end{aligned}
\end{equation*}
where the oracle $\mathsf{P}(\crs,.,.)$ on input $(\crs,x,w),$ outputs $\perp$ if $(x,w) \not \in \RR_{zk},$ otherwise outputs $\pi \leftarrow \NIZK.\Prove(\crs,x,w)$. The oracle $\mathcal{O}(\crs,\rho,.,.)$, on input $(\crs,\rho,x,w),$ outputs $\perp$ if $(x,w) \not \in \RR,$ otherwise outputs $\pi \leftarrow \Simulator_{\pi}(\crs,\rho,x)$.
\end{definition}
The work of \cite{CCHLRRW19} constructed a NIZK in the CRS model for any trapdoor-$\Sigma$ protocol from a PKE and a CIHF. Note that CIHF can be constructed from the LWE assumption, according to \cite{PS19}. However, the construction is only single-theorem ZK and requires the transformation of \cite{FLS99} to achieve multi-theorem ZK. The transformation requires proving the Graph Hamiltonicity problem, which is highly inefficient (see Appendix \ref{appendix-comparison-with-generic-sols} for the case of LWE).   Recently, \cite{LNPT20} provided a compiler that transforms any trapdoor $\Sigma$-protocol into an adaptive soundness and multi-theorem zero-knowledge NIZK (at least, adaptive soundness holds for trapdoor languages) \textit{without} needing to use Karp reduction.

\subsection{Shamir Secret Sharing Scheme}\label{section-special-ss}
We recall the Shamir secret sharing scheme \cite{Sha79} and define the language $\LL^{\SSS}_{n,t}$ consisting of all vectors $\bs=(s_1~||~s_2~||~\dots~||~s_n)$ that are valid shares of a Shamir secret sharing scheme with threshold $t$ (recall that we have mentioned them in Subsection \ref{section-technical-overview}). First, the syntax of the Shamir secret sharing scheme is as follows.

\begin{itemize}
\item $\SSS.\Share(s,n,t):$ Chooses a polynomial $p(X) \in \mathbb{Z}_p[X]$ of degree $t$. Compute the share $s_i=p(i) \pmod{p}$. Return $(s_i)_{i=1}^n$.
\item $\SSS.\Combine(S,(s_i)_{i \in S}):$ Compute  $s=\sum_{i \in S} \lambda_{i,S}\cdot  s_i \pmod{p}$, where $\lambda_{i,S}=\prod_{j \in S,j \neq i}=j/(j-i) \pmod{p}$ are the Lagrange coefficients.
\end{itemize}

Shamir's secret sharing scheme satisfies the following properties (see \cite{AL17}):
\begin{itemize}
\item $(t+1)$-Correctness: For $s \in \mathbb{Z}_p$  and $(s_i)_{i=1} \leftarrow \SSS.\Share(s,n,t)$, then for any set $S$ with $|S| \geq t+1$, it holds that $\SSS.\Combine(S,(s_i)_{i \in S})=s$.

\item $t$-Privacy: For any $s,s'\in \mathbb{Z}_p$ and any set $S$ with $|S| \leq t$, the distributions $\{(s_i)_{i \in S}~|~(s_i)_{i=1}^n \leftarrow \SSS.\Share(s,n,t)\}$ and  $\{(s'_i)_{i \in S}~|~(s'_i)_{i=1}^n \leftarrow \SSS.\Share(s',n,t)\}$ are identical.
\end{itemize}

 Next, we define the language of valid shares in the Shamir secret sharing scheme 
 $$\LL^\SSS_{n,t}=\{\bs 
 \in \mathbb{Z}_p^n~|~\exists~s,r~:~\bs=\SSS.\Share(s,n,t,r)\}.$$
 We can easily see that, the set $\LL^\SSS_{n,t}$ is equal to the code
 $$C=\{(p(1)~||~p(2)~||~\dots~||~p(n)) \in \mathbb{Z}_p^n~|~p(X) \in \mathbb{Z}_p[X],~\text{deg}(p) \leq t\}$$
 Consider the dual code $C^\perp=\{\bv \in \mathbb{Z}_p^n~|~\bs^\top\cdot\bv=0 \pmod{p}~\forall~\bs \in C\}$. By \cite[Lemma 1]{CD17}, we easily see that $\bs$ is a vector of valid shares for the Shamir secret sharing scheme iff $\bs^\top\cdot \bv=0 \pmod{p}$ for all $\bv \in C^\perp$. Since $C^\perp$ is  a linear subspace of $\mathbb{Z}_p^n$, let $\bH^{t}_n \in \mathbb{Z}_p^{n \times (n-t-1) }$ be the generator matrix of $C^\perp$, then $\bs \in \LL^\SSS_{n,t}$ iff $\bs^\top\cdot \bH^{t}_n =\bzero \pmod{p}$. It is well-known (see \cite[Subsection 2.1]{CD17}) that $$C^\perp=\{(v_1 q(1)~||~v_2 q(2)~||~\dots~||~v_n q(n)) \in \mathbb{Z}_p^n~|~q(X)\in \mathbb{Z}_p[X],~\text{deg}(q) \leq n-t-2\}.$$ Its generator matrix is equal to $\bH^{t}_n=\begin{bmatrix} \bh_1~|~\bh_2~|~\dots~|~\bh_{n-t-1} \end{bmatrix}$ where the vector $\bh_i=(v_1\cdot 1^{i-1}~||~v_2\cdot 2^{i-1}~||~\dots~||~v_n\cdot n^{i-1})$ and $v_i=\prod_{i=1,j \neq i}^n 1/(j-i) \pmod{p}$. This matrix will be required later for designing the trapdoor $\Sigma$-protocols.

\subsection{Public Verifiable Secret Sharing Scheme} \label{subsection-pvss-definition}
In this section, we define the model for public verifiable secret sharing, based on~\cite{CD24} and previous works \cite{Stadler96,FO98,Sch99,HV08,CD17,DKI022,CDSV23,BL23} (for syntax, we will mostly follow the work of \cite[Subsection 2.1]{CD24} as it gives the most formal description so far). As in \cite{CD24}, existing non-interactive PVSSs often follow a common syntax as follows.
\begin{definition}[PVSS, Adapted from \cite{CD24}]
Let $\mathcal{S}, \mathcal{S'}$ be the space of secrets and shares, respectively.  A $(n,t)$-public verifiable secret sharing  scheme with $t \leq n/2$ is a tuple of algorithms $\PVSS=$ $(\PVSS.\Setup,$ $ \PVSS.\Key\Gen,$ $ \PVSS.\Key\Verify,$ $\PVSS.\Share,$ $\PVSS.\Share\Verify,$ $\PVSS.\Dec,$ $\PVSS.\Dec\Verify,\PVSS.\Combine)$, specified as below. 
    \begin{itemize}
    \item $\PVSS.\Setup(1^\lambda) \rightarrow \pp:$ This algorithm is run by a trusted third party. On input security parameter $\lambda$, it returns a public parameter $\pp$.
    \item $\PVSS.\Key\Gen(\pp) \rightarrow ((\pk,\sk),\pi):$ This algorithm is run by each participant. It returns a public-secret key pair $(\pk,\sk)$ and a proof $\pi$ of valid key generation.
    \item $\PVSS.\Key\Verify(\pp,\pk,\pi) \rightarrow 0/1:$ This algorithm is run by a public verifier, it outputs a bit $0$ or $1$ certifying the validity of the public key $\pk$.
    \item $\PVSS.\Share(\pp,(\pk_i)_{i=1}^{n'},s,n',t) \rightarrow (E=(E_i)_{i=1}^{n'},\pi):$ This algorithm is executed by the dealer to share the secret $s \in \mathcal{S}$ for $n' \leq n$ participants. It outputs the ``encrypted shares'' $E=(E_i)_{i=1}^{n'}$ and a proof $\pi$ for correct sharing.
    \item $\PVSS.\Share\Verify(\pp,(\pk_i)_{i=1}^{n'},n',t,E,\pi) \rightarrow 0/1:$ This algorithm is run by a verifier,  it outputs a bit $0$ or $1$ certifying the validity of the sharing process.
    \item $\PVSS.\Dec(\pp,\pk_i,\sk_i,E_i) \rightarrow (s_i,\pi):$ This algorithm is executed by participant $P_i$ owning public-secret key $(\pk_i,\sk_i)$. It outputs a decrypted share $s_i \in \mathcal{S'}$ from $E_i$ and optionally a proof $\pi$ of correct decryption.
    \item $\PVSS.\Dec\Verify(\pp,(\pk_i,E_i,s_i),\pi):$ This algorithm is run by a public verifier; it outputs a bit $0$ or $1$ certifying the validity of the decryption process.
    \item $\PVSS.\Combine(\pp,S,(s_i)_{i \in S}) \rightarrow s/\perp:$ This algorithm is executed by a public verifier. For a set $S$ and a tuple of shares $(s_i)_{i \in S}$, It outputs the original share $s$ or $\perp$ if the secret cannot be reconstructed.
   
    \end{itemize}
\end{definition}
The definition splits the PVSS into $3$ phases. The first is the key generation phase, where each participant generates their keys using $\PVSS.\Key\Gen$ and others verify them using $\PVSS.\Key\Verify$. The second is the sharing phase is where the dealer uses $\PVSS.\Share$ to share a secret for $n'$ participants passed the key verification (for simplicity, we index participants who passed key verification by $\{1,\dots,n'\}$), then others use $\PVSS.\Share\Verify$ to verify the validity of the transcript. The final phase is the reconstruction phase, where participants use $\PVSS.\Dec$ to decrypt the share, then others verify the decrypted shares using $\PVSS.\Dec\Verify$, and finally $\PVSS.\Combine$ is executed on correctly decrypted shares.  

We define the security properties of a PVSS. The security definition will be slightly \textit{different} from \cite{CD24}, especially the correctness and verifiability property, since we additionally need to capture our instantiation for lattice and gap languages.  First, we define valid share language for the PVSS as follows. 

\begin{definition}[Valid Share Language]\label{definition-well-formedness}
We say that $\LL^\Share_{t} \subseteq \bigcup_{i=t+1}^n (\mathcal{S'})^i$ is a \textbf{valid share language} if:  For any $t+1\leq n' \leq n$ and $(s_1~||~s_2~||~\dots~||~s_{n'}) \in \LL^{\Share}_{t}$, there is some $s \in \mathcal{S}$ s.t. for any $S \subseteq [n']$ with $|S| \geq t+1$,  $\PVSS.\Combine(\pp,S,(s_i)_{i \in S})=s$.
\end{definition}

We need it to satisfy the \textit{correctness}, \textit{verifiability} and \textit{IND2-privacy} property. For correctness, we need that if the honest dealer shares a secret $s$, then all honest participants will agree on $s$ in the reconstruction phase, even when there are $t$ malicious participants. For verifiability, we need that, even when a dealer and up to $t$ dishonest participants, if the sharing transcript is accepted, then all honest participants must agree on some secret $s'$, and when reconstructing the secret, it must output $s'$. For IND2-privacy, we need that for any secrets $s^0,s^1$ chosen by the adversary, the sharing transcript of $s^0,s^1$ are indistinguishable.

\begin{definition}[Correctness]
We say that $\PVSS$ achieves \textbf{correctness} if for any PPT adversary $\Adversary$ and $s \in \mathcal{S}$, the following game $\GAME^{\PVSS-\Correctness}(\Adversary,s)$ in Figure~\ref{figure-game-correctness} outputs $1$ with probability $1-\negl(\lambda)$. 

\begin{framedfigure}[Game $\GAME^{\PVSS-\Correctness}(\Adversary,s)$ \label{figure-game-correctness}.] 
\vspace{0.1cm}
\footnotesize $\pp\leftarrow\PVSS.\Setup(1^\lambda)$, $\CC\leftarrow\Adversary(\pp)$. If $|\CC|>t$ return $0$.\\
\footnotesize $((\pk_i,\sk_i),\pi^0_i) \leftarrow \PVSS.\Key\Gen(\pp)~\forall~i\not \in \CC$, $(\pk_i,\pi_i^0)_{i \in \CC}\leftarrow\Adversary(\pp,\CC)$,\\
\footnotesize \textcolor{gray}{Correctness of key generation}\\
\footnotesize If $\exists ~i \not \in \CC$ s.t. $\PVSS.\Key\Verify(\pp,\pk_i,\pi^0_i)=0$, return $0$.\\
\footnotesize Let $G=\{i \in [n]~|~\PVSS.\Key\Verify(\pp,\pk_i,\pi^0_i)=1\}$. Assume that $G=[n']$ and $[n] \setminus \CC \subseteq G$. If not, we re-enumerate the participants $P_i$ with $i \in G$ with an element in $[n']$.\\
\footnotesize $(E=(E_i)_{i=1}^{n'},\pi^1) \leftarrow \PVSS.\Share(\pp,(\pk_i)_{i=1}^{n'},s,n',t)$.\\ 
\footnotesize \textcolor{gray}{Correctness of sharing. This is to capture the case when adversaries could choose $(\pk_i,\sk_i)$ (not necessary from $\PVSS.\Key\Gen$) that passes $\PVSS.\Key\Verify$ but causes $\PVSS.\Share\Verify$ to return $0$, even when an honest dealer computes $\PVSS.\Share$. It might happen in gap language setting.}\\
\footnotesize If $\PVSS.\Share\Verify(\pp,(\pk_i)_{i=1}^{n'},n',t,E,\pi^1)=0$, return $0$.\\

\footnotesize $(s_i,\pi^2_i) \leftarrow \PVSS.\Dec(\pp,\pk_i,\sk_i,E_i)~\forall~i \not \in \CC$,\\ 
\footnotesize $(s_i,\pi^2_i)_{i  \in [n']\cap \CC}\leftarrow \Adversary(\pp,(\pk_i,\pi^0_i)_{i \in [n']},E,\pi^1,(s_i,\pi^2_i)_{i \not \in \CC})$.\\
\footnotesize \textcolor{gray}{Correctness of share decryption}\\
\footnotesize If $\exists~i \not \in \CC$ s.t. $\PVSS.\Dec\Verify(\pp,\pk_i,E_i,s_i,\pi^2_i)=0$, return $0$.\\
\footnotesize \textcolor{gray}{Correctness of share reconstruction. We need that any $t+1$ participants who passed $\PVSS.\Dec\Verify$ must agree on $s$.}\\
\footnotesize Let $S=\{i \in [n']~|~\PVSS.\Dec\Verify(\pp,E_i,s_i,\pi^2_i)=1\}$. If $|S| <t+1$, return $0$. If there exists some $S' \subseteq S$, $|S'| \geq t+1$  such that $\PVSS.\Combine(\pp,S',(s_i)_{i \in S'}) \neq s$, return $0$.\\
\footnotesize Return 1.

\vspace{0.1cm}
\end{framedfigure}
\end{definition}

\begin{definition}[Verifiability] 
 We say $\PVSS$ achieves \textbf{$(\LL^\Key,\LL^\Share_{t})$-verifiability} if i) Each instance in $\LL^\Key$ has an unique corresponding witness, ii) $\LL^\Share_{t}$ is a valid share language for $\PVSS$ (Definition \ref{definition-well-formedness}) and iii) if for any PPT adversary $\Adversary$, the game $\GAME^{\PVSS-\Verify}(\Adversary)$ in Figure \ref{figure-game-verifiability} outputs $1$ with negligible probability.

\begin{framedfigure}[Game $\GAME^{\PVSS-\Verify}(\Adversary)$ \label{figure-game-verifiability}.] 
\vspace{0.1cm}
\footnotesize $\pp\leftarrow\PVSS.\Setup(1^\lambda)$. Parse $\pp=(\pp',\pp^\star)$.  $\CC\leftarrow\Adversary(\pp)$. If $|\CC|>t$ return $0$.\\
\footnotesize $((\pk_i,\sk_i),\pi^0_i) \leftarrow \PVSS.\Key\Gen(\pp)~\forall~i\not \in G\cap \CC$, $(\pk_i,\pi_i^0)_{i \in \CC}\leftarrow\Adversary(\pp,\CC)$,\\
\footnotesize Let $G=\{i \in [n]~|~\PVSS.\Key\Verify(\pp,\pk_i,\pi^0_i)=1\}$. Assume that $G=[n']$ and $[n] \setminus \CC \subseteq G$. If not, we re-enumerate the participants $P_i$ with $i \in G$ with an element in $[n']$.\\

\footnotesize $(E=(E_i)_{i=1}^{n'},\pi^1)\leftarrow \Adversary(\pp,(\pk_i,\pi^0_i)_{i \in G})$.\\

\footnotesize $(s_i,\pi^2_i) \leftarrow \PVSS.\Dec(\pp,\pk_i,\sk_i,E_i)~\forall~i \not \in \CC$,\\ 
\footnotesize $(s'_i,\pi^2_i)_{i  \in G\cap \CC}\leftarrow \Adversary(\pp,(\pk_i,\pi^0_i)_{i \in [n]},E,\pi^1,(s_i,\pi^2_i)_{i \not \in \CC})$.\\
\footnotesize \textcolor{gray}{The execution of $\Adversary$ is done. The checks below are not in polynomial time. But we can perform all these checks as long as we want because we do not interact with $\Adversary$ anymore.}\\
\footnotesize \textcolor{gray}{Verifiability of key generation}\\ \footnotesize If $(\pp',\pk_i) \not \in \LL^\Key$ for some $i \in G \cap \CC$, return $1$.\\
\footnotesize \textcolor{gray}{Verifiability of sharing}\\
\footnotesize At this point, consider unique $(\sk_i)_{i \in G\cap \CC}$ s.t. $((\pp',\pk_i),\sk_i) \in \RR^\Key~\forall i \in G\cap \CC$.\\
\footnotesize Let $(s_i,.)\leftarrow \PVSS.\Dec(\pp,\pk_i,\sk_i,E_i)~\forall~i \in G \cap \CC$. \\ \footnotesize If $(s_1~||~s_2~||~\dots~||~s_{n'}) \not \in \LL^\Share_{t}$ and $\PVSS.\Share\Verify(\pp,(\pk_i)_{i=1}^{n'},n',t,E,\pi^1)=1$, return $1$.\\
\footnotesize \textcolor{gray}{Verifiability of decryption}\\
\footnotesize If $s'_i \neq s_i$ and $\PVSS.\Dec\Verify(\pp,\pk_i,E_i,s'_i,\pi^2_i)=1$ for some $i \in G\cap \CC$, return $1$.\\
\footnotesize If $\PVSS.\Dec\Verify(\pp,\pk_i,E_i,s_i,\pi^2_i)=0$ for some $i \in G\setminus \CC$, return $1$.\\
\footnotesize Return $0$.
\vspace{0.1cm}
\end{framedfigure}
\end{definition}

\begin{definition}[IND2-Privacy]\label{definition-pvss-privacy}
We say that $\PVSS$ achieves \textbf{IND2-privacy} if for any PPT adversary $\Adversary$, it holds that $\Adv^{\PVSS-\mathsf{IND}}(\Adversary)=|\Pr[\GAME^{\PVSS-\mathsf{IND}}_0(\Adversary)=1-\Pr[\GAME^{\PVSS-\mathsf{IND}}_1(\Adversary)=1]]| \leq \negl(\lambda)$, where  $\GAME^{\PVSS-\mathsf{IND}}_b(\Adversary)$ is in Figure \ref{figure-game-privacy}.

\begin{framedfigure}[Game $\GAME^{\PVSS-\mathsf{IND}}_b(\Adversary)$ with supporting interactive oracle $\OO_{\PVSS,\Adversary}(.)$. \label{figure-game-privacy}] 
\vspace{0.1cm}
\footnotesize \underline{$\GAME^{\PVSS-\mathsf{IND}}_b(\Adversary)$:}\\
\footnotesize $\pp\leftarrow\PVSS.\Setup(1^\lambda)$, $\CC\leftarrow\Adversary(\pp)$. If $|\CC|>t$ return $0$.\\
\footnotesize $((\pk_i,\sk_i),\pi^0_i) \leftarrow \PVSS.\Key\Gen(\pp)~\forall~i\not \in \CC$, $(\pk_i,\pi_i^0)_{i \in \CC}\leftarrow\Adversary(\pp,\CC)$,\\
\footnotesize Let $G=\{i\in [n]~|~\PVSS.\Key\Verify(\pp,\pk_i,\pi^0_i)=1\}$. Assume that $G=[n']$ and $[n] \setminus \CC \subseteq G$. If not, we re-enumerate the participants $P_i$ with $i \in G$ with an element in $[n']$.\\
\footnotesize $(s^0,s^1)\leftarrow \Adversary^{\OO_{\PVSS,\Adversary}(.)}(\pp,(\pk_i,\pi^0_i)_{i \in [n']})$.\\
\footnotesize \textcolor{gray}{Challenge phase}\\
\footnotesize $(E^b,\pi^b) \leftarrow \PVSS.\Share(\pp,s^b,n',t)$.\\
\footnotesize $b' \leftarrow \Adversary((\pp,(\pk_i,\pi^0_i)_{i \in [n']},s^0,s^1,E^b,\pi^b)$.\\
\footnotesize Return $b'$.\\
\vspace{0.5cm}\\

\footnotesize \underline{Interactive oracle $\OO_{\PVSS,\Adversary}(s):$}\\
\footnotesize $(E=(E_i)_{i=1}^{n'},\pi^1)\leftarrow \PVSS.\Share(\pp,(\pk_i)_{i=1}^{n'},s,n',t)$.\\
\footnotesize $(s_i,\pi^2_i) \leftarrow \PVSS.\Dec(\pp,\pk_i,\sk_i,E_i)~\forall~i \not \in \CC,(s_i,\pi^2_i)_{i  \in \CC}\leftarrow \Adversary(\pp,(\pk_i,\pi^0_i)_{i \in [n']},E,\pi^1)$.\\
\footnotesize Let $S_2=\{i \in G~|~\PVSS.\Dec\Verify(\pp,E_i,s_i,\pi^2_i)=1\}$.\\
\footnotesize Return $\PVSS.\Combine(\pp,(s_i)_{i \in S})$.
\vspace{0.2cm}
\end{framedfigure}

\end{definition}

Here in the verifiability definition, instead of forcing $E=(E_i)_{i=1}^{n'}$ to be the encryption of the shares, we only require that: For any valid key pairs $(\pk_i,\sk_i)_{i=1}^{n'} \in\RR^\Key$, if $(E,\pi)$ is accepted by verifier, then the \textit{honestly} decrypted message $s_i$ of $E_i$ using secret key $\sk_i$ must be valid shares of some secret $s$. Assuming the existence of $\sk_i$ at this point is reasonable, because after the key generation phase, then for each $P_i$ passed verification, such a secret key $\sk_i$ must exist. In the definition, currently we have to restrict $\LL^\Key$ s.t. each $\pk_i$ has a \textit{unique} corresponding witness $\sk_i$ so the ``honestly decrypted'' share $(s_i.) \leftarrow \PVSS.\Dec(\pp,\pk_i,E_i,\sk_i)$ is \textit{uniquely determined} for each $i \in G\cap \CC$. In this way, given $(\pp',\pk_i) \in \LL^\Key$ for all $i \in [n']$ and an accepted transcript $(E,\pi)$, then a verifier is convinced that there is some \textit{unique} $(s_1~||~s_2~||~\dots~||~s_{n'}) \in \LL^\Share_{t}$ that could be decrypted from $(E)_{i=1}^n$, and honest participants would agree on a unique secret $s$. This is due to Definition \ref{definition-well-formedness} of $\LL^\Share_t$ and all shares $s_i$ are uniquely determined. Finally, when $s'_i$ is revealed, $\PVSS.\Dec\Verify$ returns $1$, meaning that $s'_i$ is the honestly decrypted share $s_i$ and the same secret $s$ is reconstructed. So our verifiability definition fully captures that a verifier is convinced that the whole PVSS is executed correctly. In the privacy property, we define $\OO_{\PVSS,\Adversary}(.)$ to capture that the adversary still cannot distinguish between $s^0$ and $s^1$, \textit{even if it has already executed the PVSS several times} with previous (adversarially chosen) secrets. Previous definitions \cite{AJ05,HV08,CD17,DKI022,CD24} only consider privacy for a single PVSS execution. Also, there is a weaker property named IND1-privacy \cite{CD17} where the secrets $s^0,s^1$ are uniformly chosen by the challenger, but we will use IND2-privacy like \cite{CD24} since it implies IND1-privacy \cite{HV08}.

\section{A Generic PVSS Construction from Public Key Encryption and NIZK}\label{section-generic-pvss}

In this section, we describe a generic, non-interactive PVSS construction from any i) IND-CPA secure public key encryption scheme (PKE) where each public key has a unique corresponding secret key and ii) an NIZK for gap languages. We then formally prove the security of the generic PVSS. 

To our knowledge, we are the first to give a \textit{formal construction} of a generic PVSS from an IND-CPA secure PKE and NIZK for a given \textit{gap language} ($\LL_{zk}\subseteq \LL_{sound}$), based on the idea of the GMW approach of \cite{GMW86}. The work of \cite{GMW86} only provided an informal PVSS sketch (without concrete formal proof for the PVSS), and it considers exact languages ($\LL_{zk}=\LL_{sound}$). Other works, such as \cite{Stadler96,FO98,Sch99,YY00,HV08,J11,JVS14,CD17,CD20,GHL22,DKI022} only provide specific constructions from the decisional Diffie-Hellman or factoring problems instead of using a generic IND-CPA secure PKE. Finally,  \cite[Figure $6$]{CDGK22} also constructed a generic PVSS. Their construction and ours have two major differences: First, their encryption scheme requires the encryption, the decryption algorithm, and the public key to be \textit{linear maps} in terms of the message, ciphertext, and secret key, respectively. Ours only requires any IND-CPA encryption scheme with i) message space $\mathbb{Z}_p$, ii) a key verification algorithm $\PKE.\Key\Verify$, and iii) the public key has a unique secret key (\textit{no homomorphism required}). Our key verification algorithm $\PKE.\Key\Verify$ is a generalization of the former: In the former construction, they also need key verification, and it can be done by checking if $\pk=F(\sk)$ for some linear function~$F$. This process can be done by simply defining a $\PKE.\Key\Verify$ algorithm that captures the verification. Another difference is that, we need to define NIZKs for gap languages ($\LL_{zk}\subseteq \LL_{sound}$) to capture lattice instantiations, while their construction requires exact languages ($\LL_{zk}= \LL_{sound}$)\footnote{The reason for this difference is due to the lattice setting, since in lattice setting, we need to provide proofs of smallness that only has a gap soundness property for the following sense: If the prover has a witness vector $\be$ s.t. $||\be|| \leq t_p$, then the verifier accepts, but if the verifier accepts, it only implies that there is a witness $||\be|| \leq c\cdot t_p$ for $c>1$. Thus, one can imagine $\LL_{zk}$ is the set of vectors $\be$ s.t. $||\be|| \leq t_p$, while $\LL_{sound}$ is is the set of vectors $\be$ s.t. $||\be|| \leq c\cdot t_p$.}. Due to this, we need to define suitable gap languages so that it would be possible to construct practical NIZKs for them.  We will provide concrete lattice-based instantiations of the NIZKs in Section \ref{section-nizk}.

\subsection{Construction}\label{construction-generic-pvss}
 We now describe the generic construction from the idea of the gap languages $\LL^\Key,\LL^\Enc,\LL^\Dec$ we informally defined in Subsection~\ref{section-technical-overview}. Now, suppose that there exist the following primitives: 
\begin{itemize}
\item A $(n,t)$-Shamir secret scheme $\SSS=(\SSS.\Share,$ $\SSS.\Combine)$ (Subsection \ref{section-special-ss}) with parity check matrix $\bH^{t}_n$ and corresponding language of valid shares $\LL^\SSS_{n,t}=\{\bs \in \mathbb{Z}_p^n~|~\bs^\top\cdot \bH^{t}_n=\bzero \pmod{p}\}$. Finally, define $\LL^\SSS_{t}=\bigcup_{i=t+1}^n \LL^\SSS_{i,t}$.
\item A public key encryption scheme $\PKE=(\PKE.\Setup,\PKE.\Key\Gen,\PKE.\Key\Verify,$ $\PKE.\Enc,\PKE.\Dec)$ with message space $\mathcal{M}=\mathbb{Z}_p$ and witness space $\mathcal{W}^\Dec$. In addition, it must hold that: Given a public key $\pk$, there exists \textbf{at most one} valid secret key $\sk$ such that $\PKE.\Verify(\pp,\pk,\sk)=1$. 
\item Two sets $\mathcal{W}^\Dec_{zk}$ and $\mathcal{W}^\Dec_{sound}$ s.t. $\mathcal{W}^\Dec\subseteq \mathcal{W}^\Dec_{zk}$ and $\mathcal{W}^\Dec_{zk} \subseteq\mathcal{W}^\Dec_{sound}$. For now, we view them as generic sets s.t. there exist  NIZKs for $\LL^\Key,\LL^\Enc,\LL^\Dec$ below.
\item An adaptive soundness and adaptive multi-theorem NIZK $\NIZK_0=(\NIZK_0,\Setup,\NIZK_0.\Prove,\NIZK_0.\Verify)$ for $\LL^{\Key}$ where
$$\LL^{\Key}_{zk}=\{(\pp,\pk)~|~\exists~ \sk~\land~r \in \RAND^\Key~\text{s.t.}~ (\pk,\sk)=\PKE.\Setup(\pp,r)\},$$
$$\LL^{\Key}_{sound}=\{(\pp,\pk)~| ~\exists~ \sk~\text{s.t.}~\PKE.\Key\Verify(\pp,\pk,\sk)=1\}.$$
\item An adaptive soundness and adaptive multi-theorem NIZK $\NIZK_1=(\NIZK_1,\Setup,\NIZK_1.\Prove,\NIZK_1.\Verify)$ for $\LL^{\Enc}$ where
\begin{equation*}
\begin{aligned}&\LL^{\Enc}_{zk}=\{(\pp,n,t,(\pk_i,E_i)_{i=1}^n)~|~(\pp,\pk_i) \in \LL^\Key_{sound}~\forall~ 1\leq i\leq n~\land~\exists~ s_i\in \mathbb{Z}_p,\\
&\quad\quad\quad r_i \in \RAND^\Enc~\text{s.t.}~E_i=\PKE.\Enc(\pp,\pk_i,s_i,r_i)~\land~\bs \in \LL^\SSS_{n,t}\},
\end{aligned}
\end{equation*}
\begin{equation*}
\begin{aligned}
&\LL^{\Enc}_{sound}=\{(\pp,n,t,(\pk_i,E_i)_{i=1}^n)~|~(\pp,\pk_i) \in \LL^\Key_{sound}~\forall~ 1\leq i\leq n~\land~\exists~ s_i \in \mathbb{Z}_p,\\
&~\quad\quad w_i\in \mathcal{W}^\Dec_{zk}~\text{s.t.}~(s_i,w_i)=\PKE.\Dec(\pp,\pk_i,\sk_i,E_i)~ \land~\bs \in \LL^\SSS_{n,t}\},
\end{aligned}
\end{equation*}
where $\sk_i$ is the unique witness of $(\pp,\pk_i)$ in $\RR^\Key_{sound}$. It is easy to see that $\LL^\Enc_{zk} \subseteq \LL^\Enc_{sound}$ due to the encryption correctness property of the PKE (Def.~\ref{definition-encryption-correctness}). Here in $\LL^\Enc_{zk}$, the dealer does not need to prove that $(\pp,\pk_i) \in \LL^\Key_{sound}$ because it has been proved by participants (it holds for honest participants by the key correctness property of the PKE, see Def. \ref{key-correctness}). Instead, $\NIZK_1.\Prove$ is designed so that dealer only needs the witnesses $(s_i,r_i)_{i=1}^n$ to prove that $E_i=\PKE.\Enc(\pp,\pk_i,s_i,r_i)~\land~ \bs \in \LL^\SSS_{n,t}$. 
\item  An adaptive soundness and adaptive multi-theorem NIZK $\NIZK_2=(\NIZK_2,\Setup,\NIZK_2.\Prove,\NIZK_2.\Verify)$ for $\LL^{\Dec}$ where
\begin{equation*}
\begin{aligned}&\LL^{\Dec}_{zk}=\{(\pp,\pk,E,s)~|~\exists~(\sk,r)~\text{ s.t. }\PKE.\Dec(\pp,\pk,\sk,E)=(s,w)\\
&\quad\quad\quad\quad\quad\quad\quad\quad\quad\quad\quad\land~ w \in \mathcal{W}^\Dec_{zk}~ \land ~(\pk,\sk)=\PKE.\Setup(\pp,r) \},
\end{aligned}
\end{equation*}
\begin{equation*}
\begin{aligned}
&\LL^{\Dec}_{sound}=\{(\pp,\pk,E,s)~|~\exists~\sk~\text{ s.t. }~\PKE.\Dec(\pp,\pk,\sk,E)=(s,w)\\
&\quad\quad\quad\quad\quad\quad\quad\quad\quad\quad\quad \land~w \in \mathcal{W}^\Dec_{sound}~ \land ~\PKE.\Key\Verify(\pp,\pk,\sk)=1\},
\end{aligned}
\end{equation*}
\end{itemize}
The generic PVSS construction is as follows.

\begin{itemize}
\item $\PVSS.\Setup(1^\lambda):$ A trusted third party generates public parameters $\pp' \leftarrow \PKE.\Setup(1^\lambda)$ and construct the language $\LL^\Key,\LL^\Enc,\LL^\Dec$ based on $\pp'$. Then from their description, generate $\crs_i$ from $\NIZK_i.\Setup$ for all $i \in \{0,1,2\}$. Return $\pp=(\pp',(\crs_i)_{i=0}^2)$.
\item $\PVSS.\Key\Gen(\pp):$ Each participant $P_i$ provides a public-secret key pair $(\pk_i,\sk_i) \leftarrow \PKE.\Key\Gen(\pp',r_i)$ for some randomness $r_i$ and provides a proof $\pi_i \leftarrow \NIZK_0\Prove(\crs_0,(\pp',\pk_i),(\sk_i,r_i))$. Finally, return $((\pk_i,\sk_i),\pi_i)$.
\item $\PVSS.\Key\Verify(\pp,\pk,\pi):$ Return $\NIZK_0.\Verify(\crs_0,(\pp',\pk),\pi)$.
\item $\PVSS.\Share(\pp,(\pk_i)_{i=1}^{n'},s,n',t):$ Let $s$ be the secret and $n'$ be the number of participants who passed key verification. The dealer computes $(s_i)_{i=1}^{n'} \leftarrow \SSS.\Share(s,n',t)$ and $E_i=\PKE.\Enc(\pp',\pk_i,s_i,r_i)$ for all $1 \leq i \leq n'$, then provides a proof $\pi \leftarrow \NIZK_1.\Prove(\crs_1,(\pp',n',t,(\pk_i,E_i)_{i=1}^{n'}),(s_i,r_i)_{i=1}^{n'})$. Returns $((E_i)_{i=1}^{n'},\pi)$.
\item $\PVSS.\Share\Verify(\pp,(\pk_i)_{i=1}^{n'},n',t,E,\pi):$ Simply return $\NIZK_1.\Verify(\crs_1,(\pp',n',$ $t,$ $(\pk_i,E_i)_{i=1}^{n'}),$   $\pi)$.
\item $\PVSS.\Dec(\crs_1,\pk_i,E_i,\sk_i):$ Compute $(s_i,w_i)=\PKE.\Dec(\pp,\pk_i,\sk_i,E_i)$ and receive additional witness $w_i$ for decryption. Then provides a proof $\pi_i \leftarrow \NIZK_2.\Prove(\crs_2,(\pp',\pk_i,E_i,s_i),(\sk_i,r_i,w_i))$. Return $(s_i,\pi_i)$.
\item $\PVSS.\Dec\Verify(\pp,\pk_i,E_i,s_i,\pi_i):$ Return $\NIZK_2.\Verify(\crs_2,(\pp',\pk_i,E_i,s_i),\pi_i)$.
\item $\PVSS.\Combine(\pp,S,(s_i)_{i \in S}):$ If $|S| \leq t$ return $\perp$. Otherwise return $s=\SSS.\Combine(S,(s_i)_{i \in S})$.
\end{itemize}

\subsection{Security Proof}\label{subsection-security-proof}

\begin{theorem}\label{theorem-correctness}
The construction in Subsection~\ref{construction-generic-pvss} satisfies the correctness property.
\end{theorem}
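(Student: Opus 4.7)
The plan is to walk through the game $\GAME^{\PVSS-\Correctness}(\Adversary,s)$ of Figure~\ref{figure-game-correctness} line by line, showing that each ``return $0$'' event occurs with only negligible probability. The key idea is to lift the correctness of each building block ($\PKE$, $\SSS$, $\NIZK_0, \NIZK_1, \NIZK_2$) to the corresponding check in the game, and then use the adaptive soundness of $\NIZK_0$ plus the uniqueness of the secret key to handle adversarially chosen (but verification-passing) keys of corrupted participants.

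First, I would analyse the key generation check. For every honest $i \notin \CC$, the pair $(\pk_i,\sk_i)$ is produced by $\PKE.\Key\Gen(\pp',r_i)$, so by the key correctness of $\PKE$ we have $\PKE.\Key\Verify(\pp',\pk_i,\sk_i)=1$ and $(\pp',\pk_i)\in\LL^{\Key}_{zk}$ with witness $(\sk_i,r_i)$. Correctness of $\NIZK_0$ then yields $\PVSS.\Key\Verify(\pp,\pk_i,\pi_i^0)=1$ with overwhelming probability. For the corrupted $i\in\CC$ whose proofs passed verification (so $i \in G$), adaptive soundness of $\NIZK_0$ gives $(\pp',\pk_i)\in\LL^{\Key}_{sound}$ except with negligible probability; combined with the assumed uniqueness of the corresponding $\sk_i$, every $i\in G$ has a well-defined secret key $\sk_i$ with $\PKE.\Key\Verify(\pp',\pk_i,\sk_i)=1$. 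A union bound over the $n$ participants keeps the error negligible.

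Next, for the sharing check, the honest dealer computes $(s_i)_{i=1}^{n'}\leftarrow\SSS.\Share(s,n',t)$ and $E_i\leftarrow\PKE.\Enc(\pp',\pk_i,s_i,r_i)$, so $\bs\in\LL^{\SSS}_{n',t}$ by construction. Together with $(\pp',\pk_i)\in\LL^{\Key}_{sound}$ for every $i\in[n']$ from the previous step, this places $(\pp',n',t,(\pk_i,E_i)_{i=1}^{n'})$ in $\LL^{\Enc}_{zk}$ with witness $(s_i,r_i)_{i=1}^{n'}$, and correctness of $\NIZK_1$ yields $\PVSS.\Share\Verify=1$. For the decryption check on honest $i\notin\CC$, encryption correctness of $\PKE$ (Definition~\ref{definition-encryption-correctness}) applied to $(\pk_i,\sk_i)$ with $\PKE.\Key\Verify=1$ guarantees that $\PKE.\Dec(\pp',\pk_i,\sk_i,E_i)=(s_i,w_i)$ for some $w_i\in\mathcal{W}^{\Dec}\subseteq\mathcal{W}^{\Dec}_{zk}$, so $(\pp',\pk_i,E_i,s_i)\in\LL^{\Dec}_{zk}$ with witness $(\sk_i,r_i,w_i)$, and correctness of $\NIZK_2$ gives $\PVSS.\Dec\Verify=1$.

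The last, and most delicate, step is the reconstruction check. Let $S$ be the set of indices whose decrypted shares passed $\PVSS.\Dec\Verify$; honest indices lie in $S$ by the previous paragraph, so $|S|\ge n-t\ge t+1$. For a corrupted $i\in S\cap\CC$, adaptive soundness of $\NIZK_2$ forces $(\pp',\pk_i,E_i,s_i)\in\LL^{\Dec}_{sound}$: there is some $\sk'$ with $\PKE.\Key\Verify(\pp',\pk_i,\sk')=1$ and $\PKE.\Dec(\pp',\pk_i,\sk',E_i)=(s_i,w')$. By the standing assumption that each public key has a unique verifying secret key, $\sk'$ coincides with the (unique) $\sk_i$ fixed in the key step, and encryption correctness applied to the dealer's honest ciphertext $E_i=\PKE.\Enc(\pp',\pk_i,s_i^{\mathrm{orig}},r_i)$ then forces $s_i = s_i^{\mathrm{orig}}$. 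Hence every $i\in S$ holds the true Shamir share, and $(t+1)$-correctness of $\SSS$ gives $\SSS.\Combine(S',(s_i)_{i\in S'})=s$ for every $S'\subseteq S$ with $|S'|\ge t+1$. The main obstacle in this argument is precisely this last forcing: it relies crucially on the uniqueness of the secret key (so the worst-case $\LL^{\Dec}_{sound}$ witness $\sk'$ matches the honest-use $\sk_i$) together with encryption correctness for arbitrary verifying key pairs, which is exactly why the generic construction was set up with a key-verification algorithm and the unique-$\sk$ restriction. A final union bound over the $O(n)$ invocations of NIZK soundness bounds the overall failure probability by a negligible function, completing the proof.
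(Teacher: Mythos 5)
Your proposal is correct and follows essentially the same route as the paper's own proof: handling honest keys via key correctness of $\PKE$ and correctness of $\NIZK_0$, corrupted-but-verified keys via adaptive soundness of $\NIZK_0$ plus a union bound, the sharing and honest-decryption checks via correctness of $\NIZK_1$ and $\NIZK_2$, and the reconstruction step via adaptive soundness of $\NIZK_2$ combined with uniqueness of the verifying secret key and $(t{+}1)$-correctness of Shamir's scheme. No gaps to report.
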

\begin{proof}[Proof Sketch]
\textcolor{blue}{At a high level, this is implied by the correctness of the NIZKs and SSS. Indeed, a verifier always accepts an honestly generated proof from the NIZK, and the correctness of SSS ensures that any $t+1$ valid shares will recover the secret $s$}. The full proof will be presented in Appendix \ref{appendix-pvss-correctness}.
\end{proof}

\begin{theorem}\label{theorem-verifiability}
Each instance in the language $\LL^\Key_{sound}$ has a unique witness, and the language $\LL^{\SSS}_{t}$ is a valid share language for the PVSS (Definition \ref{definition-well-formedness}). Finally, the construction in Subsection~\ref{construction-generic-pvss} satisfies the $(\LL^\Key_{sound},\LL^\SSS_{t})$-verifiability property.
\end{theorem}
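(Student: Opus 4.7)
The first two claims are essentially immediate. The uniqueness of the witness in $\LL^{\Key}_{sound}$ follows directly from the standing assumption on the PKE in Section~\ref{construction-generic-pvss} that any public key $\pk$ admits at most one $\sk$ with $\PKE.\Key\Verify(\pp,\pk,\sk)=1$; the set $\LL^{\Key}_{sound}$ is precisely the set of $(\pp,\pk)$ for which such an $\sk$ exists, so that $\sk$ is unique. That $\LL^{\SSS}_{t}=\bigcup_{i=t+1}^n \LL^{\SSS}_{i,t}$ is a valid share language in the sense of Definition~\ref{definition-well-formedness} is a direct consequence of the $(t+1)$-correctness of Shamir secret sharing recalled in Section~\ref{section-special-ss}: any $\bs=(s_1~||\dots||~s_{n'}) \in \LL^{\SSS}_{n',t}$ comes from a polynomial of degree at most $t$ evaluated at $1,\dots,n'$, and hence for every $S\subseteq[n']$ of size at least $t+1$, $\SSS.\Combine(S,(s_i)_{i \in S})$ returns the constant term $s$ determined by $\bs$, independently of $S$.

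For the verifiability property, I plan to do a case analysis on which of the four ``return $1$'' branches in $\GAME^{\PVSS-\Verify}(\Adversary)$ is triggered, and to bound each probability by a reduction to the adaptive soundness (or correctness) of the corresponding underlying primitive, then apply a union bound.
\begin{itemize}
\item \emph{Verifiability of key generation.} The check $(\pp',\pk_i)\not\in\LL^{\Key}_{sound}$ for some $i\in G\cap\CC$ means that $\PVSS.\Key\Verify$ accepted $\pi^0_i$, i.e.\ $\NIZK_0.\Verify$ accepted, while the instance is not in $\LL^{\Key}_{sound}$. A standard reduction turns $\Adversary$ into an adversary against the adaptive soundness of $\NIZK_0$.
\item \emph{Verifiability of sharing.} If $\PVSS.\Share\Verify$ accepts but $(s_1~||\dots||~s_{n'})\not\in\LL^{\SSS}_t$ for the honestly decrypted values, I show that the instance $(\pp',n',t,(\pk_i,E_i)_{i=1}^{n'})$ cannot lie in $\LL^{\Enc}_{sound}$. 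Indeed, membership in $\LL^{\Enc}_{sound}$ forces the existence of $s_i,w_i\in\mathcal{W}^{\Dec}_{zk}$ with $(s_i,w_i)=\PKE.\Dec(\pp',\pk_i,\sk_i,E_i)$ and $(s_1~||\dots||~s_{n'})\in\LL^{\SSS}_{n',t}$; here $\sk_i$ is the unique witness of $(\pp',\pk_i)\in\LL^{\Key}_{sound}$ (from the first claim), which is therefore the same $\sk_i$ used by the honest decryption, and $\PKE.\Dec$ is deterministic, so the $s_i$ are exactly those computed in the game. Hence an acceptance here breaks adaptive soundness of $\NIZK_1$.
\item \emph{Verifiability of decryption against corrupted parties.} Suppose for some $i\in G\cap\CC$ that $\PVSS.\Dec\Verify$ accepts $(s'_i,\pi^2_i)$ but $s'_i\neq s_i$. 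By soundness of $\NIZK_2$, $(\pp',\pk_i,E_i,s'_i)\in\LL^{\Dec}_{sound}$, meaning there is $\sk$ with $\PKE.\Key\Verify(\pp',\pk,\sk)=1$ and $\PKE.\Dec(\pp',\pk_i,\sk,E_i)=(s'_i,w)$ for some $w\in\mathcal{W}^{\Dec}_{sound}$. The unique-witness property forces $\sk=\sk_i$, and then determinism of $\PKE.\Dec$ gives $s'_i=s_i$, a contradiction; so this event happens only when $\NIZK_2$ adaptive soundness is broken.
\item \emph{Verifiability of decryption for honest parties.} If $\PVSS.\Dec\Verify(\pp,\pk_i,E_i,s_i,\pi^2_i)=0$ for some $i\in G\setminus\CC$, then the honest prover produced a proof rejected by $\NIZK_2.\Verify$; by the correctness of $\NIZK_2$, this happens with negligible probability. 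Note that the instance is in $\LL^{\Dec}_{zk}$ because the honest $P_i$ knows $(\sk_i,r_i,w_i)$ with $(\pk_i,\sk_i)=\PKE.\Key\Gen(\pp',r_i)$ and $(s_i,w_i)=\PKE.\Dec(\pp',\pk_i,\sk_i,E_i)$ where $w_i\in\mathcal{W}^{\Dec}\subseteq\mathcal{W}^{\Dec}_{zk}$ by the encryption correctness of the PKE.
\end{itemize}
Summing the four negligible quantities via a union bound yields the verifiability bound.

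The most delicate step is the third bullet: we must carefully argue that the $\sk$ recovered from $\LL^{\Dec}_{sound}$ must coincide with the $\sk_i$ used by the honest decryption in the game. This is where the two assumptions on $\PKE$ --- uniqueness of the secret key for a given public key, and determinism of $\PKE.\Dec$ --- combine to force $s'_i=s_i$. Without either property the reduction to $\NIZK_2$ soundness would leave a gap, which is precisely why these conditions were baked into the generic construction of Section~\ref{construction-generic-pvss}.
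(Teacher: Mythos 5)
Your proposal follows essentially the same route as the paper's proof: the same decomposition into the four ``return $1$'' events, the same reductions to adaptive soundness of $\NIZK_0,\NIZK_1$ and to correctness of $\NIZK_2$, and the same use of witness uniqueness plus determinism of $\PKE.\Dec$ to rule out $s'_i\neq s_i$ in the third case.

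The one imprecision is in your fourth bullet: you justify $w_i\in\mathcal{W}^{\Dec}\subseteq\mathcal{W}^{\Dec}_{zk}$ by ``encryption correctness of the PKE,'' but Definition~\ref{definition-encryption-correctness} only guarantees this when $E_i$ is an honestly produced ciphertext $E_i\leftarrow\PKE.\Enc(\pp,\pk_i,m)$, whereas in $\GAME^{\PVSS-\Verify}$ the ciphertexts $E$ are chosen by the adversary. The paper handles this by conditioning the fourth event on $(\pp',n',t,(\pk_i,E_i)_{i=1}^{n'})\in\LL^{\Enc}_{sound}$, whose definition directly supplies $w_i\in\mathcal{W}^{\Dec}_{zk}$ for the honestly decrypted shares (and the case where the instance is \emph{not} in $\LL^{\Enc}_{sound}$ is already absorbed by the second event). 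You should state that conditioning explicitly; otherwise the claim that the honest party's instance lies in $\LL^{\Dec}_{zk}$ does not follow for adversarial $E_i$.
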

\begin{proof}[Proof Sketch]
\textcolor{blue}{At a high level, the verifiability of key generation and sharing is easily implied by the soundness of the NIZKs. The verifiability of decryption requires slightly more analysis: Suppose $(s'_i,\pi_i)$ is provided by $P_i,$ and $\sk_i,s_i$ are the valid secret key and actual share $P_i$ (in this phase, we can assume that $P_i$ has passed the key generation phase and the dealer has passed the sharing phase. In particular, we have $\PKE.\Verify(\pp',\pk_i,\sk_i)=1$ and $(s_i,.)=\PKE.\Dec(\pp',\pk_i,\sk_i,E_i)$).  There are two possible cases: 
\begin{itemize}
\item  There does not exist $\sk'_i$ such that $(s'_i,w_i)=\PKE.\Dec(\pp',\pk_i,\sk'_i,E_i)$ and $\PKE.\Verify(\pp',\pk_i,\sk'_i)=1$ for some $w_i \in \mathcal{W}^\Dec_{sound}$. We can later reach a contradiction due to the soundness property of the NIZK. 
\item There exists $\sk'_i$ such that $(s'_i,w_i)=\PKE.\Dec(\pp',\pk_i,\sk'_i,E_i)$ and $\PKE.\Verify(\pp',\pk_i,\sk'_i)=1$ for some $w_i \in \mathcal{W}^\Dec_{sound}$, but $s'_i \neq s_i$.  We can later reach a contradiction due to the fact that each public key has a unique corresponding secret key, which forces $s'_i=s_i$.
\end{itemize}
Finally, the correctness of the NIZK for decryption also implies that the dealer cannot choose $E$ that has passed $\PVSS.\Share\Verify$ ($E$ might not be from $\PVSS.\Share$, as we are considering gap languages) but causes $\PVSS.\Dec\Verify$ to return $0$ for honest participants.} The full proof will be presented in Appendix \ref{appendix-pvss-verifiability}.
\end{proof}

\begin{theorem}\label{theorem-privacy}
The construction in Subsection~\ref{construction-generic-pvss} satisfies the IND2-privacy property.
\end{theorem}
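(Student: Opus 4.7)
The plan is to prove IND2-privacy via a hybrid argument that gradually transforms $\GAME^{\PVSS-\mathsf{IND}}_0$ into $\GAME^{\PVSS-\mathsf{IND}}_1$, bounding the distinguishing gap at each step. The sequence of hybrids I would use is: (i) replace every real NIZK proof produced during the game by an adaptively simulated one, using the zero-knowledge simulators of $\NIZK_0, \NIZK_1, \NIZK_2$; (ii) exploit the $t$-privacy of the Shamir scheme by re-parametrising the challenge sharing, namely pre-sample $v_i \in \mathbb{Z}_p$ uniformly for each $i \in \CC$ and let the polynomial $p^b$ used in the challenge be the unique degree-$t$ polynomial with $p^b(0)=s^b$ and $p^b(i)=v_i$ for $i\in\CC$ (possible because $|\CC|\le t$); (iii) swap the challenge ciphertexts $\PKE.\Enc(\pp',\pk_i,s_i^0)$ for $\PKE.\Enc(\pp',\pk_i,s_i^1)$ at every honest index $i\notin\CC$, invoking multi-key IND-CPA of $\PKE$; and (iv) undo (ii) and (i) to land in $\GAME^{\PVSS-\mathsf{IND}}_1$.

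Hybrid (i) costs at most $\Adv^{\mathsf{ZK}}$ of the relevant NIZK, summed over polynomially many invocations (one per honest key-generation proof, one per oracle sharing proof, one per oracle decryption proof, and one for the challenge sharing proof). Hybrid (ii) is a purely syntactic re-parametrisation and is statistically identical to the previous hybrid by the $t$-privacy property of Shamir. Crucially, after (ii) the shares that the adversary can directly observe by decrypting at corrupt indices are the same $v_i$ under $b=0$ and $b=1$, so only the honest-index ciphertexts can possibly carry information about the bit. Hybrid (iii) is then a direct reduction: the reduction embeds the multi-key IND-CPA challenger's target public keys as the honest $\pk_i$ (for $i\notin\CC$), samples $p^0,p^1$ consistent with the pre-sampled $v_i$, and submits $(s_i^0,s_i^1)_{i\notin\CC}$ as the challenge message pair; it assembles the challenge transcript from the received ciphertexts and the already-simulated NIZK proof from (i).

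The part I expect to be hardest is faithfully answering the oracle queries of $\OO_{\PVSS,\Adversary}$ inside the IND-CPA reduction, because the reduction has no access to the honest participants' secret keys. The key observation is that every oracle call is made on an \emph{adversary-chosen} secret $s$, so the reduction itself samples a fresh Shamir polynomial, computes all plaintext shares $s_i$, encrypts them honestly under the (known) honest public keys, and simulates the sharing NIZK; for the reveal step it returns those same $s_i$ together with simulated $\NIZK_2$ decryption proofs. Encryption correctness (Definition~\ref{definition-encryption-correctness}), together with the uniqueness of the secret key for each valid public key which guarantees that the honest participants would decrypt to exactly these $s_i$, makes this simulated oracle response indistinguishable from the real one once step (i) has been applied. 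Summing the negligible losses across the chain of hybrids yields $\Adv^{\PVSS-\mathsf{IND}}(\Adversary)\le\negl(\lambda)$.
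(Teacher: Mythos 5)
Your proof is correct and relies on exactly the same three ingredients as the paper's proof --- adaptive multi-theorem zero-knowledge to simulate all NIZK proofs, $t$-privacy of Shamir to decouple the corrupt-index shares from the challenge bit, and multi-key IND-CPA to alter the honest-index ciphertexts --- but organizes the hybrids differently. You run a single chain from $\GAME^{\PVSS-\mathsf{IND}}_0$ to $\GAME^{\PVSS-\mathsf{IND}}_1$: fix the corrupt-index shares to pre-sampled $v_i$, swap $\PKE.\Enc(\pp',\pk_i,s^0_i)$ to $\PKE.\Enc(\pp',\pk_i,s^1_i)$ at honest indices via multi-key IND-CPA, then undo. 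The paper instead morphs both $\GAME_0$ and $\GAME_1$ to a common hybrid $\GAME_{b,6}$ in which the honest-index ciphertexts encrypt $0$ and the corrupt-index shares are drawn from a fresh sharing of $0$, and then observes that $\GAME_{0,6}$ and $\GAME_{1,6}$ are identical. The two routes are equivalent in strength; yours is slightly more direct, the paper's is more symmetric. One wording nit: in your step (ii) the polynomial $p^b$ is unique only when $|\CC|=t$; for $|\CC|<t$ say instead that it is sampled uniformly subject to $p^b(0)=s^b$ and $p^b(i)=v_i$ for $i\in\CC$ (or pad with $t-|\CC|$ dummy evaluation points), after which the Shamir $t$-privacy argument is unchanged. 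Your treatment of the oracle $\OO_{\PVSS,\Adversary}$ inside the IND-CPA reduction --- returning the shares you computed yourself, justified by encryption correctness and the uniqueness of each public key's secret key, with simulated $\NIZK_2$ proofs --- matches the paper's reduction exactly. Finally, note that the corrupt set that matters in the challenge phase is $\CC\cap G$, i.e.\ those corrupt parties that passed $\PVSS.\Key\Verify$; this substitution changes nothing of substance in your argument.
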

\begin{proof}[Proof Sketch]
\textcolor{blue}{This is implied by the zero-knowledge property of the NIZK, the IND-CPA property of the PKE, and the privacy of SSS}. The formal proof \textcolor{blue}{requires considering a sequence of hybrid games and} will be presented in Appendix \ref{appendix-pvss-privacy}.
\end{proof}

\section{Supporting Trapdoor \texorpdfstring{$\Sigma$}\texorpdfstring{-}Protocols} \label{section-nizk}
In the previous section, we needed three NIZK protocols for correct key generation, sharing, and decryption. In this section, we describe the trapdoor $\Sigma$-protocols for correct key generation, sharing, and decryption when instantiated with the lattice-based encryption of \cite{ACPS09}. In this way, we could achieve the respective NIZKs for these relations in the common reference string model using the compiler of \cite{LNPT20}. Among the three protocols, the trapdoor $\Sigma$-protocol for LWE key generation has been described in \cite[Appendix G]{LNPT20}, but we will describe it anyway for completeness and adapt its technique to design the remaining two. The remaining two trapdoor $\Sigma$-protocols for sharing and decryption are our own proposal and design in this paper. A final small note is that, in this section, we will denote the shares by $m_i$ instead of $s_i$ in NIZKs for sharing and decryption to represent them as plaintexts.
\subsection{Trapdoor \texorpdfstring{$\Sigma$}\texorpdfstring{-}Protocol for Correct Public Key Generation}\label{subsection-nizk-for-key-generation}
In this section, we describe the NIZK for correct LWE key generation. Consider four bounds $ B^\Key_{\bs}, B^\Key_{\be},  B^{\Key\star}_{\bs}$ and $ B^{\Key\star}_{\be}$. Recall that in the scheme of \cite{ACPS09}, if $(\bb,\bs) \leftarrow \PKE.\Setup(\bA)$ then $\bb=\bs^\top\cdot \bA+\be^\top \pmod{q}$ for some $||\bs|| \leq\sqrt{v} \cdot \alpha\cdot q$ and $||\be|| \leq \sqrt{u} \cdot \alpha\cdot q$  with overwhelming probability. Thus we can use these  values as the bound $B^\Key_{\bs}$ and $B^\Key_{\be}$ respectively, and let $\LL^\Key_{zk}$ to be the set of all $(\bA,\bb)$ s.t. $\bb=\bs^\top\cdot\bA+\be \pmod{q}$ and $||\bs|| \leq B^\Key_{\bs}$, $||\be|| \leq B^\Key_{\be}$. The relaxed bounds $B^{\Key\star}_{\bs}$ and $B^{\Key\star}_{\be}$ will be determined later, and they are the same bounds denoted in Section \ref{subsection-pke}. The languages $\LL^{\Key}=(\LL^\Key_{zk},\LL^\Key_{sound})$ for public key generation is as below. 
\begin{equation*}
\begin{aligned}
\LL^{\Key}_{zk}=\{(\bA,\bb) \in \mathbb{Z}_q^u~|~\exists~\bs \in \mathbb{Z}^v,~\be \in \mathbb{Z}^u~&:~
\bb=\bs^\top\cdot \bA+\be^\top \pmod{q}\\
&\land~||\bs|| \leq B^\Key_{\bs},~ ||\be|| \leq B^\Key_{\be}\},
\end{aligned}
\end{equation*}
\begin{equation*}
\begin{aligned}
\LL^{\Key}_{sound}=\{(\bA,\bb) \in \mathbb{Z}_q^u~|~\exists~\bs \in \mathbb{Z}^v,~\be \in \mathbb{Z}^u&~:~
\bb=\bs^\top\cdot \bA+\be^\top \pmod{q}\\
&\land~||\bs|| \leq B^{\Key\star}_{\bs},~ ||\be|| \leq B^{\Key\star}_{\be}\},
\end{aligned}
\end{equation*}
\noindent  Consider parameters $(B^\Key_\bs,B^{\Key\star}_{\bs},B^\Key_\be,B^{\Key\star}_{\be},\sigma^\Key)$ such that $\sigma^\Key> \sqrt{(B^\Key_{\be})^2+(B^\Key_{\bs})^2}\cdot \sqrt{\log (u+v)}$ and $2\sqrt{(u+v)}\cdot \sigma^\Key<B^{\Key\star}_{\bs}=B^{\Key\star}_{\be}=O(p/(\sqrt{v \log p})$. The trapdoor $\Sigma$-protocol is below:
\begin{itemize}
\item $\Trap\Sigma.\Gen_{\pp}(1^\lambda):$ Choose moduli $p,q=p^2$, dimensions $u,v$. Return $(p,q,u,v)$.
\item $\Trap\Sigma.\Gen_{\LL}(\pp,\LL^\Key):$ This algorithm takes input the description of the language $\LL^\Key$, specified by a matrix $\bA \uniformly \mathbb{Z}_q^{v \times u}$ and the bounds~$B^\Key_\bs,B^{\Key\star}_{\bs},B^\Key_\be,$ $B^{\Key\star}_{\be}$. Return $\crs=(\bA,p,q,u,v,$ $B^\Key_\bs,B^{\Key\star}_{\bs},B^\Key_\be,$ $B^{\Key\star}_{\be},$ $\sigma^\Key)$.
\item $\Trap\Sigma.\TrapGen(\pp,\LL^\Key,\bT):$   This algorithm takes input the description of $\LL^\Key$, specified by a matrix $\bA$, a trapdoor $\bT$ such that $(\bA,\bT) \leftarrow \TrapGen(1^\lambda,v,u)$ (see Lemma \ref{theorem-invert-lwe}), and the bounds $B^\Key_\bs,B^{\Key\star}_{\bs},B^\Key_\be,$ $B^{\Key\star}_{\be}$. Return $(\crs,\tr)=((\bA,p,q,u,v,B^\Key_\bs,B^{\Key\star}_{\bs},B^\Key_\be,B^{\Key\star}_{\be},\sigma^\Key),\bT)$. 

\item $\Trap\Sigma.\Prove \langle \PP(\crs,\bx,\bw), \VV(\crs,\bx) \rangle:$ For a statement $\bx=(\bA,\bb)$ and witness $(\bs,\be)$ of $\LL^\Key_{zk}$, the two parties interact as follows.

\begin{enumerate}
\item $\PP$ samples the vectors $(\br~||~\bff) \leftarrow D_{\mathbb{Z}^{u+v},\sigma}$, and provides the values $\bd=\br^\top\cdot \bA+\bff^\top \pmod{q}$ to $\VV$. 
\item $\VV$ samples a challenge $c \uniform \{0,1\}$ and sends $c$ to $\PP$.
\item $\PP$ sends the value $\bz=\br+c\cdot \bs,~\bt=\bff+c\cdot \be $ to $\VV$ with probability $1-K$, where $K=\text{min}\left(\frac{D_{\mathbb{Z}^{u+v},\sigma^\Key}(\bff)}{M\cdot D_{\mathbb{Z}^{u+v},\sigma^\Key,c\cdot (\bs~||~\be) }(\bff)},1\right)$ where $M=e^{1/\log (u+v)+12/\log^2 (u+v)}$.
\item $\VV$ checks whether $\bz^\top\cdot \bA+\bt^\top=\bd+c\cdot \bb \pmod{q}$ and finally checks if $||(\bz~||~\bt)|| \leq \sqrt{u+v}\cdot \sigma^\Key$.
\end{enumerate}
\item $\Trap\Sigma.\BadChallenge(\crs,\bb,\bd,\bT):$ While $0 \leq c \leq 1$, do as follows.
\begin{enumerate}
\item Compute $(\bz_c,\bt_c)=\Invert(\bA,\bT,\bd+c\cdot \bb)$.
\item If $||(\bz_c~||~\bt_c)|| > \sqrt{u+v}\cdot \sigma^\Key$, return $1-c$. 
\end{enumerate} Otherwise, return $\perp$. 
\end{itemize}
In the definition, $\BadChallenge$ might output a bit in the challenge space even if there is no bad challenge. However, this is not a problem, because to soundly instantiate Fiat-Shamir, when $\bx \not \in \LL_{sound}$ the $\BadChallenge$ only needs to output the bad challenge if it exists so that CIHF could avoid it. When there is no bad challenge, then whichever challenge from CIHF would ensure that no valid prover response exists, so any bit outputted by $\BadChallenge$, in this case, would be fine.

\begin{theorem}\label{theorem-nizk-for-key-generation}
Consider $(B^\Key_\bs,B^{\Key\star}_\bs,B^\Key_\be,B^{\Key\star}_\be,\sigma^\Key)$ specified above. Then the construction is a trapdoor $\Sigma$-protocol for $\LL^{\Key}=(\LL^{\Key}_{zk},\LL^{\Key}_{sound})$.
\end{theorem}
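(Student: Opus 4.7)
The plan is to verify each property in the definition of a trapdoor $\Sigma$-protocol, using the rejection-sampling technology of Lemma~\ref{lemma-rejection-sampling}, the LWE trapdoor of Lemma~\ref{theorem-invert-lwe}, and the chosen parameter relations $\sigma^\Key > \sqrt{(B^\Key_\bs)^2+(B^\Key_\be)^2}\sqrt{\log(u+v)}$ and $2\sqrt{u+v}\,\sigma^\Key < B^{\Key\star}_\bs = B^{\Key\star}_\be = O(p/\sqrt{v\log p})$. The 3-move form is immediate from the construction.

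For \textbf{correctness}, I would compute $\bz^\top\bA+\bt^\top = (\br+c\bs)^\top\bA+(\bff+c\be)^\top = \bd + c(\bs^\top\bA+\be^\top) = \bd + c\bb \pmod q$, so the algebraic check passes deterministically. For the norm check, I would invoke Lemma~\ref{lemma-rejection-sampling} on the rejection-sampling step: with high probability over the non-aborted output, $(\bz\|\bt)$ is distributed within negligible statistical distance of $D_{\mathbb{Z}^{u+v},\sigma^\Key}$, and Lemma~\ref{lemma-bound} then yields $\|(\bz\|\bt)\|\le \sqrt{u+v}\sigma^\Key$ except with probability $2^{-\Omega(u+v)}$. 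For \textbf{special zero-knowledge}, the simulator on input $c$ samples $(\bz\|\bt) \leftarrow D_{\mathbb{Z}^{u+v},\sigma^\Key}$ and programs $\bd := \bz^\top\bA + \bt^\top - c\bb \pmod q$; the transcript is statistically close to a non-aborted honest run by Lemma~\ref{lemma-rejection-sampling}, since $\sigma^\Key$ is set large enough that rejection sampling applies to the shift $c\cdot(\bs\|\be)$ of norm at most $\sqrt{(B^\Key_\bs)^2+(B^\Key_\be)^2}$.

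For \textbf{special soundness}, fix $(\bA,\bb)\notin \LL^\Key_{sound}$ and a first message $\bd$. If both challenges $c=0$ and $c=1$ admit accepting responses $(\bz_0,\bt_0)$ and $(\bz_1,\bt_1)$, then subtracting the two verification equations gives $(\bz_1-\bz_0)^\top\bA + (\bt_1-\bt_0)^\top = \bb \pmod q$, with $\|\bz_1-\bz_0\|, \|\bt_1-\bt_0\| \le 2\sqrt{u+v}\sigma^\Key \le B^{\Key\star}_\bs,\, B^{\Key\star}_\be$ by the parameter choice, contradicting $(\bA,\bb)\notin \LL^\Key_{sound}$. Hence at most one challenge can lead to acceptance, which is the bad challenge.

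For \textbf{CRS indistinguishability}, Lemma~\ref{theorem-invert-lwe} guarantees that the $\bA$ produced by $\TrapGen$ is within $2^{-v}$ statistical distance of uniform, matching the output of $\Gen$. For \textbf{BadChallenge correctness}, I use the injectivity statement of Lemma~\ref{theorem-invert-lwe}: because $\sqrt{u+v}\sigma^\Key = O(p/\sqrt{v\log p}) = O(q/(p\sqrt{v\log q}))$ is well below the $O(q/\sqrt{v\log q})$ threshold, for each $c$ there is at most one preimage $(\bz_c,\bt_c)$ of $\bd+c\bb$ within the accepting norm bound, and $\Invert$ returns exactly this preimage when it exists. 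Thus when $c$ admits no valid prover response, either $\Invert$ fails or returns a pair violating the norm bound, in which case the algorithm outputs $1-c$ as the (unique) potential good challenge, making $c$ the bad challenge. The main subtlety will be bookkeeping: I will need to verify that the parameter inequalities are simultaneously compatible with $q=p^2$ and the LWE inversion regime, and to treat cleanly the corner case in which neither challenge has an accepting response (in which case the output of $\BadChallenge$ is immaterial for soundly instantiating Fiat-Shamir, as noted after the protocol).
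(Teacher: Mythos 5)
Your proposal is correct and follows essentially the same route as the paper's proof in Appendix C.1: the identical algebraic correctness check combined with Lemma~\ref{lemma-rejection-sampling} and Lemma~\ref{lemma-bound} for the norm bound, the same simulator that programs $\bd$ from a freshly sampled $(\bz\,\|\,\bt)$, the same subtraction argument for special soundness, statistical closeness of $\TrapGen$'s matrix for CRS indistinguishability, and the uniqueness of LWE inversion from Lemma~\ref{theorem-invert-lwe} to justify $\BadChallenge$ (you phrase this via injectivity where the paper argues by contradiction, and your labeling of which challenge is the ``bad'' one is inverted relative to the paper's convention, but the substance is identical). No gaps.
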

\begin{proof}[Proof Sketch]
\textcolor{blue}{It is straightforward to see that the construction satisfies correctness. The zero-knowledge property is done by considering a simulator that samples $(\bz~||~\bt)$ from $D_{\mathsf{Z}^{u+v},\sigma^\Key}$and computes $\bd=\bz^\top\cdot\bA_\bt-c\cdot \bb \pmod{q}$ and we will prove that the simulated transcript is statistically close to the real transcript via Lemma \ref{lemma-rejection-sampling}. CRS indistinguishability follows from Lemma \ref{theorem-invert-lwe}. Special soundness is easily seen by proving that there cannot be two valid responses $\bz_c,\bt_c$ for all $c\in \{0,1\}$, which leads to contradiction. Finally, to prove correctness of the $\BadChallenge$ function, we first prove that it must return a bit in $\{0,1\}$ and this implies that there is some $c$ such that $(\bz_c,\bt_c)=\Invert(\bA,\bT,\bd+c\cdot \bb)$, and $||(\bz_c~||~\bt_c)|| > \sqrt{u+v}\cdot \sigma^\Key$. We prove that, when $c=0$, then there is no valid response when the challenge is not equal to~$1$ and similarly for $c=1$, thus $\BadChallenge$ has returned the correct output}. The full security proof is presented in Appendix \ref{appendix-proof-of-key-generation}. 
\end{proof}

\noindent \textbf{Parallel Repetition.} The above scheme has a soundness error $1/2$. To achieve negligible soundness error, we should run the scheme in parallel as follows: First, compute the first messages $(\msg_{1i})_{i=1}^\lambda=(\br_i~||~\bff_i)_{i=1}^\lambda$, then run the CIHF to obtain challenges $(c_i)_{i=1}^\lambda$. Finally, compute the response  $(\msg_{2i})_{i=1}^\lambda=(\bz_i~||~\bt_i)_{i=1}^\lambda$ and returns it with probability at least $(1-2^{-100})/M$ by Lemma \ref{lemma-rejection-sampling}.  This modification requires $\sigma^\Key>\sqrt{\log(\lambda(u+v))} \sqrt{\lambda((B^\Key_{\be})^2+(B^\Key_{\bs})^2)}$ and $M=e^{1/\log(\lambda( (u+v))+12/\log^2 (\lambda(u+v))}=1-1/\text{poly}(u)$. Hence, when using the NIZK compiler from the parallel repetition, we only need to recompute the first messages $O(1)$ times to receive valid responses. Note that the bounds $B^{\Key\star}_\be,B^{\Key\star}_\bs$ are still taken to be $\sqrt{u+v}\cdot \sigma^\Key$, which is unchanged. This is because from Lemma \ref{lemma-rejection-sampling}, the second message is statistically close to the vector $(\msg'_{21}~||~\msg'_{22}~||~\dots~||~\msg'_{2\lambda}) \leftarrow D_{\mathbb{Z}^{\lambda(u+v)},\sigma^\Key}$ and is returned with probability $1/M$ Next, we can use Lemma \ref{lemma-nextbound} to prove that, the probability that all of $\msg'_{2i}=(\bz_i~||~\bt_i) \in \mathbb{Z}^{u+v}$ has norm at most $\sqrt{u+v}\cdot \sigma^\Key$ is at least $1-\lambda\cdot 2^{-(u+v)}$ by union bound.

\subsection{Trapdoor \texorpdfstring{$\Sigma$}\texorpdfstring{-}Protocol for Correct Sharing}\label{subsection-nizk-for-sharing}
We now design an NIZK for correct sharing using the encryption of \cite{ACPS09}, \textit{assuming that the public keys of participants are correctly generated}. At a high level, given $(\bA,\bb_i) \in \LL^\Key_{sound}$ for all $i \leq n$, the encryptor encrypts $n$ scalar $\bmm=(m_i)_{i=1}^n$ (again, recall that we denote $m_i$ instead of $s_i$ to represent them as plaintexts) into $(\bc_{1i},\bc_{2i})_{i=1}^n$ and needs to prove that they are valid encryptions $(m_i)_{i=1}^n$. This happens if and only if $\bc_{1i}=\bA\cdot \br_i \pmod{q}$ and
$\bc_{2i}=\bb\cdot \br_i+e_i+p\cdot m_i \pmod{q}~\forall~1\leq i\leq n$ for some small vectors $\br$ and values $e_i$. In addition, these scalars must satisfy $\bmm \in \LL^\SSS_{n,t}$. Denote the bound  of $||\br||$ as $B^\Enc_\br$ and $|e|$ as $B^\Enc_e$ (for concreteness, we have $B^\Enc_\br=\sqrt{u}\cdot r$ and $B^\Enc_e=\sqrt{v} \cdot \beta q$ where $r,\beta$ is given in Subsection \ref{subsection-pke}. But for now, we only need to know that $B^\Enc_\br$ and $B^\Enc_e$ are appropriately determined). The above conditions are all the necessary and sufficient conditions for $\LL^\Enc_{zk}$. 

Now, to define $\LL^\Enc_{sound}$,
 we easily see that for some sufficient large $B^{\Enc\star}_f<p/2$,  $(m_i,f_i)=\PKE.\Dec(\bA,\bb_i,\bs_i,(\bc_{1i},\bc_{2i}))$ for some $f_i \in \mathcal{W}^\Dec_{zk}$ iff it holds that $\bc_{2i}-\bs_i^\top\cdot \bc_{1i}=p\cdot m_i+f_i \pmod{q}$ and $|f_i|<B^{\Enc\star}_f$. The condition $\bmm \in \LL^\SSS_{n,t}$ remains the same in $\LL^\Enc_{sound}$. Finally, we need to define  $\mathcal{W}^\Dec_{zk}$. We define it as  $\mathcal{W}^\Dec_{zk}=\{f \in \mathbb{Z}~|~|f|<B^{\Enc\star}_f\}$ for some $B^{\Enc\star}_f>B^\Enc_r\cdot B^{\Key\star}_\be+B^\Enc_e=\sqrt{u}r\cdot B^{\Key\star}\be+\sqrt{v} \beta q$ and  $B^{\Enc\star}_f<p/2$, to be specified later. With this choice, we have $\mathcal{W}^\Dec \subseteq \mathcal{W}^\Dec_{zk}$ where $\mathcal{W}^\Dec$ is in Definition \ref{subsection-pke}. With the observations above, the languages $\LL^{\Enc}=(\LL^\Enc_{zk},\LL^\Enc_{sound})$ for correct sharing are described below. 
\begin{equation*}
\begin{aligned}
\LL^{\Enc}_{zk}=&\{(\bA,n,t,(\bb_i,\bc_{1i},\bc_{2i})_{i=1}^n) ~|~ \exists~ \bs_i \in \mathbb{Z}^v,\be_i \in \mathbb{Z}^u, (\br_i~||~e_i)_{i=1}^n\in (\mathbb{Z}^{u+1}),\bmm \in \mathbb{Z}_p^n ~:~\\
&\bb_i=\bs_i^\top\cdot \bA+\be_i^\top \pmod{q},~||\bs_i||<B^{\Key\star}_{\bs},~ ||\be_i||<B^{\Key\star}_{\be}~\forall~1\leq i\leq n \\
&\land~\bc_{1i}=\bA\cdot \br_i \pmod{q},~\bc_{2i}=\bb\cdot \br_i+e_i+p\cdot m_i \pmod{q}~\forall~1\leq i\leq n\\
&\land~|e_i| \leq B^\Enc_{e},~||\br_i|| \leq B^\Enc_{\br}~\forall~1\leq i\leq n~\land~\bmm \in \LL^\SSS_{n,t}\},
\end{aligned}
\end{equation*}
\begin{equation*}
\begin{aligned}
\LL^{\Enc}_{sound}=&\{(\bA,n,t,(\bb_i,\bc_{1i},\bc_{2i})_{i=1}^n)~|~\exists~ \bs_i \in \mathbb{Z}^v,~\be_i \in \mathbb{Z}^u, (f_i)_{i=1}^n \in \mathbb{Z},~\bmm \in \mathbb{Z}_p^n ~:~ \\ &\bb_i=\bs_i^\top\cdot \bA+\be_i^\top \pmod{q},~||\bs_i||<B^{\Key\star}_{\bs},~ ||\be_i||<B^{\Key\star}_{\be}~\forall~1\leq i\leq n \\
&\land~ \bc_{2i}-\bs^\top\cdot \bc_{1i}=p\cdot m_i+f_i \pmod{q}~\forall~1 \leq i \leq n\\
& \land ~ |f_i| \leq B^{\Enc\star}_f
~\forall~1 \leq i \leq n 
\land~\bmm \in \LL^\SSS_{n,t}\}.
\end{aligned}
\end{equation*}

\noindent For the choice of $B^{\Enc\star}_f$ above so that  $\mathcal{W}^\Dec \subseteq \mathcal{W}^\Dec_{zk}$, we see that $\LL^\Enc_{zk} \subseteq \LL^\Enc_{sound}$ and the relation  $\LL^\Enc$ is consistent with the one in Section \ref{section-generic-pvss} when instantiated with the scheme of \cite{ACPS09}. Now, we assume that  $(\bA,\bb_i) \in \LL^\Key_{sound}$ already for all $i \leq n$, which is reasonable. Indeed,
for participants passed $\PVSS.\Key\Verify$ of Subsection~\ref{construction-generic-pvss}, we are convinced that there exist $\bs_i,\be_i$ such that $\be_i$ has small norm and $\bb_i=\bs_i^\top\cdot \bA+\be_i^\top \pmod{q}$, even for dishonest participants\footnote{This fits our definition of $\LL^\Enc_{sound}$ in Section \ref{section-generic-pvss} and the security definition in Figure \ref{figure-game-verifiability}.}. So the existence of $\bs_i,\be_i$ is given ``for free'', and the dealer only needs to prove that his instance satisfies \textit{the latter conditions} in $\LL^\Enc_{zk}$. Finally, for checking $\bmm \in \LL^\SSS_{n,t}$, then recall in Subsection \ref{section-special-ss} we can simply use a parity check matrix $\bH^{t}_n$ and check whether $\bmm^\top\cdot  \bH^{t}_n=0 \pmod{p}$. For the bad challenge function, we use the trapdoor $\bT$ to extract $\bs_i,\be_i$ given that $(\bA,\bb_i) \in \LL^\Enc_{sound}$. Consider the parameters $(B^\Enc_e,B^{\Enc\star}_f,B^\Enc_\br,\sigma^\Enc)$ such that $\sigma^\Enc>\sqrt{n\cdot(B^\Enc_e)^2+(B^\Enc_\br)^2}\cdot \sqrt{\log(nu+n)}$ and $2\sigma^\Enc \cdot \sqrt{u+1}\cdot (B^{\Key\star}_\be+1)<B^{\Enc\star}_f < p/2$ where $B^{\Key\star}_\be$ is the bound of $||\be||$. The trapdoor $\Sigma$-protocol is as follows.

\begin{itemize}
\item $\Trap\Sigma.\Gen_{\pp}(1^\lambda):$ Choose moduli $p,q=p^2$, dimensions $u,v$. Return $(p,q,u,v)$.
\item $\Trap\Sigma.\Gen(\pp,\LL^\Enc):$  This algorithm takes input the description of the language $\LL^\Enc$, specified by a matrix $\bA \uniformly \mathbb{Z}_q^{v \times u}$ and the bounds $B^\Enc_e,B^{\Enc\star}_f,$ $B^\Enc_\br$. Return $\crs=(\bA,p,q,u,v,B^\Enc_e,B^{\Enc\star}_f,$ $B^\Enc_\br,\sigma^\Enc)$.
\item $\Trap\Sigma.\TrapGen(\pp,\LL^\Enc,\bT):$ This algorithm takes input the description of the language $\LL^\Enc$, specified by a matrix $\bA$, a trapdoor $\bT$ such that $(\bA,\bT) \leftarrow \TrapGen(1^\lambda,v,u)$ (see Lemma \ref{theorem-invert-lwe}) and the bounds $B^\Enc_e,B^{\Enc\star}_f,$ $B^\Enc_\br$. Return $(\crs,\tr)=((\bA,p,q,u,v,B^\Enc_e,B^{\Enc\star}_f,B^\Enc_\br,\sigma^\Enc),\bT)$. 
\item $\Trap\Sigma.\Prove \langle \PP(\crs,\bx,\bw), \VV(\crs,\bx) \rangle:$ Given $(\bA,\bb_i) \in \LL^\Key_{sound}$ for all $i\leq n$. For $\bx=(\bA,n,t,(\bb_i,\bc_{1i},\bc_{2i})_{i=1}^n)$ and witness $(\br_i,e_i,m_i)_{i=1}^n$ of $\LL^\Enc_{zk}$ , the two parties construct the matrix $\bH^{t}_n$ and interact as follows.
\begin{enumerate}
\item $\PP$ samples a vector $(\bv~||~\bk) \leftarrow D_{\mathbb{Z}^{nu+n},\sigma^\Enc}$ and a random vector $~\bu=(u_1~||~\dots~||~u_n) \in \mathbb{Z}_p^n$ conditioning to $\bu^\top\cdot\bH^{t}_n=\bzero \pmod{p}$. $\PP$ then parses $\bv=( \bv_1~||~\bv_2~||~\dots~||~\bv_n)$, $\bk=(k_1~||~k_2~||~\dots~||~k_n)$ where $\bv_i \in \mathbb{Z}_q^u$. Finally, $\PP$ provides $\ba_{1i}=\bA\cdot \bv_i \pmod{q},~ \ba_{2i}=\bb_i\cdot \bv_i+k_i+p\cdot u_i \pmod{q}$ to $\VV$ for all $1 \leq i \leq n$.
\item $\VV$ samples a challenge $c \uniform \{0,1\}$ and sends $c$ to $\PP$.
\item $\PP$ provides the value $\bz_i=\bv_i+c \cdot \br_i,$ $h_i=k_i+c\cdot e_i,~$ $t_i=u_i+c\cdot m_i \pmod{p}$ to $\VV$ for all $1 \leq i \leq n$ with probability $1-K$, where $K=\text{min}\left(\frac{D_{\mathbb{Z}^{nu+n},\sigma^\Enc}(\bv~||~\bk)}{M\cdot D_{\mathbb{Z}^{nu+n},\sigma^\Enc,c\cdot (\br~||~\be') }(\bv~||~\bk)},1\right)$, $\be'=(e_1~||~e_2~||~\dots~||~e_n)$ and $M=e^{1/\log (nu+n)+12/\log^2 (nu+n)}$ ($e$ here is the base of the natural logarithm).
\item $\VV$ checks if $t_i \in \mathbb{Z}_p$, $\bA\cdot \bz_i=\ba_{1i}+c\cdot \bc_{1i} \pmod{q},~$ $\bb_i\cdot \bz_i+h_i+p\cdot t_i=\ba_{2i}+c\cdot \bc_{2i} \pmod{q}$. It also checks if $||(\bz_i~||~h_i)|| \leq \sigma^\Enc\cdot \sqrt{u+1}$ for all $1 \leq i \leq n$. Finally, check if $\bt^\top\cdot\bH^{t}_n=\bzero \pmod{p}$. Accept iff all checks pass.
\end{enumerate}
\item $\Trap\Sigma.\BadChallenge(\crs,n,t,(\bb_i,\bc_{1i},\bc_{2i})_{i=1}^n,(\ba_{1i},\ba_{2i})_{i=1}^n,\bT):$ Do as follows.
\begin{enumerate}
\item For each $1 \leq i\leq n$, compute $(\bs_i,\be_i)=\Invert(\bA,\bT,\bb_i)$ and $\bH^{t}_n$. 
\item While $0 \leq c \leq 1$, do the following:
\begin{itemize}
\item While $1\leq i \leq n$:
\begin{itemize}
\item Compute $f_{ci}=\ba_{2i}+c\cdot\bc_{2i}-\bs^\top\cdot (\ba_{1i}+c\cdot \bc_{1i}) \pmod{p}$ and cast $f_{ci}$ as integer in $[-(p-1)/2,(p-1)/2]$. $\bt_{ci}=(\ba_{2i}+c\cdot\bc_{2i}-\bs^\top\cdot (\ba_{1i}+c\cdot \bc_{1i})-f_{ci})/p \pmod{p}$. 
\item If $||f_{ci}||> B^{\Enc\star}_f/2$  return $1-c$.
\end{itemize}
\item  If  $\bt_c^\top\cdot\bH^{t}_n \neq \bzero \pmod{p}$, then also return $1-c$.
\end{itemize}
\item Otherwise, return $\perp$.
\end{enumerate}
\end{itemize}
Sampling the vector $\bu$ can be done via choosing a random polynomial $p(X) \in \mathbb{Z}_p[X]$ of degree $t$, then computing $p(i)=u_i$ for all $1 \leq i \leq n$. Below we state the theorem, assuming that $(\bA,\bb_i) \in \LL^\Key_{sound}$ already for all $i$, which is reasonable for the PVSS setting when participants have proved the validity of $\bb_i$.

\begin{theorem}\label{theorem-nizk-for-sharingg}
Consider $(B^\Enc_e,B^{\Enc\star}_f,B^\Enc_\br,\sigma^\Enc)$ specified above, and suppose that $(\bA,\bb_i) \in \LL^{\Key}_{sound}$ for all $1 \leq i \leq n$ and consider $\bs_i,\be_i$ such that $((\bA,\bb_i),(\bs_i,\be_i)) \in \RR^{\Key}_{sound}$. Then the construction is a trapdoor $\Sigma$-protocol for $\LL^{\Enc}=(\LL^{\Enc}_{zk},\LL^{\Enc}_{sound})$.
\end{theorem}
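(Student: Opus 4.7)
The plan is to verify the four required properties of a trapdoor $\Sigma$-protocol in order: CRS indistinguishability, correctness, special zero-knowledge, and correctness of $\BadChallenge$ (which encapsulates the special soundness condition). CRS indistinguishability is immediate from Lemma~\ref{theorem-invert-lwe}, since the matrix $\bA$ produced by $\TrapGen$ is statistically $2^{-v}$-close to uniform, matching the distribution used by $\Gen$. For correctness, given an honest witness I would verify the three algebraic checks by direct substitution: $\bA\bz_i=\ba_{1i}+c\bc_{1i}$ and $\bb_i\bz_i+h_i+p\,t_i=\ba_{2i}+c\bc_{2i}$ follow from the definitions of $\bz_i,h_i,t_i$; the parity check $\bt^\top\bH^{t}_n=\bzero \pmod p$ uses $\bt=\bu+c\bm$ with both $\bu$ and $\bm$ in the kernel of $\bH^{t}_n$; finally, the norm bound $\|(\bz_i\|h_i)\|\le\sigma^\Enc\sqrt{u+1}$ follows by applying rejection sampling (Lemma~\ref{lemma-rejection-sampling}), whose hypothesis is met thanks to the choice $\sigma^\Enc>\sqrt{n(B^\Enc_e)^2+(B^\Enc_\br)^2}\sqrt{\log(nu+n)}$, together with Lemma~\ref{lemma-nextbound} to bound norms of slices of the resulting discrete Gaussian.

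For special zero-knowledge, the simulator takes $(\crs,\bx,c)$, samples $(\bz_i\|h_i)\leftarrow D_{\mathbb{Z}^{u+1},\sigma^\Enc}$ and then $\bt\in\mathbb{Z}_p^n$ uniformly from $\{\bt:\bt^\top\bH^t_n=\bzero\}$, and programs the first message as $\ba_{1i}=\bA\bz_i-c\bc_{1i}$ and $\ba_{2i}=\bb_i\bz_i+h_i+p\,t_i-c\bc_{2i}$. Rejection sampling guarantees that the real responses are within negligible statistical distance of discrete Gaussians, while for $\bm\in\LL^\SSS_{n,t}$ the real $\bt=\bu+c\bm$ is uniform on the kernel of $\bH^t_n$, matching the simulator exactly.

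The main obstacle is showing that $\BadChallenge$ meets the trapdoor requirement, which I will prove by establishing the contrapositive of special soundness and then matching it to the algorithm. Assume both $c=0$ and $c=1$ admit accepting second messages $(\bz^c_i,h^c_i,t_{ci})$; subtracting the two verification equations yields $\br^\star_i=\bz^1_i-\bz^0_i$, $e^\star_i=h^1_i-h^0_i$, $m^\star_i=t_{1i}-t_{0i}$ with $\bA\br^\star_i=\bc_{1i}$ and $\bb_i\br^\star_i+e^\star_i+p\,m^\star_i=\bc_{2i}$, and $\bm^\star=\bt_1-\bt_0\in\LL^\SSS_{n,t}$. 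Substituting $\bb_i=\bs_i^\top\bA+\be_i^\top$ gives $\bc_{2i}-\bs_i^\top\bc_{1i}=\be_i^\top\br^\star_i+e^\star_i+p\,m^\star_i \pmod q$, and the slack condition $2\sigma^\Enc\sqrt{u+1}(B^{\Key\star}_\be+1)<B^{\Enc\star}_f<p/2$ ensures $|\be_i^\top\br^\star_i+e^\star_i|<B^{\Enc\star}_f<p/2$, so this is precisely the honest decomposition and witnesses membership of $\bx$ in $\LL^\Enc_{sound}$.

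To match this with the algorithm, I will show that whenever $c$ is an accepting challenge, the quantities $f_{ci}$ and $\bt_{ci}$ computed by $\BadChallenge$ satisfy $f_{ci}=\be_i^\top\bz^c_i+h^c_i$ and $\bt_{ci}=t_{ci}$: indeed $(\ba_{2i}+c\bc_{2i})-\bs_i^\top(\ba_{1i}+c\bc_{1i})=\be_i^\top\bz^c_i+h^c_i+p\,t_{ci}$, and the integer $\be_i^\top\bz^c_i+h^c_i$ has magnitude at most $(B^{\Key\star}_\be+1)\sigma^\Enc\sqrt{u+1}<B^{\Enc\star}_f/2<p/4$, so casting modulo $p$ to $[-(p-1)/2,(p-1)/2]$ leaves it unchanged. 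Therefore neither of $\BadChallenge$'s two tests ($|f_{ci}|>B^{\Enc\star}_f/2$ or $\bt_c^\top\bH^t_n\ne\bzero$) can fire for an accepting $c$, so $\BadChallenge$ never returns $1-c$ when $c$ is accepting. Conversely, if $\bx\not\in\LL^\Enc_{sound}$ yet both $c=0$ and $c=1$ passed every test inside $\BadChallenge$, then computing $(f_{1i}-f_{0i})+p(\bt_{1i}-\bt_{0i})=\bc_{2i}-\bs_i^\top\bc_{1i}\pmod q$ with the bound $|f_{1i}-f_{0i}|\le B^{\Enc\star}_f<p/2$ identifies the honest decryption of each ciphertext, and $(\bt_1-\bt_0)^\top\bH^t_n=\bzero$ certifies the Shamir condition, contradicting $\bx\not\in\LL^\Enc_{sound}$. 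Hence at least one $c\in\{0,1\}$ is flagged, and $\BadChallenge$ outputs the unique challenge that could possibly be accepting, which is exactly what the trapdoor property demands.
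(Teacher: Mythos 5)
Your proposal is correct and follows essentially the same approach as the paper's proof: CRS indistinguishability via Lemma~\ref{theorem-invert-lwe}, correctness and zero-knowledge via the rejection-sampling lemma (Lemma~\ref{lemma-rejection-sampling}) together with the norm bound of Lemma~\ref{lemma-nextbound}, special soundness by subtracting the two accepting verification equations and bounding $|\be_i^\top(\bz_i^1-\bz_i^0)+(h_i^1-h_i^0)|$ below $B^{\Enc\star}_f$, and $\BadChallenge$ correctness via the trapdoor extraction of $(\bs_i,\be_i)$ and the $p$-adic decomposition $\ba_{2i}+c\bc_{2i}-\bs_i^\top(\ba_{1i}+c\bc_{1i})=f_{ci}+p\,\bt_{ci}$. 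You merely reorganize the last two steps into a single unified argument rather than treating them separately, but the underlying reasoning is identical.
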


\begin{proof}[Proof Sketch] \textcolor{blue}{The general proof idea for proving zero-knowledge, special soundness, CRS indistinguishability and $\BadChallenge$ correctness is identical to Theorem \ref{theorem-nizk-for-key-generation}, thus we omit it here.} The full security proof is presented in Appendix \ref{appendix-proof-of-sharing}.
\end{proof}

\noindent \textbf{Parallel Repetition.} The scheme has soundness error $1/2$. To achieve negligible soundness error, we use the same parallel repetition strategy mentioned in the trapdoor $\Sigma$-protocol for key generation. This modification requires $\sigma^\Enc>\sqrt{\lambda(n\cdot (B^\Enc_e)^2+(B^\Enc_\br)^2)}$ and $M=1/e^{1/\log(\lambda(nu+n))+12/\log^2(\lambda(nu+n))}= 1-1/\text{poly}(u)$. Similarly, when using the NIZK compiler from the parallel repetition, we only need to recompute the first messages $O(1)$ times. The bound $B^{\Enc\star}_f$ is unchanged.

\subsection{Trapdoor \texorpdfstring{$\Sigma$}\texorpdfstring{-}Protocol for Correct Decryption}\label{subsection-nizk-for-decryption}

Finally, we design a trapdoor $\Sigma$-protocol for the correct decryption. Recall that in the decryption process of \cite{ACPS09}, for $B_f^\Dec<p/2$, we have $(m,f)=\PKE.\Dec(\bA,\bb,\bs,(\bc_1,\bc_2))$ with $|f|<B_f^\Dec$ if and only if $\bc_2-\bs^\top\cdot \bc_1=p\cdot m+f \pmod{q}$ with $|f|< B_f^\Dec$. We show how to define $\LL^\Dec$ that is consistent with the one in Section \ref{section-generic-pvss}.

First, defining $\LL^\Dec_{zk}$ is natural: We need three conditions: $\bc_2-\bs^\top\cdot \bc_1=p\cdot m+f \pmod{q}$, $\bb=\bs^\top\cdot\bA+\be^\top \pmod{q}$, and $|f|,||\bs||,||\be||$ are small. Thus $(\bs,\be,f)$ are the witnesses to prove that $(\bA,\bb,(\bc_1,\bc_2),m) \in \LL^\Dec_{zk}$ and we can define the set $\mathcal{W}^\Dec_{zk}$ in Section \ref{section-generic-pvss} to be the set of all $f$ s.t. $|f|<B^\Dec_f$ \footnote{In previous subsection, we defined $\mathcal{W}^\Dec_{zk}$ the set of all $f$ s.t. $|f|<B^{\Enc\star}_f$. For now, we try to view the two NIZKs as independent protocols. Thus, we need two separate bounds $B^{\Enc\star}_f$ and $B^\Dec_f$. When we integrate them into a PVSS later, we set $B^{\Enc\star}_f=B^\Dec_f$.}. For $\LL^\Dec_{sound}$, we define  $\mathcal{W}^\Dec_{sound}$ to be the set of all $f$ s.t. $|f|<B^{\Dec\star}_f$ for some $B^{\Dec\star}_f>B^{\Dec}_f$ determined later. Thus the language $\LL^{\Dec}=(\LL^\Dec_{zk},\LL^\Dec_{sound})$ for correct decryption is described below.
\begin{equation*}
\begin{aligned}
\LL^\Dec_{zk}=&\{(\bA,\bb,(\bc_1,\bc_2),m)~|~\exists~\bs \in \mathbb{Z}^v,~\be \in \mathbb{Z}^u,~f \in \mathbb{Z}~:~\\
&\land~\bb=\bs^\top\cdot \bA+\be^\top \pmod{q},~\bc_2-p\cdot m=\bs^\top\cdot \bc_1+f \pmod{q}\\
&\land~|f| \leq B^\Dec_f,~||\bs|| \leq B^\Dec_{\bs},~||\be|| \leq B^\Dec_{\be}\},
\end{aligned}
\end{equation*}
\begin{equation*}
\begin{aligned}
\LL^\Dec_{sound}=&\{(\bA,\bb,(\bc_1,\bc_2),m)~|~\exists~\bs \in \mathbb{Z}^v,~\be \in \mathbb{Z}^u,~f \in \mathbb{Z}~:~\\
&\land~\bb=\bs^\top\cdot \bA+\be^\top \pmod{q},~\bc_2-p\cdot m=\bs^\top\cdot \bc_1+f \pmod{q}\\
&\land~|f| \leq B^{\Dec\star}_f,~||\bs|| \leq B^{\Dec\star}_{\bs},~||\be|| \leq B^{\Dec\star}_{\be}\},
\end{aligned}
\end{equation*}

\noindent We see that $\LL^{\Dec}_{zk} \subseteq \LL^{\Dec}_{sound}$. Consider  $(B^\Dec_\bs,B^\Dec_\be,B^\Dec_f,B^{\Dec\star}_\bs,B^{\Dec\star}_\be,B^{\Dec\star}_f,\sigma^\Dec)$ s.t. $\sigma^\Dec>\sqrt{(B^\Dec_f)^2+(B^\Dec_\bs)^2+(B^\Dec_{\be})^2}\cdot \sqrt{\log (u+v+1)}$ and $~2\sqrt{u+v+1}\cdot \sigma^\Dec=B^{\Dec\star}_{\be}=B^{\Dec\star}_{\bs}=B^{\Dec\star}_f=O(p/\sqrt{v \log p})$. The trapdoor $\Sigma$-protocol is as follows.
\begin{itemize}
\item $\Trap\Sigma.\Gen_{\pp}(1^\lambda):$ Choose moduli $p,q=p^2$, dimensions $u,v$. Return $(p,q,u,v)$.
\item $\Trap\Sigma.\Gen(\pp,\LL^\Dec):$  This algorithm takes input the description of the language $\LL^\Dec$, specified by a matrix $\bA \uniformly \mathbb{Z}_q^{v \times u}$ and the bounds $B^\Dec_\bs,$ $B^\Dec_\be,$ $B^\Dec_f,$ $B^{\Dec\star}_\bs,$ $B^{\Dec\star}_\be,B^{\Dec\star}_f$. Return $\crs=(\bA,p,q,u,v,B^\Dec_\bs,$ $B^\Dec_\be,$ $B^\Dec_f,$ $B^{\Dec\star}_\bs,$ $B^{\Dec\star}_\be,B^{\Dec\star}_f,\sigma^\Dec)$.
\item $\Trap\Sigma.\TrapGen(\pp,\LL^\Dec,\bT):$ This algorithm takes input the description of the language $\LL^\Dec$, specified by a matrix $\bA$, a trapdoor $\bT$ such that $(\bA,\bT) \leftarrow \TrapGen(1^\lambda,v,u)$ (see Lemma \ref{theorem-invert-lwe}) and the bounds $B^\Dec_\bs,$ $B^\Dec_\be,$ $B^\Dec_f,$ $B^{\Dec\star}_\bs,$ $B^{\Dec\star}_\be,B^{\Dec\star}_f$. Return $(\crs,\tr)=((\bA,p,q,u,v,B^\Dec_\bs,B^\Dec_\be,B^\Dec_f,B^{\Dec\star}_\bs,B^{\Dec\star}_\be,B^{\Dec\star}_f,\sigma^\Dec),\bT)$. 
\item $\Trap\Sigma.\Prove \langle \PP(\crs,\bx,\bw),\VV(\crs,\bx) \rangle$: For a statement $\bx=(\bA,\bb,(\bc_1,\bc_2),m)$ and witness $(\bs,\be,f)$ of $\LL^\Dec_{zk}$, the two parties interact as follows.
\begin{enumerate}
\item $\PP$ samples the vectors $(\br~ ||~\bk~||~k) \leftarrow D_{\mathbb{Z}^{v+u+1},\sigma^\Dec}$, and provides the values $\bd=\br^\top\cdot \bA+\bk^\top$ and $h=\br^\top\cdot \bc_1+k$ to $\VV$. 
\item $\VV$ samples a challenge $c \uniform \{0,1\}$ and sends $c$ to $\PP$.
\item $\PP$ sends the value $\bz=\br+c\cdot \bs \pmod{q},~\bt=\bk+c\cdot \be,~t=k+c\cdot f$ to $\VV$ with probability $1-K$, where $K=\text{min}\left(\frac{D_{\mathbb{Z}^{v+u+1},\sigma^\Dec}(\bk~||~k)}{M\cdot D_{\mathbb{Z}^{v+u+1},\sigma^\Dec,c\cdot (\bs~||~\be~||~f) }(\bk~||~k)},1\right)$ where $M=e^{1/\log (v+u+1)+12/\log^2 (v+u+1)}$.
\item $\VV$ checks whether $\bz^\top\cdot \bA+\bt^\top=\bd+c\cdot \bb \pmod{q}$ and $\bz^\top\cdot \bc_1+t=h+c\cdot (\bc_2-p\cdot m) \pmod{q}$ and finally checks if $||(\bz~||\bt~||~t)|| \leq \sqrt{v+u+1}\cdot \sigma^\Dec$.
\end{enumerate}
\item $\Trap\Sigma.\BadChallenge(\crs,(\bb,
\bc_1,\bc_2,m),(\bd,h),\bT):$ While $0\leq c\leq 1$, do:
\begin{enumerate}
\item Compute $(\bz_c,\bt_c)=\Invert(\bA,\bT,\bd+c\cdot \bb)$ and $t_c=h+c \cdot (\bc_2-p\cdot m)-\bz_c^\top \cdot \bc_1 \pmod{p}$. Then cast $t_c$ as an integer in $[-(p-1)/2,(p-1)/2]$.

\item If $||\bz_c||>B^{\Dec^\star}_\bz/2$ or $||\bt_c||>B^{\Dec^\star}_\be/2$ or $||t_c||>B^{\Dec^\star}_f/2$, then return $1-c$.

\end{enumerate}
Otherwise, return $\perp$. 
\end{itemize}

\begin{theorem}\label{theorem-nizk-for-decryption}
Consider $(B^\Dec_\bs,B^\Dec_\be,B^\Dec_f,B^{\Dec\star}_\bs,B^{\Dec\star}_\be,B^{\Dec\star}_f,\sigma^\Dec)$ specified above. Then the construction is a trapdoor $\Sigma$-protocol for  $\LL^{\Dec}=(\LL^{\Dec}_{zk},\LL^{\Dec}_{sound})$.
\end{theorem}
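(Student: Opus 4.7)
The plan is to verify the four properties of a trapdoor $\Sigma$-protocol in turn, mirroring the blueprint used for the key-generation protocol in Section \ref{subsection-nizk-for-key-generation} but carrying along the extra linear constraint $\bc_2 - p\cdot m = \bs^\top\bc_1 + f \pmod{q}$ that distinguishes $\LL^\Dec$ from $\LL^\Key$. CRS indistinguishability is immediate from Lemma \ref{theorem-invert-lwe}, since $\Trap\Sigma.\Gen$ samples $\bA$ uniformly while $\Trap\Sigma.\TrapGen$ produces $\bA$ statistically close to uniform with distance at most $2^{-v}$.

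For correctness, I would substitute the prover's values and check both verifier equations: $\bz^\top\bA+\bt^\top=(\br+c\bs)^\top\bA+(\bk+c\be)^\top=\bd+c\bb \pmod{q}$ and $\bz^\top\bc_1+t=(\br+c\bs)^\top\bc_1+(k+cf)=h+c(\bc_2-p\cdot m) \pmod{q}$. By the choice $\sigma^\Dec=\omega(\sqrt{(B^\Dec_f)^2+(B^\Dec_\bs)^2+(B^\Dec_\be)^2}\cdot\sqrt{\log(v+u+1)})$, Lemma \ref{lemma-rejection-sampling} guarantees that the distribution of the released $(\bz~||~\bt~||~t)$ is within negligible statistical distance of $D_{\mathbb{Z}^{v+u+1},\sigma^\Dec}$, and Lemma \ref{lemma-bound} then yields $\|(\bz~||~\bt~||~t)\|\leq\sqrt{v+u+1}\cdot\sigma^\Dec$ with overwhelming probability.

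For special soundness and correctness of $\BadChallenge$, suppose an instance $\bx=(\bA,\bb,(\bc_1,\bc_2),m)\notin\LL^\Dec_{sound}$ admits accepting responses $(\bz_0,\bt_0,t_0)$ and $(\bz_1,\bt_1,t_1)$ to both challenges $c=0$ and $c=1$ sharing the same first message $(\bd,h)$. Subtracting the two verification equations yields $\bs^\star:=\bz_1-\bz_0,\;\be^\star:=\bt_1-\bt_0,\;f^\star:=t_1-t_0$ with $\bb=(\bs^\star)^\top\bA+(\be^\star)^\top \pmod{q}$ and $\bc_2-p\cdot m=(\bs^\star)^\top\bc_1+f^\star \pmod{q}$. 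The triangle inequality bounds each of $\|\bs^\star\|,\|\be^\star\|,|f^\star|$ by $2\sqrt{v+u+1}\cdot\sigma^\Dec=B^{\Dec\star}_\bs=B^{\Dec\star}_\be=B^{\Dec\star}_f$, placing $\bx$ in $\LL^\Dec_{sound}$, a contradiction. Thus at most one challenge can be bad; when it exists, for the \emph{non-}bad challenge $c$ the unique small inverse $(\bz_c,\bt_c)$ produced by $\Invert(\bA,\bT,\bd+c\bb)$ from Lemma \ref{theorem-invert-lwe} together with $t_c$ defined as in the algorithm must exceed one of the bounds $B^{\Dec\star}_\bs/2,B^{\Dec\star}_\be/2,B^{\Dec\star}_f/2$ (otherwise they would form a valid response for $c$), so $\BadChallenge$ correctly returns $1-c$, the truly bad challenge. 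Finally, for special zero-knowledge the simulator, given $(\crs,\bx,c)$, samples $(\bz,\bt,t)\leftarrow D_{\mathbb{Z}^{v+u+1},\sigma^\Dec}$, sets $\bd:=\bz^\top\bA+\bt^\top-c\bb \pmod{q}$ and $h:=\bz^\top\bc_1+t-c(\bc_2-p\cdot m) \pmod{q}$, and outputs $(\bd,h,c,\bz,\bt,t)$; Lemma \ref{lemma-rejection-sampling} shows this is within negligible statistical distance of a real transcript.

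The main obstacle is the special-soundness step, because the extracted $\bs^\star$ must \emph{simultaneously} witness both the LWE relation for $\bb$ and the decryption relation for $\bc_2 - p\cdot m$. Fortunately, since the same $\bz$ appears in both verifier equations, a single subtraction produces a common $\bs^\star$ that serves both roles, and the fact that the parameters are fixed so that $B^{\Dec\star}_\bs=B^{\Dec\star}_\be=B^{\Dec\star}_f=2\sqrt{v+u+1}\cdot\sigma^\Dec<p/2$ ensures the resulting witness norms are within the soundness bounds and that $f^\star$ remains well-defined modulo $q$ without wraparound, completing the argument.
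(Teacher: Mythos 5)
Your proposal follows the same overall blueprint as the paper: CRS indistinguishability from Lemma~\ref{theorem-invert-lwe}, correctness and zero-knowledge from the rejection-sampling Lemma~\ref{lemma-rejection-sampling} together with Lemma~\ref{lemma-bound}, special soundness by subtracting two accepting transcripts, and $\BadChallenge$ correctness via the uniqueness of $\Invert$. However, there is a gap in the $\BadChallenge$ reasoning. You write that the extraction for the non-bad challenge $c$ ``must exceed one of the bounds $B^{\Dec\star}_\bs/2,B^{\Dec\star}_\be/2,B^{\Dec\star}_f/2$ (otherwise they would form a valid response for $c$).'' That reverse implication is not correct: the verifier checks the \emph{combined} norm $\|(\bz\,\|\,\bt\,\|\,t)\|\le\sqrt{v+u+1}\cdot\sigma^\Dec=B^{\Dec\star}_\bs/2$, while $\BadChallenge$ tests each block $\|\bz_c\|,\|\bt_c\|,|t_c|$ against $B^{\Dec\star}/2$ \emph{separately}. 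Having all three blocks below $B^{\Dec\star}/2$ does not guarantee the combined norm is below $B^{\Dec\star}/2$, so the extracted tuple need not pass the verifier. Consequently you cannot conclude that the non-bad challenge's extraction is large, and you never rule out the possibility that both extractions are block-wise small and $\BadChallenge$ outputs $\perp$ --- which it must not do for $\bx\notin\LL^\Dec_{sound}$, since a trapdoor $\Sigma$-protocol requires $\BadChallenge$ to coincide with the bad-challenge function $f$ and hence output a bit.

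The paper closes this gap without ever invoking the reverse implication. For the $\perp$ case it argues directly on the extracted values: if for both $c\in\{0,1\}$ the quantities $(\bz_c,\bt_c,t_c)$ are block-wise smaller than $B^{\Dec\star}/2$, then $\bs^\star=\bz_1-\bz_0$, $\be^\star=\bt_1-\bt_0$, $f^\star=t_1-t_0$ have norms below $B^{\Dec\star}_\bs,B^{\Dec\star}_\be,B^{\Dec\star}_f$ and already witness $\bx\in\LL^\Dec_{sound}$, a contradiction --- so $\perp$ never occurs. For the remaining direction it uses only the sound implication you also state: if $c$ admitted an accepting response, uniqueness of the LWE inverse forces $(\bz,\bt)=(\bz_c,\bt_c)$, and an accepting response has small combined norm and hence small blocks, contradicting the large extraction that triggered the return of $1-c$. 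Your sketch should replace the ``otherwise they would form a valid response'' step with this pair of arguments to be complete.
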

\begin{proof}[Proof Sketch]
\textcolor{blue}{Again, the general proof idea is identical to the one in Theorem \ref{theorem-nizk-for-key-generation}, thus we omit it here.} The security proof is presented in Appendix \ref{appendix-proof-of-decryption}.
\end{proof}

\noindent \textbf{Parallel Repetition.} Similar to previous sections, to achieve negligible soundness error, we use the parallel repetition strategy. This modification requires $\sigma^\Dec>$ $\sqrt{\log(\lambda(u+v+1))} \sqrt{\lambda((B^\Dec_{\be})^2+(B^\Dec_{f})^2+(B^\Dec_{\bs})^2)}$ and $M=e^{1/\log(\lambda( (u+v+1))+12/\log^2 (\lambda(u+v+1))}$ $=1-1/\text{poly}(u)$. When using the NIZK, we only need to recompute the first messages in $O(1)$ times. The bound $B^{\Dec\star}_\bs,B^{\Dec\star}_\be,$ $B^{\Dec\star}_f$ are unchanged.

\subsection{Final Step: From Trapdoor \texorpdfstring{$\Sigma$}\texorpdfstring{-}Protocols to NIZKs}\label{subsection-compiler}
Finally, we need the protocols above to be non-interactive for the PVSS. Fortunately, there exist compilers such as~\cite{LNPT20,CX23} that transform any $\Sigma$-protocols into multi-theorem NIZKs. Among them, the compiler of \cite{CX23} relies on the decisional Diffie-Hellman assumption, thus it is not post-quantum secure. The compiler of \cite{LNPT20} is post-quantum secure as its components can be instantiated from plain LWE. The compiler, however, is a \textit{statistical} zero-knowledge NIZK, and such statistical NIZKs cannot provide adaptive soundness under falsifiable assumptions (see \cite{AF07,Pass13}). However, as pointed out by \cite[Appendix A]{LNPT20}, it is possible to bypass this result by considering languages having trapdoors that allow efficiently checking whether an element is in $\LL_{sound}$. The compiler of \cite{LNPT20} provides the \textit{simulation soundness} property for all such languages via the following theorem.  \begin{theorem}[\cite{LNPT20}, Theorems 3.4 and 4.5]\label{nizk-compiler}
For a language $\LL=(\LL_{zk},\LL_{sound})$ with a trapdoor that allows efficiently checking element in $\LL_{sound}$, assume the existence of a trapdoor $\Sigma$-protocol for $\LL$.
Then, there exists a compiler that transforms the trapdoor $\Sigma$-protocol into a simulation soundness NIZK in the CRS model for $\LL$. 
\end{theorem}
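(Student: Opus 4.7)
The plan is to Fiat-Shamir-collapse the trapdoor $\Sigma$-protocol using a correlation intractable hash function (CIHF) buildable from plain LWE in the style of~\cite{PS19}, together with a dual-mode CRS that separates a \emph{soundness mode} from a \emph{zero-knowledge mode}. The NIZK CRS bundles (i) the $\Sigma$-protocol CRS $\crs$, (ii) a CIHF key $k$, and (iii) auxiliary material (e.g., a lossy/trapdoor-extractable commitment or encryption) used only by the simulator. The prover runs $\Trap\Sigma.\Prove$ to obtain $\msg_1$, derives $c \leftarrow H_k(\crs, x, \msg_1)$, computes $\msg_2$, and outputs $\pi = (\msg_1, \msg_2)$; the verifier recomputes $c$ and accepts iff $\VV(\crs, x, \msg_1, c, \msg_2) = 1$. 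In the binary-challenge setting of Sections~\ref{subsection-nizk-for-key-generation}--\ref{subsection-nizk-for-decryption}, the Fiat-Shamir collapse is applied to the $\lambda$-fold parallel repetition already described there.

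For adaptive/simulation soundness, I would first use CRS indistinguishability of the trapdoor $\Sigma$-protocol to replace $\crs$ by $(\crs, \tr) \leftarrow \Trap\Sigma.\TrapGen$, at negligible loss. In this hybrid, $\BadChallenge(\crs, x, \msg_1, \tr)$ is efficient, and because $\LL_{sound}$ admits an efficient membership test (the hypothesis), the relation $R = \{((x, \msg_1), c) : x \notin \LL_{sound} \wedge c = \BadChallenge(\crs, x, \msg_1, \tr)\}$ is efficiently searchable. Correlation intractability of $H_k$ with respect to $R$ then rules out any PPT adversary producing $(x, \pi)$ with $x \notin \LL_{sound}$ such that the verifier accepts, since the special soundness property of $\Trap\Sigma$ forces the verifier-accepting challenge to coincide with $\BadChallenge(\crs, x, \msg_1, \tr)$.

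For adaptive multi-theorem zero-knowledge, I would build a simulator $(\Simulator_{\crs}, \Simulator_{\pi})$ that samples an alternate CRS in which it can program $H_k$ at chosen points (this is the role of the auxiliary component). Given such programmability, $\Simulator_{\pi}$ picks $c$ at random, invokes the special zero-knowledge simulator of the $\Sigma$-protocol to obtain $(\msg_1, c, \msg_2)$ for $x \in \LL_{zk}$, and programs $H_k(\crs, x, \msg_1) = c$. Indistinguishability of the two CRS modes combined with special zero-knowledge gives multi-theorem ZK. Simulation soundness then reduces to the same CIHF argument: the simulator's programmed outputs do not help the adversary forge a \emph{fresh} accepting proof on some $x \notin \LL_{sound}$, because the reduction can still detect (and only charges) fresh queries thanks to the efficient $\LL_{sound}$ test.

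The main obstacle is the tension between statistical zero-knowledge and adaptive soundness identified by~\cite{AF07,Pass13}: these properties cannot coexist under falsifiable assumptions for arbitrary NP languages. The trapdoor-language hypothesis is precisely what bypasses this impossibility — the efficient $\LL_{sound}$ membership check lets the soundness reduction distinguish true from false statements in polynomial time, which is exactly what CIHF-based Fiat-Shamir requires. The remaining delicate point is realising a CIHF whose correlation-intractable relation covers the (potentially complex) $\BadChallenge$ circuit; this is handled by the~\cite{PS19} construction, which supports all efficiently searchable relations under plain LWE, and plugging it into the dual-mode framework of~\cite{LNPT20} yields the claimed compiler.
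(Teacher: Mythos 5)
This statement is not proved in the paper at all: it is imported verbatim as Theorems~3.4 and~4.5 of \cite{LNPT20}, so there is no in-paper argument to compare yours against. Judged on its own merits, your sketch correctly reconstructs the soundness half of the \cite{LNPT20} compiler: Fiat--Shamir via a correlation-intractable hash for efficiently searchable relations \cite{PS19}, a hybrid switching to $\Trap\Sigma.\TrapGen$ so that $\BadChallenge$ is computable, and the efficient $\LL_{sound}$ membership test both making the bad-challenge relation searchable and evading the \cite{AF07,Pass13} barrier. That part is essentially the intended argument.

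Two points in your write-up do not match how the compiler actually works. First, the zero-knowledge simulation does not ``program $H_k$ at chosen points'' --- correlation-intractable hash functions from LWE are plain keyed functions with no programmability. The mechanism in \cite{CCHLRRW19,LNPT20} is that the prover publishes an \emph{equivocable (trapdoor) commitment} to the first message $\msg_1$ and the challenge is derived by hashing that commitment; the simulator commits first, derives $c$, runs the special-ZK simulator of $\Trap\Sigma$ on $c$ to obtain $(\msg_1,\msg_2)$, and then uses the commitment trapdoor to open to $\msg_1$. Without this, your simulator cannot produce a proof consistent with an unprogrammable hash. Second, you conflate adaptive soundness with \emph{simulation} soundness. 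The claim that ``the simulator's programmed outputs do not help the adversary forge a fresh accepting proof because the reduction can detect fresh queries'' is not a proof: after seeing simulated (equivocated) proofs, the CI argument no longer applies directly, because the simulated transcripts are not honestly generated with respect to the hash. The compiler of \cite{LNPT20} obtains unbounded simulation soundness by an additional layer --- the first message is encrypted under an $R$-lossy public-key encryption scheme whose tag is the verification key of a strongly unforgeable one-time signature, so that simulated proofs live in lossy mode while any fresh adversarial proof uses an injective (extractable) tag. That tag-based machinery is the actual content of the simulation-soundness claim and is missing from your proposal.
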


 In Appendix \ref{appendix-adaptive-soundness}, we will prove that simulation soundness also implies adaptive soundness for such languages. Also, fortunately, \textit{all our languages $\LL^\Key,\LL^\Enc,\LL^\Dec$ have the trapdoor} $\bT$ for recognizing whether a statement is valid or not with probability $1$ as follows.
 \begin{itemize}
 \item For $\LL^{\Key}_{sound}$, using $\bT$, we can extract $\bs,\be$ from $\bb$ s.t. $\bb=\bs^\top\cdot \bA+\be \pmod{q}$ and thus $(\bA,\bb) \in \LL^\Key_{sound}$ iff $||\bs|| \leq B^{\Key\star}_{\bs}, ||\be|| \leq B^{\Key\star}_{\be}$.
 \item For $\LL^{\Enc}_{sound}$, using $\bT$, we can extract $\bs_i$ and compute $(m_i,f_i)_{i=1}^n$ and thus, $(\bA,(\bb_i,\bc_{1i},\bc_{2i})_{i=1}^n) \in \LL^\Enc_{sound}$ iff $|f_i| \leq B^{\Enc\star}_f$ and $\bmm^\top\cdot \bH^{t}_n=\bzero \pmod{p}$.
 \item For $\LL^\Dec_{sound},$ using $\bT$, similarly we can extract $\bs,\be,f$ and thus $(\bA,\bb,\bc_1,\bc_2,m) \in \LL^\Dec_{sound}$ iff $|f|<B^{\Dec\star}_f, ||\bs|| \leq B^{\Dec\star}_{\bs}, ||\be|| \leq B^{\Dec\star}_{\be}$.
 \end{itemize}
 
 Thus, they are trapdoor languages. Consequently, there exists an NIZK satisfying the adaptive soundness and adaptive multi-theorem zero-knowledge property for the languages $\LL^\Key,\LL^\Enc,\LL^\Dec$ in Subsections  \ref{subsection-nizk-for-key-generation}, \ref{subsection-nizk-for-sharing}, \ref{subsection-nizk-for-decryption}, respectively.

\section{A Lattice-based PVSS Instantiation}\label{section-lattice-based-pvss}
This section will describe our lattice-based PVSS by combining the encryption scheme of \cite{ACPS09} and the concrete NIZKs in Section \ref{section-nizk}, and then plugging them into the generic construction in Section \ref{section-generic-pvss}. Note that the security of the PVSS is already implied by the generic construction in Section \ref{section-generic-pvss}. We then discuss the choice of parameters and analyze the complexities. While our PVSS works with any special secret sharing scheme, such as \cite{DI14,SARP22,ANP23}, we instantiate our scheme with the Shamir secret sharing scheme as it is the most commonly used scheme. For simplicity, whenever NIZK is needed, we use the same notation $(\NIZK_i)_{i=0}^2$ in Section \ref{section-generic-pvss} and refer the reader to their corresponding description in Section \ref{section-nizk}. 

\subsection{The Key Generation Phase}
We describe $\PVSS.\Setup,\PVSS.\Key\Gen,$ $\PVSS.\Key\Verify$. They are as follows.
\begin{itemize}
\item $\PVSS.\Setup(1^\lambda):$ Sample $\bA \uniform \mathbb{Z}_q^{v \times u}$ and the necessary public parameters~$(\crs_i)_{i=0}^2$ for the three NIZKs in the previous sections. Return $\pp=(\bA,p,q,u,v,\alpha,\beta,r,(\crs_i)_{i=0}^2)$, where $\alpha,\beta,r$ are parameters of the PKE. 
\item $\PVSS.\Key\Gen(\pp)$:  Given $\bA$, each $P_i$ proceeds as follows.
\begin{enumerate}
\item Sample two vectors $\bs_i \leftarrow D_{\mathbb{Z}^v,\alpha q},~\be_i \leftarrow D_{\mathbb{Z}^u,\alpha q}$. Repeat until $||\bs_i|| \leq \sqrt{v} \alpha q$ and $||\be_i|| \leq \sqrt{u} \alpha q$. Compute $\bb=\bs_i^\top\cdot \bA+\be^\top_i \pmod{q}$.
\item Compute the proof $\pi_i \leftarrow \NIZK_0.\Prove(\crs_0,(\bA,\bb_i),(\bs_i,\be_i))$ (from the trapdoor $\Sigma$-protocol in Subsection \ref{subsection-nizk-for-key-generation} and the compiler in Theorem \ref{nizk-compiler}). 
\item Return $((\bb_i,\bs_i),\pi_i)$.
\end{enumerate}

\item $\PVSS.\Key\Verify(\pp,\bb,\pi):$ The verifier execute $b=\NIZK_0.\Verify(\crs_0,(\bA,\bb),\pi)$ in Subsection \ref{subsection-nizk-for-key-generation} and any participant $P_i$ with $b_i=0$ is disqualified.
\end{itemize}
Note that, the value $(\bs,\be)$ is unique as long as $||\be|| \leq p/O(\sqrt{v \log p})$ by Theorem \ref{theorem-invert-lwe}. Hence,  by setting $B^{\Key\star}_{\be}=O(p \sqrt{v \log p})$, the PKE satisfies the requirement that there is \textit{at most one} $\bs$ such that $\PKE.\Key\Verify(\bA,\bb,\bs)=1$ in  Section \ref{section-generic-pvss}. Also, the NIZKs in Section \ref{section-nizk} require that $\bA$ must be generated with the trapdoor $\bT$ to achieve adaptive soundness so that we could use the security proof in Section \ref{section-generic-pvss} (we need the NIZKs to achieve adaptive soundness for the languages). Hence, we are supposed to modify the $\PKE.\Setup$ a bit by generating $\bA$ from $\TrapGen$ (we see that this modification preserves all the properties of the PKE we need in Section~\ref{section-generic-pvss} due to the distribution of $\bA$ is statistically close to a matrix $\bA$ generated by the PKE of \cite{ACPS09}). However, in the scheme above, we used the real setup algorithm by generating $\bA \uniformly \mathbb{Z}_q^{ v\times u}$ as we do not want the third party to keep any redundant trapdoor.  Fortunately, as proved in Subsection \ref{appendix-adaptive-soundness}, since the matrix $\bA$ in both cases is statistically close, all the properties of the PVSS are preserved when we generate~$\bA \uniformly \mathbb{Z}_q^{v \times u}$.

\subsection{The Sharing Phase}
This section describes $\PVSS.\Share,\PVSS.\Share\Verify$. They are as follows.
\begin{itemize}
\item $\PVSS.\Share(\pp,(\bb_i)_{i=1}^n,s,n',t):$  The dealer indexes the participants who passed the key verification process by $\{1,2\dots,n'\}$ and proceeds as follows.
\begin{enumerate}
\item Choose a random polynomial $p(X) \in \mathbb{Z}_p[X]$ of degree $t$ such that $p(0)=s$ and sets the $i$-th share as  $s_i=p(i) \pmod{p} \in \mathbb{Z}_p$ for all $1\leq i \leq n'$.
\item Sample $(\br_i)_{i=1}^n$ from $D_{\mathbb{Z^u},r}$, and $(e_i)_{i=1}^{n'}$ from $D_{\mathbb{Z},\beta q}$ (where $\beta$ is in Subsection \ref{subsection-pke}) and computes $(\bc_{1i},\bc_{2i})=(\bA\cdot \br_i \pmod{q},\bb_i\cdot \br_i+e_i+p\cdot s_i \pmod{q})$.  Repeat until $||\br_i||\leq \sqrt{u} r$ and $|e_i|\leq \sqrt{v} \beta q$ for all $1\leq i \leq n'$. 

\item Compute $\pi\leftarrow \NIZK_1.\Prove(\crs_1,(\bA,n',t,(\bb_i,\bc_{1i},\bc_{2i}))_{i=1}^{n'},(s_i,\br_i,$ $e_i)_{i=1}^{n'})$ (from the trapdoor $\Sigma$-protocol in Subsection \ref{subsection-nizk-for-sharing} and the NIZK compiler in Theorem \ref{nizk-compiler}).
\item Finally, return $((\bc_{1i},\bc_{2i})_{i=1}^{n'},\pi)$.
\end{enumerate}

\item $\PVSS.\Share\Verify(\pp,(\bb_i)_{i=1}^{n'},n',t,(\bc_{1i},\bc_{2i})_{i=1}^{n'},\pi):$ Compute $\NIZK_1.\Verify(\crs_1,$ $(\bA,n',t,$ $(\bb_i,\bc_{1i},\bc_{2i}))_{i=1}^{n'},\pi)$ as instructed in Subsection~\ref{subsection-nizk-for-sharing} and disqualify the dealer if the result is $0$.  
\end{itemize}
 Note that, in Subsection \ref{subsection-nizk-for-sharing},  even for dishonest dealers who have passed verification, it holds that $(\bc_{1i},\bc_{2i})_{i=1}^n$ are still valid encryption of a share vector $\bmm$ in the sense that: For some $B^{\Enc\star}_f<p/2$, if $(m_i,f_i)=\PKE.\Dec(\bA,\bb_i,\bs_i,(\bc_{1i},\bc_{2i}))$, then $|f_i|<B^{\Enc\star}_f$, and $\bmm \in \LL^\SSS$. By choosing $\mathcal{W}^{\Dec}_{zk}$ to be the set of all $f$ s.t. $|f|<B^{\Enc\star}_f<p/2$, then $\mathcal{W}^\Dec\subseteq \mathcal{W}^\Dec_{zk}$, which is what we need according to Subsection \ref{construction-generic-pvss}.

\subsection{The Reconstruction Phase}
This section describes $\PVSS.\Dec,\PVSS.\Combine$. They are as follows.
\begin{itemize}
\item $\PVSS.\Dec(\pp,\bb_i,(\bc_{1i},\bc_{2i}),\bs_i):$ Participant $P_i$ proceeds as follows.
\begin{enumerate}
\item Compute $f_i=\bc_{2i}-\bs^\top\cdot \bc_{1i} \pmod{q}$, then computes $f_i'=f_i \pmod{p}$ and $s_i=(f_i-f_i')/p \pmod{p}$ (we have $\bc_{2i}-\bs_i^\top\cdot \bc_{1i}=p\cdot s_i+f'_i \pmod{q}$). 
\item Compute $\pi_i\leftarrow \NIZK_2.\Prove(\crs_2,(\bA,\bb_i,(\bc_{1i},\bc_{2i}),s_i),(\bs_i,\be_i,f'_i))$ (from the trapdoor $\Sigma$-protocol in Subsection \ref{subsection-nizk-for-decryption} and the compiler in Theorem~\ref{nizk-compiler}). 
\item Return $(s_i,\pi_i)$.
\end{enumerate}
 %We will later prove that regardless of whether the participant is honest or not, this still implies that the value $s_i$ is indeed the correct message upon decrypting $(\bb_{1i},\bb_{2i})$ in the sense that $s_i=\PKE.\Dec(\bb_i,\sk_i,(\bb_{1i},\bb_{2i}))$, as proved by Theorem \ref{theorem-nizk-for-decryption}.
\item $\PVSS.\Dec\Verify(\pp,(\bb_i,(\bc_{1i},\bc_{2i}),s_i),\pi_i):$ Compute $\NIZK_2.\Verify(\crs_2,$ $(\bA,\bb_i,(\bc_{1i},$ $\bc_{2i}),s_i),\pi_i)$ as instructed in Subsection \ref{subsection-nizk-for-decryption}. 
\item $\PVSS.\Combine(\pp,S,(s_i)_{i \in S}):$ If $|S| \leq t$ return $\perp$.  Otherwise return $s=$ $\sum_{i \in S'} \lambda_{i,S'} s_i \pmod{p}$ where $\lambda_{i,S'}=\prod_{j \in S',j \neq i} j/(j-i) \pmod{p}$.
\end{itemize}

By setting $B^{\Enc\star}_f=B^\Dec_f$ and choosing $\mathcal{W}^\Dec_{sound}$ to be the set of all $f \in \mathbb{Z}$ such that $|f|<B^{\Dec\star}_f$ as in Subsection \ref{subsection-nizk-for-decryption} (in particular, $B^{\Dec}_f<B^{\Dec\star}_f$), then $\mathcal{W}^\Dec_{zk} \subseteq \mathcal{W}^\Dec_{sound}$, which satisfies what we need in Section \ref{section-generic-pvss}. Finally, we conclude that the PVSS is a secure PVSS in the CRS model according to   Section~\ref{section-generic-pvss} and Theorems \ref{theorem-nizk-for-key-generation},  \ref{theorem-nizk-for-sharingg}, \ref{theorem-nizk-for-decryption}, \ref{nizk-compiler}.

\subsection{Parameters Setting}\label{appendix-parameter-setting}
We now summarize the parameters when combining the encryption scheme and previous NIZKs into the unified PVSS. For the encryption scheme in \cite{ACPS09}, we have the parameters $(u,v,p,q,r,\alpha,B^\Key_\bs,B^\Key_\be,B^\Enc_{\be},B^\Enc_\br)$,  the encryption scheme is secure if we choose  $v=\Omega(\lambda)$, $\alpha q=\Omega(\sqrt{v})$, $B^\Key_{\bs}=\sqrt{v}\cdot \alpha q$, $B^\Key_{\be}=\sqrt{u}\cdot \alpha q$, and $B^\Enc_\br=\sqrt{u}\cdot r$ with $r=\omega(\log u)$, and $B^\Enc_e=r  u \alpha q $. In addition, we also choose $u=O(v \log q)$ for Lemma \ref{theorem-invert-lwe} to work. For the NIZK, recall that we have additional parameters $(B^{\Key\star}_\bs,B^{\Key\star}_\be,\sigma^\Key,B^{\Enc\star}_f,B^\Enc_\br,\sigma^\Enc,\textcolor{blue}{B^\Dec_\bs},B^\Dec_\be,B^\Dec_f,\textcolor{blue}{B^{\Dec\star}_\bs},B^{\Dec\star}_\be,B^{\Dec\star}_f,\sigma^\Dec)$ for key generation, sharing and decryption. 

For key generation, we use parallel repetition $\lambda$ times of trapdoor $\Sigma$-protocol and need $\sigma_\Key>\sqrt{\lambda((B^{\Key}_{\be})^2+(B^{\Key}_{\bs})^2)}\cdot \sqrt{\log (\lambda(u+v))}$ and $2 \sqrt{u+v}\cdot \sigma^\Key<\textcolor{blue}{B^{\Key\star}_{\bs}}=B^{\Key\star}_{\be}<O(p/\sqrt{v \log p})$ for the NIZK to work, as specified in Subsection \ref{subsection-nizk-for-key-generation}. \textcolor{blue}{By setting} $u=v \log p$, $\alpha q=O(\sqrt{v})$ \textcolor{blue}{and $v=O(\lambda)$ as above}, we see that \textcolor{blue}{$B^\Key_\bs=O(v), B^\Key_\be=O(v \sqrt{\log p})$}, thus the value $2 \sqrt{u+v}\cdot \sigma^\Key$ (and thus \textcolor{blue}{the minimal value of} $B^{\Key\star}_{\be}$) is at most $O(v^{2}\cdot (\log v)^{0.5}\cdot (\log p))$. Consequently, the minimal required modulus for the key generation process is $p= \tilde{O}(v^{2.5})$. 

Next, for correct sharing, we use parallel repetition $\lambda$ times of trapdoor $\Sigma$-protocol. In the worst case (some participants and dealer are dishonest), the parameters $(B^\Enc_e,B^{\Enc\star}_f,B^\Enc_\br,\sigma^\Enc)$ must satisfy $\sigma^\Enc>\sqrt{\lambda(n\cdot (B^\Enc_e)^2+(B^\Enc_\br)^2)}\cdot \sqrt{\log(\lambda(nu+n))}$ and $2\sigma^\Enc \cdot \sqrt{u+1}\cdot (B^{\Key\star}_\be+1)<B^{\Enc\star}_f\leq p/2$ (this also satisfies the required bound for the encryption correctness property in Subsection \ref{subsection-pke}) and $B^{\Key\star}_\be=O(p/\sqrt{v \log p})$. These bounds are specified in Subsection \ref{subsection-nizk-for-sharing}. \textcolor{blue}{By computing these bounds in terms of $v,p,n,r$ with $u=v\log p$ and $v=O(\lambda)$, we achieve the minimal bounds} $B^\Enc_e=O(\textcolor{blue}{r}v^{1.5} \cdot \log p)$, $B^\Enc_\br=O(\textcolor{blue}{r}\sqrt{v} \cdot (\log p)^{0.5})$, $B^{\Key\star}_\be=O(v^{2} \cdot (\log v)^{0.5} (\log p))$. \textcolor{blue}{By plugging these values to compute $\sigma^\Enc$,  we see that $\sigma^\Enc=O(r\cdot v^2\cdot n^{0.5}\cdot \log p (\log v)^{0.5} \sqrt{\log n})$. As a result, we achieve the minimal bound $B^{\Enc\star}_f=O(r\cdot v^{4.5}\cdot  n^{0.5}\cdot  (\log p)^{2.5} \log v \sqrt{\log n})$}.  \textcolor{blue}{Finally, since $r=\omega(\log u)$}, the minimal required modulus for the key sharing process is~$p= \tilde{O}(v^{4.5}\cdot n^{0.5})$.

Finally, for correct decryption, we use parallel repetition $\lambda$ times and require that $\sigma^\Dec>\sqrt{\lambda((B^\Dec_\bs)^2+(B^\Dec_f)^2+(B^\Dec_{\be})^2)}\cdot \sqrt{\log (\lambda(u+v+1))}$ and $\sqrt{u+v+1}\cdot\sigma^\Dec=\textcolor{blue}{B^{\Dec\star}_\bs}=B^{\Dec\star}_{\be}=B^{\Dec\star}_f=O(p/\sqrt{v \log p})$ according to Subsection \ref{subsection-nizk-for-decryption}, where $B_f^\Dec=B^{\Enc\star}_f$, $B_{\bs}^\Dec=B^{\Key\star}_\bs$ and $B_{\be}^\Dec=B^{\Key\star}_\be$.  This is because for an honest participant, when $(\bA,(\bb_i,\bc_{1i},\bc_{2i})_{i=1}^n) \in \LL^{\Enc}_{sound}$, then upon decryption, it receives a witness $f_i$ of norm at most $B^{\Enc\star}_f$, thus $B_f^\Dec=B^{\Enc\star}_f$. Also, the bound of $\be_i$ is simply $B^{\Key}_\be$ for honest participants, hence $B_{\be}^\Dec=B^{\Key}_\be$. For a dishonest participant, if it passes verification, then the bound of $\be_i$ and $f_i$ is equal to $B^{\Dec\star}_{\be}$ and $B^{\Dec\star}_f$ respectively, which implies correct decryption as long as they are smaller than $p/2$.  \textcolor{blue}{By computing these bounds in terms of $v,p,n,r$, we achieve the minimal bounds} $B^{\Dec}_\bs=B^{\Key\star}_{\bs}=B_{\be}^\Dec=B^{\Key\star}_\be=O(v^{2} \cdot (\log v)^{0.5} \log p)$, and  $B^{\Dec}_f=B^{\Enc\star}_{f}=O(\textcolor{blue}{r}\cdot v^{4.5} \cdot n^{0.5}\cdot  \log v (\log p)^{2.5} \sqrt{\log n})$. \textcolor{blue}{By plugging these values to compute $\sigma^\Dec$  we see that $\sigma^{\Dec}=O(r\cdot v^{5} \cdot n^{0.5}\cdot (\log v)^{1.5} (\log p)^{2.5} \sqrt{\log n})$. As a result, we achieve the minimal bound $B^{\Dec\star}_\bs=B^{\Dec\star}_\be=B^{\Dec\star}_f=O(r\cdot v^{5.5} \cdot n^{0.5}\cdot (\log v)^{1.5} (\log p)^{3} \sqrt{\log n})$}. Thus, the minimal required modulus for the decryption process is $p= \tilde{O}(v^{\textcolor{blue}{6}}\cdot n^{0.5})$.  

Also, note that if $B_\be^{\Key\star}<O(p/\sqrt{v \log p})$ then the LWE function is injective, according to Theorem \ref{theorem-invert-lwe}, thus for any $\bb$, there exists a unique $\bs$ such that $\PKE.\Key\Verify(\bA,\bb,\bs)=1$, according to our requirement. According to the analysis above, we can choose $p=\tilde{O}(v^{6}\cdot n^{0.5})$, hence  $q=\tilde{O}(v^{\textcolor{blue}{12}}\cdot n)$ to make the PVSS work (but one can choose $q$ to be sub-exponential in $v$ if large secret is required). \textcolor{blue}{As a result, we have specified all the parameters required in our PVSS.}

\subsection{Complexity Analysis}\label{section-complexity-analysis}
We analyze the communication and computation complexity of i) the trapdoor $\Sigma$-protocols and ii) the other operations of the PVSS, including computing the shares, encrypting, decrypting, and computing recovery coefficients. We refer to the former as the trapdoor $\Sigma$-protocol cost and the latter as other costs. We separate these costs because, while we can compute the exact cost of other operations in terms of $n,u,v,\log q$, we can only estimate the cost of the NIZK to be at least the cost of trapdoor $\Sigma$-protocols. This is because the actual NIZK cost depends on the compiler, and the compiler of \cite{LNPT20} requires many complicated components and additional parameters that make it hard to give a concrete cost. However, currently we can still estimate the NIZK compiler to be at least the cost of the trapdoor $\Sigma$-protocols, and the more efficient the trapdoor $\Sigma$-protocols (and the size of the instances), the more efficient the NIZKs.  The complexities are summarized in Table~\ref{table:summary}. We compare our work with the technique using Karp reduction in Appendix \ref{appendix-comparison-with-generic-sols}. For computation complexity, we need the following lemmas.
\begin{lemma}[\cite{TCZAPGD20}, Section $3$]\label{lemma-fast-computation}
Let $p$ be a prime number and let $a_1,a_2,\dots,a_n$ be distinct elements over $\mathbb{Z}_p$. Computing $\lambda_{i}=\prod_{i \neq n} a_j/(a_i-a_i) \pmod{p}$ can be done withing $O(n \log^2 n)$ arithmetic operations.
\end{lemma}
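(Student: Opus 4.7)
The plan is to reduce the task of computing all $n$ Lagrange coefficients to two well-known fast polynomial operations: building a subproduct tree and performing multi-point evaluation. Split each coefficient $\lambda_i = \prod_{j \neq i} a_j / (a_i - a_j)$ into its numerator $N_i = \prod_{j \neq i} a_j$ and denominator $D_i = \prod_{j \neq i}(a_i - a_j)$, and handle them separately.

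For the numerators, first compute the global product $A = \prod_{j=1}^n a_j$ in $O(n)$ multiplications. Then $N_i = A \cdot a_i^{-1} \pmod{p}$ for each $i$, so all $N_i$ together cost $O(n)$ multiplications plus $n$ inversions; the $n$ inversions can themselves be batched down to a single inversion and $O(n)$ multiplications via the standard Montgomery trick (in case any $a_i = 0$, that single index is handled separately as $N_i = \prod_{j \neq i} a_j$ directly, costing $O(n)$). Thus the numerators are essentially free compared to the target bound.

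The denominators are the interesting part. Observe that if $P(X) = \prod_{j=1}^n (X - a_j)$, then its formal derivative satisfies $P'(a_i) = \prod_{j \neq i}(a_i - a_j) = D_i$. So I would first build $P(X)$ using a subproduct tree: arrange the $n$ linear factors $(X - a_j)$ at the leaves of a balanced binary tree of depth $\lceil \log_2 n \rceil$ and pair-multiply upwards; at depth $k$ one performs $O(n/2^k)$ multiplications of polynomials of degree $O(2^k)$, each costing $\tilde{O}(2^k)$ via FFT-based multiplication, for a per-level cost of $\tilde{O}(n)$ and total cost $O(n \log^2 n)$. The derivative $P'(X)$, of degree $n-1$, is read off in $O(n)$ operations from the coefficients of $P(X)$.

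Next, evaluate $P'(X)$ at all $n$ points $a_1, \ldots, a_n$ simultaneously via fast multi-point evaluation. This is the classical algorithm that walks back down the same subproduct tree, at each node replacing the polynomial by its remainder modulo the two child subtree-products using fast polynomial division; the total cost is again $O(n \log^2 n)$. Once all $D_i$ are obtained, one further batched inversion gives all $D_i^{-1}$, and finally $\lambda_i = N_i \cdot D_i^{-1} \pmod{p}$ in $O(n)$ additional multiplications. Summing everything, the arithmetic complexity is $O(n \log^2 n)$ as claimed.

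The only genuinely nontrivial ingredient is the $O(n \log^2 n)$ bound for fast multi-point evaluation via the subproduct tree; rather than reprove this classical result, I would invoke it directly and cite \cite{TCZAPGD20} (whose Section 3 already organizes exactly this computation for Lagrange coefficients), so the proof essentially consists of noting the $P'(a_i) = D_i$ identity and appealing to the standard algorithm.
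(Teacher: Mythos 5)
Your proof is correct and follows exactly the standard subproduct-tree / multi-point-evaluation argument that the cited reference \cite{TCZAPGD20} (Section 3) uses; the paper itself does not reprove the lemma but defers entirely to that citation. Your reconstruction, including the reduction of the denominators to evaluating $P'(X)$ at the $a_i$ and the batched-inversion treatment of the numerators, is faithful to that source, so there is no discrepancy to report.
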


\begin{lemma}[\cite{KT73}, Theorem $4.3$]\label{lemma-fast-computation-part-two}
For a polynomial $p(X)$ of degree at most $n$, computing $(p(i))_{i=1}^n$ can be done within $O(n \log^2 n)$ arithmetic operations.
\end{lemma}

\noindent \textbf{Communication Complexity.} The detailed cost is as follows.

\begin{itemize}
\item In the key generation phase, each participant submits a public key of size $O( u \log q)$ (other costs) and a proof of size $O(\lambda\cdot u \log q)$ (trapdoor $\Sigma$-protocol cost). The total cost will be multiplied by $n$, since there are $n$ participants.
\item In the sharing phase, the dealer has to submit $n$ encryptions, each of size $O((v+1) \log q)$ (other costs). It also needs to submit a proof of size $O(\lambda (u+v) n \log q)$ (trapdoor $\Sigma$-protocol cost). The total cost will be multiplied by $n$.
\item In the reconstruction phase, each participant needs to submit the decrypted share of size $O(\log q)$ (other costs) and a proof of size $O(\lambda\cdot v \log q)$ (trapdoor $\Sigma$-protocol cost). The total cost will be multiplied by $n$.
\end{itemize}

\noindent \textbf{Computation Complexity.} The detailed cost is as follows.

\begin{itemize}
    \item In $\PVSS.\Key\Gen$, computing the public key requires $O(\lambda uv)$ operations (other costs). The cost of the proof is dominated by computing $\br^\top\bA+\bff^\top$ in $\lambda$ times, which requires $O(\lambda uv)$ arithmetic operations (trapdoor $\Sigma$-protocol cost). In $\PVSS.\Key\Verify$, verifying each proof requires $O(\lambda uv)$ cost (trapdoor $\Sigma$-protocol cost). Since we verify $n$ proofs, the cost is $O(n\lambda uv )$ (trapdoor $\Sigma$-protocol cost).
    
    \item In $\PVSS.\Share$, the cost is dominated by computing the shares (other costs) and NIZK first messages $(\ba_{1i},\ba_{2i})_{i=1}^n$ in $\lambda$ times (trapdoor $\Sigma$-protocol cost). For a polynomial $p(X)$ of degree at most $n$, computing $(p(i))_{i=1}^n$ requires $O(n \log^2 n)$ operations due to Lemma \ref{lemma-fast-computation-part-two}.  Computing the first message requires $O(\lambda(n^2+nuv))$ cost, dominated by the cost of sampling $\bu$ (which also requires $O(n\log^2 n)$ operations due to Lemma \ref{lemma-fast-computation-part-two}) and computing $\ba_{1i}=\bA\cdot \br_i \pmod{q}$.
    \item In $\PVSS.\Share\Verify$, the cost is dominated by~computing $\bA\cdot \bz_i \pmod{q}$ and $\bt^\top\cdot \bH^{t}_n\pmod{p}$ in $\lambda$ times. The former requires  $O(\lambda n\cdot uv)$ operations, while the latter trivially requires  $O(\lambda n^2)$ arithmetic operations. 
    \item Decrypting the shares requires  $O(v)$ arithmetic operations (other costs), and the computation cost of the proof is  $O(\lambda uv )$ (trapdoor $\Sigma$-protocol cost). The verification cost of in $\PVSS.\Dec\Verify$ is also $O(\lambda uv )$ (trapdoor $\Sigma$-protocol cost), however we need to verify $n$ proofs, so it will be multiplied by $n$.
    \item Finally, reconstructing the secret $s=\sum_{i \in S} \lambda_{i,S}\cdot  s_i \pmod{p}$ in $\PVSS.\Combine$ costs $O(n \log^2 n)$ arithmetic operations, dominated by computing $\lambda_{i,S}=\prod_{j \in S, j \neq i} j/(j-i) \pmod{p}$ due to Lemma \ref{lemma-fast-computation} (other cost only). 
\end{itemize}

\begin{table}
\centering
	\caption{\footnotesize{ Summary of complexities. $\Sigma$-Pcls refers to trapdoor $\Sigma$-protocol. We measure the computation complexity by the number of arithmetic operations (addition, multiplication, inversion, comparison) over $\mathbb{Z}_q$ and $\mathbb{Z}_p$ required of a single participant. Others refer to operations (computing shares, encryption, decryption, reconstructing shares) that are not proving and verifying trapdoor $\Sigma$-protocols.}}
	\label{table:summary}
    \vspace{0.2cm}
	\scalebox{0.72}{\begin{tabular}{lllll}
	\hline
	\hline
	& \textbf{Comm. (Others)} & \textbf{Comm. ($\Sigma$-Pcls)} & \textbf{Comp. (Others)}. &  \textbf{Comp. ($\Sigma$-Pcls)}.   \\
	\hline
	%\hline
    Key Generation & $O(n u \log q)$ & $O(\lambda n (u+v) \log q)$ & $O(u v )$ & $O(\lambda u v )$ \\
    Key Verification & N/A & N/A & N/A & $O(\lambda nu v )$ \\
    Sharing & $O( n v \log q)$  & $O(\lambda n (u+v) \log q)$ & $O(n \log^2 n+ n uv  )$ & $O(\lambda (n \log^2 n+ n uv))$ \\
Share Verification & N/A  & N/A & N/A & $O(\lambda (n^2+ n uv) )$ \\
    
    Share Decryption & $O(n \log q)$ & $O(\lambda n u \log q)$ & $O(v  )$ & $O(\lambda  u v )$ \\
Dec. Verification & N/A & N/A & N/A & $O(\lambda n u v )$ \\
    
    Reconstruction & N/A & N/A & $O(n \log^2 n )$ &  N/A \\
    
    Total & $O((nv+u) \log q)$ &  $O(\lambda n (u+v) \log q )$ & $O(n \log^2 n+nuv)$  & $O(\lambda (n^2+nuv))$   \\
	\hline
	\hline
	\end{tabular}}
\end{table}

\subsection{Final Remark on not Using Amortized Encryption}\label{subsection-final-remark}
 In \cite{GHL22}, the authors provided a PKE for multi-receiver in an attempt to reduce the communication complexity of encryption from $O(nv \log q)$ to $O((n+v) \log q)$ \footnote{This does not include the cost of the trapdoor $\Sigma$-protoocols, only the PKE.} and computation complexity from $O(nuv)$ to $O(uv+nu)$ operations. The idea is that, instead of generating $n$ values $\bc_{1i}=\bA\cdot \br_i \pmod{q}$, the authors only need one common value $\bc_1=\bA\cdot \br \pmod{q}$ and thus $\bc_{2i}=\bb_i\cdot \br+e'_i+p\cdot m  \pmod{q}$. However, it requires a modulus  $q=2^{\Omega(\lambda)}$ and the value $v$ also needs to be increased to $O(\lambda^{1+\epsilon})$ (for some $\epsilon>0$). Hence, while the number of scalars and the number of operation are reduced as above, the value $\log q$ is increased from $O(\log (\lambda n))$ to $O(\lambda)$, and the value $v$ is increased from $O(\lambda)$ to $O(\lambda^{1+\epsilon})$, meaning that the total (bit) complexities are not reduced, especially the communication complexity of trapdoor $\Sigma$-protocols and the computation complexity when $u=\Theta(v \log q)$ and the complexity of each arithmetic operation depends on $\log q.$ This will cause inefficiency when we only need to share small secrets \footnote{Indeed, the bit complexity of operations is $O(\log^2 q)$, see \cite[Subsection 2.4.4]{MV18}, and best optimizations \cite{H21} might not be lower than $\tilde{\Omega}(\log q)$. This implies that the time complexity will worsen if $q$ is increased to $2^{\Omega(\lambda)}.$ For example, the bit computation complexity of the PKE alone will be increased from $O(n\lambda^2 \log^2(n\lambda))$ to $O(n \lambda^{4+\epsilon})$ if we assume each operation takes $O(\log^2 q)$ bit complexity.}.

To allow  $q=\text{poly}(\lambda,n)$, the authors rely on strong assumptions and heuristics (For example, their parameters are not sufficient to imply decisional LWE security. However, the authors make a strong assumption that it would imply security. They also rely on inequalities based on heuristics. See Subsections $2.3$, $3.2$ and Appendix A.4 of their work). If we apply their techniques to get an amortized scheme with polynomial modulus, we need to rely on these strong assumptions. Hence, we choose the non-amortized scheme to ensure that the required modulus is still polynomial and only needs to rely on the standard decisional LWE assumption.

\section{Conclusion}
In this work, we proposed the first non-trivial lattice-based PVSS from standard assumptions. The NIZKs in our PVSS are proven in the CRS model and are \textit{not} the result of merely using the generic Karp reduction technique. Hence, our scheme is the \textit{most efficient solution} for i) post-quantum security and ii) achieving security in the standard model. Although the required modulus $q$ is polynomial in $n,v$, it is still quite large (see Subsection \ref{appendix-parameter-setting}). Hence, it would be desirable to think of optimizations for the trapdoor $\Sigma$-protocols to reduce the required size of $q$. We leave the work of optimization for the trapdoor $\Sigma$-protocols and the amortized encryption scheme with polynomial modulus to future work.

\vspace{0.75cm}

\noindent\textbf{CRediT authorship contribution statement.} The authors confirm contribution to the paper as follows: Conceptualization, Pham Nhat Minh, Khoa Nguyen, and Khuong Nguyen-An; methodology, Pham Nhat Minh, Khoa Nguyen, and Khuong Nguyen-An; formal analysis, Pham Nhat Minh, Khoa Nguyen, and Khuong Nguyen-An; writing-original draft preparation, Pham Nhat Minh, Khoa Nguyen, and Khuong Nguyen-An; writing-review and editing, Pham Nhat Minh, Khoa Nguyen, Willy Susilo, and Khuong Nguyen-An; supervision, Khoa Nguyen, Willy Susilo, Khuong Nguyen-An; project administration, Khuong Nguyen-An. All authors reviewed the results and approved the final version of the manuscript.\\

\noindent\textbf{Declaration of Competing Interest.} The authors declare that they have no known competing financial
interests or personal relationships that could have appeared to influence the work reported in this paper.\\

\noindent\textbf{Data availability.} No data was used for the research described in the article.\\

\noindent\textbf{Acknowledgement.} During the preparation of this work, Nhat-Minh Pham and Khuong Nguyen-An are supported by Ho Chi Minh City University of Technology (HCMUT), VNU-HCM under grant number T-KHMT-2024-02. 

\vspace{1cm}
\begin{center}
{\noindent \textbf{\large{Appendices}}}
\end{center}
\appendix
\renewcommand{\thesection}{\Alph{section}}

\section{Comparison with Generic Technique using Karp Reduction} \label{appendix-comparison-with-generic-sols}

 Our solution is better than the trivial trapdoor $\Sigma$-protocol solution of using Karp reduction. Indeed, by using the techniques of \cite{CCHLRRW19,PS19} for the same problem, We need to encode the LWE statement into a CNF-SAT circuit with $\Omega(u\cdot v\cdot \log^2 q)$ clauses and $\Omega(v\cdot  \log q)$ variables, and reducing to Graph Hamiltonicity language would require a graph with size $\Omega(u^3 \cdot \log q)$ \cite[Appendix F.2]{LNPT20}. Finally, representing the graph with an adjacent matrix would require $\Omega(u^6 \cdot \log^2 q)$ bits (Here, it is known that there is a \textcolor{blue}{(indirect) reduction from CNF-SAT with $a$ clauses, $b$ variables to graph Hamiltonicity through other problems}, and the resulting graph is estimated to have at least $\Omega(ab)$ vertices, see \cite[Chapter 10.5]{AHU74} or \cite{Cormen22}). Thus, both the communication and computation complexity for each iteration of the trapdoor $\Sigma$-protocol for LWE using Karp reduction cannot be lower than $\Omega(u^6 \cdot \log^2 q)$, because we need to at least encrypt the same number of bits and publish the encryption. Ours only needs $O((u+v) \log q)$ communication and $O(uv)$ computation complexity. 
 
 Similarly, if the generic Karp reduction technique is used for correct sharing, this would require a circuit of size $\Omega((n^2u^2v+n^3u)^2\log^6 q)=\Omega(n^4u^6 \log^4q+2n^5u^4 \log^5q+n^6u^2 \log^6 q)$ (Here the number of clauses is $O((n uv+n^2)\cdot  \log^2 q)$ and the number of variables is $O(n u\cdot  \log q)$. This is due to $\Omega(n)$ conditions, \textcolor{blue}{each having the form $\bc_1=\bA\cdot \br \pmod{q}$, $\bc_2=\bb\cdot \br+e+p\cdot m \pmod{q}$, and an additional condition  $\mathbf{m}^\top\cdot \bH_n=0 \pmod{p}$} that incurs $\Omega(n^2 \log^2 q)$ bit clauses), which is much worse compared to~$O( n(u+v) \log q)$ communication and $O(n^2+nuv)$ computation complexity of ours. Finally, the complexities of trapdoor $\Sigma$-protocols for decryption using Karp reduction would also be $\Omega(u^6 \cdot \log^2 q)$ as well. Hence, our solution would be better than generic solutions.

\section{Formal Security Proof of the Generic PVSS}\label{appendix-pvss-security}

\subsection{Proof of Theorem \ref{theorem-correctness}}\label{appendix-pvss-correctness}

\begin{proof}[Proof of Theorem \ref{theorem-correctness}]
Indeed, recall in Figure \ref{figure-game-correctness}, it suffices to prove the following.
\begin{itemize}
\item If $((\pk_i,\sk_i),\pi_i) \leftarrow \PVSS.\Key\Gen(\pp)$, then $\PVSS.\Key\Verify(\pp,\pk_i,\pi_i)=1$ with probability $1-\negl(\lambda)$.
\item If $(E,\pi) \leftarrow \PVSS.\Share(\pp,(\pk_i)_{i=1}^{n'},s,n',t)$, then $\PVSS.\Share\Verify(\pp,$ $(\pk_i)_{i=1}^{n'},n',$ $t,$ $E,\pi)=1$  with probability $1-\negl(\lambda)$.
\item If $(s_i,\pi_i) \leftarrow \PVSS.\Dec(\pp,\pk_i,E_i,\sk_i)$, $\PVSS.\Dec\Verify(\pp,\pk_i,E_i,s_i,\pi_i)=1$  with probability $1-\negl(\lambda)$.
\item Finally, if $s_1,s_2,\dots,s_{n'}$ is shared by the dealer, and for a subset $S$ of participants that passed $\PVSS.\Dec\Verify$, then the decrypted share must be $s_i$ for all $i \in S$ with probability $1-\negl(\lambda)$. In addition, $|S| \geq t+1$ and for any $S' \subseteq S$ with $|S'| \geq t+1$, $\PVSS.\Combine(\pp,S',(s_i)_{i \in S'})$ returns $s$.
\end{itemize}

For the first part, if $(\pk_i,\sk_i) \leftarrow \PVSS.\Key\Gen(\pp,r_i)$, then $(\pp',\pk_i) \in \LL^{\Key}_{zk}$ with some witness $(\sk_i,r_i)$. Since $\PVSS.\Key\Gen$ honestly produces the proof $\pi_i$ from $\NIZK_0.\Prove$ and $\PVSS.\Key\Verify$ executes $\NIZK_0.\Verify$, thus $\PVSS.\Key\Verify$ returns $1$ with probability $1-\negl(\lambda)$ for each $P_i$ with $i \not \in \CC$. 

Before the second part, we prove that $(\pp',\pk_i) \in \LL^\Key_{sound}$ for all $i \in G=[n']$ with probability $1-\negl(\lambda)$. It is trivial for $i \in G \setminus \CC$ (honest participants) due to the key correctness property ($\LL^\Key_{zk} \subseteq \LL^\Key_{sound}$, due to Definition \ref{key-correctness}). It suffices to consider $i \in G\cap \CC$. Indeed, for a fixed $i \in \CC$, consider the probability that  $(\pp',\pk_i) \not \in \LL^\Key_{sound}$ and $\PVSS.\Key\Verify$ returns $1$. Due to the adaptive soundness property, it is negligible. Hence the probability that there is some $i \in \CC$ s.t. $(\pp',\pk_i) \not \in \LL^\Key_{sound}$ and $\PVSS.\Key\Verify$ returns $1$ is negligible due to union bound. So, for all $i \in \CC$, it must hold that $\PVSS.\Key\Verify$ returns $0$ or $(\pp',\pk_i) \in \LL^\Key_{sound}$ with  probability~$1-\negl(\lambda)$. Since $\PVSS.\Key\Verify$ returns $1$ for exactly those $i \in G\cap \CC$, we conclude that $(\pp',\pk_i) \in \LL^\Key_{sound}$ for all those $i$ with probability $1-\negl(\lambda)$.

For the second part, if $(\pp',n',t,(\pk_i,E_i)_{i=1}^{n'})) \in \LL^{\Enc}_{zk}$ with some witness $(s_i,r_i)_{i=1}^{n'}$ and $\pi_i\leftarrow \NIZK_1.\Prove(\crs_1,(\pp',n',t,(\pk_i,E_i)_{i=1}^{n'}),(s_i,r_i)_{i=1}^{n'})$, then it holds that $\NIZK_1.\Verify(\crs_1,(\pp',n',t,(\pk_i,E_i))_{i=1}^{n'},\pi_i)=1$ due to the correctness of the NIZK. Since previously we have $(\pp',\pk_i) \in \LL^\Key_{sound}$ for all $i \in [n']$ and $\PVSS.\Share$ produces the encryption $E_i$ of $s_i$, we have $(\pp',n',t,(\pk_i,E_i)_{i=1}^{n'})) \in \LL^{\Enc}_{zk}$ with corresponding witness $(s_i,r_i)_{i=1}^{n'}$ \footnote{Again, recall that while we require $(\pp',\pk_i) \in \LL^\Key_{sound}$, it has been proved by $P_i$ already. We design our NIZK so that the dealer only needs the witnesses $(s_i,r_i)_{i=1}^{n'}$ to prove the remaining conditions. A concrete example is just in Subsection \ref{subsection-nizk-for-sharing}.}.  Finally, the algorithm provides correct proof~$\pi$ using $\NIZK_1.\Prove$ and $\PVSS.\Share\Verify$ simply executes $\NIZK_1.\Verify$, the returned output is $1$ with probability $1-\negl(\lambda)$.

For the third part, since $((\pp',\pk_i),(\sk_i,r_i)) \in \RR^\Key_{zk}$ and $\PVSS.\Dec$ outputs $(s_i,w_i)=\PKE.\Dec(\pp',E_i,\pk_i,\sk_i)$, we have $w_i \in \mathcal{W}^\Dec \subseteq \mathcal{W}^\Dec_{zk}$. So $(\pp',\pk_i,E_i,s_i) \in \LL^{\Dec}_{zk}$ with witness $(\sk_i,r_i,w_i)$. Finally, since $\PVSS.\Dec$ uses $\NIZK_2\Prove$ to produce $\pi_i$, while $\PVSS.\Dec\Verify$ executes $\NIZK_2.\Verify$, hence $\PVSS.\Dec\Verify$ returns $1$ with probability $1-\negl(\lambda)$ for each $i \not \in \CC$.

Finally, suppose $(s_1,s_2,\dots,s_{n'})$ are the shares of $s$ shared by the dealer. For honest participants, they would honestly generate $(\pk_i,\sk_i)\leftarrow \PVSS.\Key\Gen(\pp)$. When $E_i$ is the encryption of $s_i$, it holds that $(s_i,w_i)=\PKE.\Dec(\pp,\pk_i,\sk_i,E_i)$ for some $w_i \in \mathcal{W}^\Dec_{zk}$ (Definition \ref{definition-encryption-correctness}). Thus an honest participant would receive a correct $s_i$ s.t. $(\pp',\pk_i,E_i,s_i) \in \LL^\Dec_{zk}$ and makes $\PVSS.\Dec\Verify$ returns $1$ like analyzed above. Now, for each dishonest participant who publishes  $s'_i$ and has passed verification, then there exists some $\sk'_i$ s.t. $\PVSS.\Key\Verify(\pp',\pk_i,\sk'_i)=1$ and $(s'_i,w'_i)=\PKE.\Dec(\pp',\pk_i,\sk'_i,E_i)$ for some $w_i \in \mathcal{W}^\Dec_{sound}$ due to the soundness of the NIZK with probability $1-\negl(\lambda)$. Due to the uniqueness of $\sk'_i$, we have $\sk'_i=\sk_i$ and thus  $s'_i=s_i$. Therefore, participants who passed verification must have published the exact shares $s_i$'s. Note that the set $S$ consists of at least $n-t\geq t+1$ participants, and $\PVSS.\Combine$ returns the original secret $s$ for any set $S' \subseteq S$ of size $t+1$ due to the correctness property of $\SSS$. 

In conclusion, the probability that  $\GAME^{\mathsf{PVSS-Correctness}}(\Adversary,s)$ returns $0$ is negligible. Hence, the PVSS satisfies the correctness property, as desired.
\end{proof}

\subsection{Proof of Theorem \ref{theorem-verifiability}}\label{appendix-pvss-verifiability}

\begin{proof}[Proof of Theorem \ref{theorem-verifiability}]
First, $\LL^\SSS_{t}$ is a valid share language  (Definition \ref{definition-well-formedness}) due to the correctness $\SSS$: It is the union of all $(\LL^\SSS_{i,t})_{i=t+1}^n$ where $\LL^\SSS_{n,t}$ is the set of all $(s_i)_{i=1}^n$ s.t. $(s_i)_{i=1}^n\leftarrow \SSS.\Share(s,n,t)$. Thus for any set $S$ with size at least $t+1$, $\SSS.\Combine(S,(s_i)_{i \in S})=s$, so each $\LL^\SSS_{i,t}$ is a valid share set and also so is $\LL^\SSS_{t}$. Also, the fact that $\LL^\Key_{sound}$ has a unique witness is trivial due to our requirement of the PKE. We now focus on the last property. We see that, in Figure \ref{figure-game-verifiability}, if the game returns $1$, then at least one of the following events must happen:
\begin{itemize}
\item The adversary $\Adversary$ could produce some $(\pk_i,\pi_i)$ s.t. $(\pp',\pk_i) \not \in \LL^{\Key}_{sound}$ and $\PVSS.\Key\Verify$ returns $1$ for some $i \in \CC$.
\item It holds that $(\pp',\pk_i) \in \LL^\Key_{sound}$ for all $i \in G$ with corresponding secret key $\sk_i$ as witness, however $\Adversary$ could produce some $(E,\pi)$ s.t: If   $(s_i,w_i)=\PVSS.\Dec(\pp,\pk_i,\sk_i,E_i)$ for all $i \in S_1=[n']$, then $(s_1~||~s_2~||~\dots~||~s_{n'}) \not \in \LL^{\SSS}_{t}$, and $\PVSS.\Share\Verify(\pp',(\pk_i)_{i=1}^{n'},n',t,E,\pi)=1$. 
\item The value $s'_i$ revealed by $P_i$ is not equal to the honestly decrypted share $s_i$ but $\PVSS.\Dec\Verify(\pp,\pk_i,E_i,s'_i,\pi^2_i)=1$ for some $i \in\CC$. 
\item It could produce some $(E,\pi)$ s.t. $(\pp,(\pk_i,E_i)_{i=1}^{n'}) \in \LL^\Enc_{sound}$ s.t. $(s'_i,\pi_i)\leftarrow \PVSS.\Dec(\pp,\pk_i,\sk_i,E_i)$ but $\PVSS.\Dec\Verify(\pp,\pk_i,E_i,s'_i,\pi^2_i)=0$.

\end{itemize}

It suffices to prove that the probability that each event above could happen is negligible. Thus, the probability that at least one of them happens is also negligible.

First, consider a fixed $i \in \CC$. Due to the adaptive soundness definition of the NIZK, if $(\pp',\pk_i) \not \in \LL^\Key_{sound}$ then $\NIZK_0.\Verify$ returns $1$ with negligible probability. So by union bound over all $i \in \CC$, the first event happens with negligible probability.

Second, suppose $(E=(E_i)_{i=1}^{n'},\pi)$ is provided by the dealer and $s_i=\PVSS.\Dec(\pp,\pk_i,\sk_i,E_i)$ for some $\sk_i$ such that $((\pp',\pk_i),\sk_i) \in \RR^\Key_{sound}$. Due to the definition of $\LL^{\Enc}_{sound}$ and the soundness property of $\NIZK_1$, if $(\pp',(\pk_i,E_i)_{i=1}^{n'}) \not \in \LL^\Enc_{sound}$, the probability that $\NIZK_1.\Verify(\crs_1,(\pp',n,t,(E_i,\pk_i)_{i=1}^{n'}),\pi)=1$ and $(s_1~||~s_2~||~\dots~||~s_{n'}) \not \in \LL^{\SSS}_{n,t}$ (which is equivalent to $(s_1~||~s_2~||~\dots~||~s_{n'}) \not \in \LL^{\SSS}_{t}$ since $\LL^{\SSS}_{t}=\cup_{i=t+1}^n \LL^\SSS_{i,t}$) is negligible. 

Third, suppose $(s'_i,\pi_i)$ is provided by participant $P_i$ for decryption process. If~$s_i' \neq s_i$ and $\PVSS.\Dec\Verify$ returns $1$, there are two possible cases:
\begin{itemize}
\item There does not exist $\sk'_i$ such that $(s'_i,w_i)=\PKE.\Dec(\pp',\pk_i,\sk'_i,E_i)$ and $\PKE.\Verify(\pp',\pk_i,\sk'_i)=1$ for some $w_i \in \mathcal{W}^\Dec_{sound}$. Thus $(\pp',\pk_i,E_i,s'_i) \not \in \LL^\Dec_{sound}$, and the probability that $\NIZK_2.\Verify$ returns $1$ is  negligible.  Hence, in this case, $\PVSS.\Dec\Verify$ also returns $1$ with negligible probability. 
\item There exists $\sk'_i$ such that $(s'_i,w_i)=\PKE.\Dec(\pp',\pk_i,\sk'_i,E_i)$ and $\PKE.\Verify(\pp',\pk_i,\sk'_i)=1$ for some $w_i \in \mathcal{W}^\Dec_{sound}$, but $s'_i \neq s_i$. Recall that there is at most one $\sk'_i$ that makes $\PKE.\Key\Verify$ returns $1$, hence, $\sk'_i=\sk_i$ and thus $s'_i=s_i$. So, we reach a contradiction, and this case cannot happen.
\end{itemize}

Finally, suppose $(\pp',(E_i,\pk_i)_{i=1}^{n'}) \in \LL^\Enc_{sound}$. Then it holds that $(s_i,w_i)=\PKE.\Dec(\pp',\pk_i,\sk_i,E_i)$ for all $i \not \in \CC$ and $w_i \in \mathcal{W}^\Dec_{zk}$. Thus, $(\pp',\pk_i,E_i,s_i) \in \LL^\Dec_{zk}$ with suitable witnesses $(s_i,r_i,w_i)$. Since $\pi^2_i$ is provided by $\NIZK_2.\Prove$, the probability that $\NIZK_2.\Verify$ (and thus $\PVSS.\Dec\Verify$) returns $0$ is negligible.

Hence, all the events above happen with negligible probability, thus, the probability that $\Adversary$ could win is negligible due to union bound, as desired.\end{proof}

\subsection{Proof of Theorem \ref{theorem-privacy}}\label{appendix-pvss-privacy}

\begin{proof}[Proof of  Theorem \ref{theorem-privacy}]
Let $\CC'$ be the set of corrupted participants that have passed $\PVSS.\Key\Verify$. Clearly $|\CC'| \leq t$. Also, recall that we assume $G=\{1,2,\dots,n'\}$ be the set of participants who have passed $\PVSS.\Key\Verify$ (here $n'=n-t+|\CC'| \geq t+1)$. To prove the privacy property, we define the following sequence of games. \vspace{0.07cm}

\noindent $\GAME_{b,1}$: The original game in Definition \ref{definition-pvss-privacy}. Initially, $\Adversary$ submits $(\pk_i,\pi_i)_{i \in \CC}$, while the challenger submits $(\pk_i,\pi_i)_{i \not \in \CC}$. Let $G$ be the set of participants who passed key verification, and assume $G=\{1,2,\dots,n'\}$. In each query, $\Adversary$ chooses $s_i$ and executes $\OO_{\PVSS,\Adversary}(s_i)$ to share and reconstruct $s_i$. The challenge phase is when $\Adversary$ outputs $s^0,s^1$. The challenger  computes $((E^b_i)_{i=1}^n,\pi) \leftarrow \PVSS.\Share(\pp,(\pk_i)_{i=1}^{n'},s^b,t)$ and gives $((E^b_i)_{i=1}^n,\pi)$ to $\Adversary$. Finally $\Adversary$ outputs a bit $b'$.

\noindent $\GAME_{b,2}:$ Same as $\GAME_{b,1}$, except that in $\OO_{\PVSS,\Adversary}(.)$ and the challenge phase, instead of creating the proofs from $\NIZK_1$ in $\PVSS.\Share$, the challenger instead uses the simulator to simulate the proofs using the instance $(\pp',n',t,(\pk_i,E_i))_{i=1}^{n'}$ (note that in this game, the challenger still knows all the witnesses for both $\NIZK_1$).

\noindent $\GAME_{b,3}:$ Same as $\GAME_{b,2}$, except that in $\OO_{\PVSS,\Adversary}(.)$, instead of creating the proofs from $\NIZK_2$ in $\PVSS.\Share$, the challenger instead uses the simulator to simulate the proofs using the instance $(\pp,\pk_i,E_i,s_i)$.

\noindent $\GAME_{b,4}:$ Same as $\GAME_{b,3}$, except that in the key generation process, instead of creating the proof $\pi_i$ for $\pk_i$ from $\NIZK_0$, the challenger instead uses the simulator to simulate the proofs using the instance $(\pp,\pk_i)$.

\noindent  $\GAME_{b,5}:$ Same as $\GAME_{b,4}$, except that in the challenge phase, after computing $(s^b_i)_{i=1}^{n'} \leftarrow \SSS.\Share(s^b)$, the challenger provides the encryption of $0$  for each $i \not \in \CC'$. In other words, it computes $E^b_i \leftarrow \PKE.\Enc(\pp,\pk_i,0)$ for all $i \not \in \CC'$ and $E^b_i \leftarrow \PKE.\Enc(\pp,\pk_i,s^b_i)$ for all $i \in \CC'$. The other steps are the same.

 \noindent  $\GAME_{b,6}:$ Same as $\GAME_{b,5}$, except that in the challenge phase, the challenger samples the shares $(s'_i)_{i \in \CC'}$ of $0$ instead of $s^b$ in $\PVSS.\Share$. The challenger then computes $(E^b_i)_{i \in \CC'}$ which is encryption of $(s'_i)_{i \in \CC'}$, while providing encryptions $(E^b_i)_{i \not \in \CC'}$ of $0$ for each $i \not \in \CC'$ like $\GAME_{b,4}$. The other steps are the same.  
 
 \vspace{0.2cm}

 We need to prove that $\GAME_{0,0}$ and $\GAME_{1,0}$ are indistinguishable. Note that $\GAME_{0,6}$ and $\GAME_{1,6}$ are identical: Initially, the challenger samples the public keys $(\pk_i)_{i \not \in \CC}$ and provides the simulated proof to $\Adversary$. The queries of $\OO_{\PVSS,\Adversary}$ are also identical: In both games, after receiving the secret $s_i$ from $\Adversary$, the challenger produces the shares of $s_i$ and the simulated proofs. In the challenge phase, in both games, the challenger samples $(s_i)_{i \in \CC'}$ s.t. they are valid shares of $0$, and provides the encryptions $(E^b_i)_{i=1}^{n'}$ s.t. $(E^b_i)_{i \not \in \CC'}$ are encryptions of $0$, while $(E^b_i)_{i \in \CC'}$ are encryptions of $(s_i)_{i \in \CC'}$. So the distribution of $(E^b_i)_{i=1}^{n'}$ is identical. Finally, the challenger provides the simulated proof $\pi\leftarrow \Simulator_{\pi}(\crs,(\pp',n',t,(\pk_i,E^b_i)_{i=1}^{n'},\rho)$, so the distribution of $\pi$ in both games are also identical. Therefore, both games are identical. 

Now it suffices to prove that $\GAME_{b,i}$ and $\GAME_{b,i+1}$ are indistinguishable in the sense that $|\Pr[\GAME_{b,i}(\Adversary)=1]-\Pr[\GAME_{b,i+1}(\Adversary)=1]| \leq \negl(\lambda)$ for all $b \in \{0,1\}$ and $1 \leq i \leq 5$. This implies $\GAME_{0,1}$ is indistinguishable from $\GAME_{0,6}$,  and  $\GAME_{1,1}$ is also indistinguishable from $\GAME_{1,6}$, thus we are done.
 
We see that $\GAME_{b,2}$ is indistinguishable from $\GAME_{b,1}$ due to the adaptive multi-theorem zero-knowledge property of the NIZK. Indeed, consider the adversary who distinguishes $\GAME_{b,2}$ from $\GAME_{b,1}$, we construct an adversary $\Adversary'$ breaking the multi-theorem zero-knowledge property as follows.
\begin{itemize}
\item $\Adversary'$ receives $\crs$ from the NIZK challenger, which is the CRS of $\NIZK_1.\Setup$ or a simulated $\crs$ from the simulator.
\item $\Adversary'$ then sends $\crs$ to $\Adversary$ and honestly executes the key generation process and receives $(\pk_i)_{i=1}^{n'}$ and secret keys $(\sk_i)_{i \in G \setminus \CC}$.
\item Whenever $\Adversary$ asks for a share $s_i$, $\Adversary'$ computes $(s_{ij})_{j=1}^n$, $(E_{ij})_{j=1}^n$, then submits the instance and witness to the NIZK challenger and receives $\pi$ which is from $\NIZK_1.\Prove$ or the NIZK simulator. $\Adversary'$ then returns the $(E_i,\pi)$ to $\Adversary$. 
\item $\Adversary'$ performs the remaining steps in $\OO_{\PVSS,\Adversary}(s_i)$. In the challenge phase, $\Adversary'$ performs the same action as above and outputs whatever $\Adversary$ outputs.
\end{itemize}
We see that if the CRS is from $\NIZK_1.\Setup$ and the proofs are from $\NIZK_1.\Prove$, then $\Adversary'$ receives the real proofs and $\Adversary$ receives the transcript of $\GAME_{b,1}$. If the CRS and proofs are from the simulator, then $\Adversary'$ receives the simulated proofs, and $\Adversary$ receives the transcript of $\GAME_{b,2}$. Thus $|\Pr[\GAME_{b,2}(\Adversary)=1]-\Pr[\GAME_{b,1}(\Adversary)=1]| \leq  \Adv^{\mathsf{ZK}}(\Adversary')$. Thus, if $\Adversary$ distinguishes $\GAME_{b,1}$ and $\GAME_{b,2}$ with non-negligible probability $\epsilon$, then $\Adversary'$ can distinguish between the real and simulated proofs with probability $\epsilon$, contradiction. 

For the same reason, we see that  $|\Pr[\GAME_{b,3}(\Adversary)=1]-\Pr[\GAME_{b,2}(\Adversary)=1]| \leq  \Adv^{\mathsf{ZK}}(\Adversary')$ and $|\Pr[\GAME_{b,4}(\Adversary)=1]-\Pr[\GAME_{b,3}(\Adversary)=1]| \leq  \Adv^{\mathsf{ZK}}(\Adversary')$.

Next, $\GAME_{b,5}$ is indistinguishable from $\GAME_{b,4}$ due to the multi-key IND-CPA security of the PKE. Indeed, if $\Adversary$ could distinguish between $\GAME_{b,5}$ and $\GAME_{b,4}$, we construct an adversary $\Adversary'$ breaking the multi-key IND-CPA property as follows.
\begin{itemize}
\item $\Adversary'$ receives $n-t$ public keys $(\pk_i)_{i \not \in \CC'}$ from the multi-key IND-CPA challenger. $\Adversary'$ then generates the simulated CRS for $\NIZK_0,\NIZK_1,\NIZK_2$ and gives them to $\Adversary$. $\Adversary'$ uses the simulator of $\NIZK_0$ to produce the simulated proof for $\pk_i$ for each $i \not \in \CC'$. $\Adversary$ then passes $\pk_i$ and the simulated proofs to $\Adversary$.
\item On each query of $\OO_{\PVSS,\Adversary}(.)$, $\Adversary'$ receives $s_i$ from $\Adversary$, and computes the shares $(s_{ij})_{j=1}^{n'}$, $\Adversary'$. $\Adversary'$ computes the encryption $(E_{ij})_{j=1}^{n'}$ using $(\pk_i)_{i=1}^{n'}$. Then, it sends $\Adversary$ the encryptions and simulated proofs using the simulator of $\NIZK_1$.
\item Also, in each query of $\OO_{\PVSS,\Adversary}(.)$, when needing to reveal $(s_{ij})_{j \not \in \CC'}$, $\Adversary'$ uses the NIZK simulator of $\NIZK_2$ to simulate the proofs.
\item In the final step, $\Adversary'$ receives the $s^0,s^1$ from the $\Adversary$ and computes the shares $(s^b_{ij})_{j=1}^{n'}$ of $s^b$. $\Adversary'$ then outputs i) the \textit{vector} $(s^b_{ij})_{j \not \in \CC'}$ and ii) a \textit{vector} of zeroes to the multi-key IND-CPA challenger and receives $(E_i^\star)_{i \not \in \CC'}$. $\Adversary'$ at the same time computes the encryption $(E_i^\star)_{i \in \CC'}$ of $(s^\star_{ij})_{j \in \CC'}$. $\Adversary'$ then returns $(E_i^\star)_{i \in [n']}$ to $\Adversary$. Finally, $\Adversary'$ performs the remaining steps and outputs whatever $\Adversary$ outputs.
\end{itemize}
 Now, consider $(E^\star_i)_{i \not\in \CC'}$. If they are the encryptions of zeroes, then $\Adversary'$ interacts with $\GAME_0^{\Gen-\mathsf{IND-CPA}}$ and $\Adversary$ receives the transcript of $\GAME_{b,5}$. Otherwise, they are encryptions of $(s^b_i)_{i \not \in \CC},$ thus $\Adversary'$ interacts with $\GAME_1^{\Gen-\mathsf{IND-CPA}}$ and $\Adversary$ receives the transcript of $\GAME_{b,4}$. So we see that $|\Pr[\GAME_{b,4}(\Adversary)=1]-\Pr[\GAME_{b,5}(\Adversary)=1]| \leq  \Adv^{\Gen-\mathsf{IND-CPA}}(\Adversary',n-t)$. Therefore, if $\Adversary$ distinguishes $\GAME_{b,4}$ and $\GAME_{b,5}$ with non-negligible probability $\epsilon$, then $\Adversary'$ can break the multi-key IND-CPA security with probability $\epsilon$, contradiction.

Finally, consider $\GAME_{b,6}$ and $\GAME_{b,5}$: The difference only lies in the final transcript $((E^b_i)_{i=1}^{n'},\pi)$. In both games, $\Adversary$ receives the transcript $((E^b_i)_{i=1}^{n'},\pi)$ where $\pi\leftarrow \Simulator_{\pi}(\crs,(\pp',n',t,(\pk_i,E^b_i)_{i=1}^{n'}),\rho)$. The transcript $(E^b_i)_{i=1}^{n'}$ can be split into $(E^b_i)_{i \not \in \CC'}$ and $(E^b_i)_{i \in \CC'}$ where the difference is in $(E^b_i)_{i \in \CC'}$. In  $\GAME_{b,6}$, $(E^b_i)_{i \in \CC'}=(\PKE.\Enc(\pp,\pk_i,s'_i))_{i \in \CC'}$ where $(s'_i)_{i\in \CC'}$ are the shares of $s^b$. In $\GAME_{b,5}$ however, the shares $(s'_i)_{i\in \CC'}$ are the shares of $0$ instead. Note that the distribution of $(E^b_i)_{i \in \CC'}$ in both games is identical because the distribution of $(s'_i)_{i\in \CC'}$ in both games are identical due to the $t$-privacy property of  Shamir secret sharing. Thus, the distribution of the transcript $((E^b_i)_{i=1}^{n'},\pi)$ in $\GAME_{b,5}$ and $\GAME_{b,6}$ are identical. 

In conclusion, $|\Pr[\GAME_{b,0}(\Adversary)=1]-\Pr[\GAME_{b,6}(\Adversary)=1]| \leq 3\Adv^{\mathsf{ZK}}(\Adversary')+\Adv^{\mathsf{Gen-IND-CPA}}(\Adversary') \leq \negl(\lambda)$ for all $b \in \{0,1\}$. Since we have proved that $\Pr[\GAME_{0,6}(\Adversary)=1]=\Pr[\GAME_{1,6}(\Adversary)=1]$ at the beginning, we thus conclude that $|\Pr[\GAME_{0,0}(\Adversary)=1]-\Pr[\GAME_{1,0}(\Adversary)=1]| \leq \negl(\lambda)$ as well, as desired.
\end{proof}

\section{Formal Security Proof of the Trapdoor $\Sigma$-Protocols}\label{appendix-nizk-security}
\subsection{Proof of Theorem \ref{theorem-nizk-for-key-generation}} \label{appendix-proof-of-key-generation}

\begin{proof}[Proof of Theorem \ref{theorem-nizk-for-key-generation}]
 The correctness property is trivial: It is easy to verify that if $(\bA,\bb) \in \LL^\Key_{zk}$,  then $\bz^\top\cdot \bA+\bt^\top=\bd+c\cdot \bb \pmod{q}$. Now, according to Lemma \ref{lemma-rejection-sampling}, since $||(\bz~||~\bt)|| \leq \sqrt{(B^\Key_\bs)^2+(B^\Key_\be)^2}$ and with the choice of $\sigma^\Key$, the distribution of $(\bz~||~\bt)$ is statistically close to the distribution of $(\bz~||~\bt)$ when sampled from $D_{\mathbb{Z}^{u+v},\sigma^\Key}$ and returned with probability $1/M$. In addition, due to Lemma \ref{lemma-bound}, for any $(\bz~||~\bt)$ sampled from the latter distribution, it holds that $||(\bz~||~\bt)|| \leq \sqrt{u+v}\cdot \sigma^\Key$ with probability $1-2^{O(-(u+v))}$. Thus, by combining the two lemmas, the verifier accepts with probability $1-2^{O(-(u+v))}$. Now to show zero-knowledge, given a challenge $c \in \{0,1\}$ and statement $(\bb,(\bc_1,\bc_2),\bmm)$, the simulator proceeds as follows.
\begin{itemize}
\item Sample $ (\bz~||~\bt) \leftarrow D_{\mathbb{Z}^{u+v},\sigma^\Key}$. 
\item Compute  $\bd=\bz^\top\cdot \bA+\bt^\top-c\cdot \bb \pmod{q}$.
\item Return $(\bd,\bz,\bt)$ with probability $1/M$.
\end{itemize}

To prove that the simulated transcript is indistinguishable from the real transcript, we consider the distribution $(\bd,\bz,\bt)$. Note that, given $(\bz,\bt)$, the value $\bd$ is uniquely determined from $\bd=\bz^\top\cdot \bA+\bt^\top-c\cdot \bb \pmod{q}$. Thus, it suffices to prove that the transcript $(\bz,\bt)$ in both real and simulated transcripts is indistinguishable. 

Indeed, the component $(\bz~||~\bt)$ in the simulated transcript is sampled from $D_{\mathbb{Z}^{u+v},\sigma^\Key}$ and outputted with probability $1/M$, while the component $(\bz~||~\bt)$ from the real transcript is sampled from  $D_{\mathbb{Z}^{u+v},\sigma^\Key,c\cdot \be}$ and is outputted with probability~$1-K$. Due to the choice of $\sigma^\Key$ Lemma \ref{lemma-rejection-sampling}, note that by letting $\bv$ in the lemma to be $c\cdot \be$, the two transcripts have statistical distance at most $2^{-100}/M$ (Note that in the lemma, the algorithm outputs $((\bz~||~\bt),c\cdot (\bs~||~\be))$, but here we only output $\bt$. But if we let $f(\bz,\bv)=\bz$, then note that statistical distance does not increase when applying any function. Thus, the transcript of $f((\bz~||~\bt),c\cdot (\bs~||~\be))$ in both cases is statistically close as well). Hence, the transcript $(\bz,\bt)$ of both real and simulated cases is statistically close.

Next, to prove special soundness, we prove that, given $\bd$, there cannot be two valid responses $(\bz_c,\bt_c)$ for all $c \in \{0,1\}$. Indeed, suppose otherwise, then it holds that  $\bb=(\bz_1^\top-\bz_0^\top)\cdot \bA+(\bt_1^\top-\bt_0^\top) \pmod{q}$. Since   $||\bz_1-\bz_0|| \leq 2\sqrt{u+v}\cdot \sigma^\Key=B^{\Key\star}_\bs$ and $||\bt_1-\bt_0|| \leq 2\sqrt{u+v}\cdot \sigma^\Key=B^{\Key\star}_\be$, this implies that $\bb \in \LL^\Key_{sound}$, contradiction. Thus, special soundness must hold.

To prove CRS indistinguishability, note that $\Trap\Sigma.\Gen$ returns a uniform matrix~$\bA$ in $\mathbb{Z}_q^{v \times u}$, while $\Trap\Sigma.\TrapGen$ returns a matrix $\bA$ which is statistically close to uniform, thus the two CRS are statistically close, as desired.

Finally, we need to prove that the function $\BadChallenge$ provides the correct output. First, we prove that $\BadChallenge$ must return a bit in $\{0,1\}$. Indeed, the $\BadChallenge$ function returns $\perp$ iff the value $(\bz_c,\bt_c)$ satisfies  and $||(\bz_c~||~\bt_c)||<\sqrt{u+v}\cdot \sigma^\Key$ for all $c$. However, similar to the proof of soundness, this will lead to $\bb \in \LL^{\Key}_{sound},$ a contradiction.
Now, consider $c$ such that $(\bz_c,\bt_c)=\Invert(\bA,\bT,\bd+c\cdot \bb)$, and $||(\bz_c~||~\bt_c)|| > \sqrt{u+v}\cdot \sigma^\Key$. Suppose $c=0$ and values $(\bz_0,\bt_0)$ such that $(\bz_0,\bt_0)=\Invert(\bA,\bT,\bd)$ and $||(\bz_0~||~\bt_0)|| > \sqrt{u+v}\cdot \sigma^\Key$. In this case, the $\BadChallenge$ function returns $1$, and we prove that there is no valid response if the challenge is not equal to $1$. Indeed, suppose there exists a valid response for $,c=0$, then it holds that $\bd=\bz_0^\top\cdot\bA+\bt_0^\top$ for some $||(\bz_0~||~\bt_0)|| \leq \sqrt{u+v}\cdot \sigma^\Key$. In this case, the algorithm $\Invert$ correctly inverts $(\bz_0,\bt_0)$ from $(\bA,\bT,\bd)$ due to Lemma \ref{theorem-invert-lwe} and checks $||(\bz_0~||~\bt_0)||$. This means that $\BadChallenge$ returns $1$ even when  $(\bz_0,\bt_0)$ satisfies $||(\bz_0~||~\bt_0)||<\sqrt{u+v}\cdot \sigma^\Key$.  This contradicts the earlier property that $\BadChallenge$ only returns $1$ if $||(\bz_0~||~\bt_0)|| \geq \sqrt{u+v}\cdot \sigma^\Key$ when $(\bz_0~||~\bt_0)$ is computed from $\Invert$. We can prove similarly in the case $\BadChallenge$ returns $0$. Thus, the $\BadChallenge$ function always correctly returns the correct output.  
\end{proof}

\subsection{Proof of Theorem \ref{theorem-nizk-for-sharingg}} \label{appendix-proof-of-sharing}

\begin{proof}[Proof of Theorem \ref{theorem-nizk-for-sharingg}]
The correctness property is trivial: Indeed, when $(\bb_i,\bc_{1i},\bc_{2i})_{i=1}^n \in \LL^\Enc_{zk}$, then $\bmm \in \LL^\SSS_{t}$ and consequently, $\bmm^\top \cdot \bH^{t}_n=0 \pmod{p}$, thus $\bA\cdot \bz_i=\ba_{1i}+c\cdot \bc_{1i} \pmod{q},$  $\bb_i\cdot \bz_i+h_i+p\cdot t_i=\ba_{2i}+c\cdot \bc_{2i} \pmod{q}$ for all $1 \leq i \leq n$\footnote{We need a subtle property here: If $t_i=u_i+c\cdot m_i \pmod{p}$, then $p \cdot t_i = p\cdot (u_i+c\cdot m_i) \pmod{q}$. This works if and only if $q=p^2$, hence the scheme of \cite{ACPS09} aligns perfectly. When $q$ is coprime to $p$ and the value $p\cdot m_i$ is replaced by $\lfloor q/p\cdot m_i \rfloor$ in several schemes, such as \cite{PVW08}, we do not know how to build the trapdoor $\Sigma$- protocol that works in this case.}, and $\bt^\top\cdot\bH^{t}_n=\bzero \pmod{p}$. Now, consider the component $(\bz~||~\bh)$, where $\bz=( \bz_1~||\bz_2~||~\dots~||~\bz_n )$ and similarly $\bh=( h_1~||~h_2~||~\dots~||~h_n )$. By Lemma \ref{lemma-rejection-sampling}, since the norm of $(\br~||~\be')$ is at most $\sqrt{n\cdot (B^\Enc_e)^2+(B^\Enc_\br)^2}$ and $\sigma^\Enc$ is chosen larger than this norm, the component $(\bz~||~\bh)$ is statistically close to a vector $(\bz~||~\bh)$ sampled from $D_{\mathbb{Z}^{un+u},\sigma^\Enc}$ and is returned with probability $1/M$. By using Lemma \ref{lemma-nextbound} on the vector $(\bz~||~\bh)$ with a set $S \subseteq [nu+n]$ of size $u+1$, the norm of $(\bz_i~||~h_i)$ is at most $\sigma^\Enc \cdot \sqrt{u+1}$ with probability with probability $1-2^{O(-(u+1))}$. Thus by union bound, the probability that  $(\bz_i~||~h_i)$ is at most $\sigma^\Enc \cdot \sqrt{u+1}$ for some $1\leq i \leq n$ is at most  $n\cdot 2^{O(-(u+1))}$. Hence verifier accepts with probability $1-n\cdot 2^{O(-(u+1))}$.

Now to show zero-knowledge, given a challenge $c \in \{0,1\}$ and statement $(\bb_i,\bc_{1i},\bc_{2i})$, the simulator proceeds as follows. 
\begin{itemize}
\item Sample $(\bz~||~\bh) \leftarrow  D_{\mathbb{Z}^{nu+n},\sigma^\Enc}$ and a vector $~\bt \in \mathbb{Z}_q$ conditioning to $\bt^\top\cdot\bH^{t}_n=\bzero$. Parse $\bz=(\bz_1~||~ \bz_2~||~ \dots ~||~\bz_n)$, $\bh= (h_1~||~h_2 ~||~ \dots ~||~ h_n)$ where $\bz_i \in \mathbb{Z}_q^u$.
\item Compute $\ba_{1i}=\bA\cdot \bz_i-c\cdot \bc_{1i} \pmod{q},$ $\ba_{2i}=\bb_i\cdot \bz_i+h_i+p\cdot t_i-c\cdot \bc_{2i} \pmod{q}$.
\item Return $(\ba_{1i},\ba_{2i},t_i,\bz_i,h_i)_{i=1}^n$ to $\VV$ with probability $1/M$.
\end{itemize}

To prove that the simulated transcript is indistinguishable from the real transcript, we consider the distribution $(\ba_{1i},\ba_{2i},t_i,\bz_i,h_i)_{i=1}^n$. Note that, given $(t_i,\bz_i,h_i)_{i=1}^n$, the value $(\ba_{1i},\ba_{2i})_{i=1}^n$ is uniquely determined as $\ba_{1i}=\bA\cdot \bz_i-c\cdot \bc_{1i} \pmod{q},$ $\ba_{2i}=\bb_i\cdot \bz_i+h_i+p\cdot t_i-c\cdot \bc_{2i} \pmod{q}$. Thus, it suffices to prove that the transcript $(t_i,\bz_i,h_i)_{i=1}^n$ in both real and simulated transcripts is indistinguishable. 

Indeed, the component $(\bz~||~\bh)$ in the simulated transcript is sampled from $D_{\mathbb{Z}^{nu+n},\sigma^\Enc}$ and outputted with probability $1/M$, while the component $(\bz~||~\bh)$ from the real transcript is sampled from  $D_{\mathbb{Z}^{nu+n},\sigma^\Enc,c\cdot (\br~||~\be')}$ and is outputted with probability $1-K$. Recall that $||(\br~||~\be')|| \leq \sqrt{n\cdot (B_{e}^\Enc)^2+(B_{\br}^\Enc)^2}$ and due to the choice of $\sigma^\Enc$ and Lemma \ref{lemma-rejection-sampling}, note that by letting $\bv$ in the lemma to be $c\cdot (\br~||~\be')$, the two transcripts have statistical distance at most $2^{-100}/M$. In addition, the component $\bt$ of both transcripts are uniformly distributed in the set of vector $\bmm$ such that $\bmm^\top\cdot\bH^{t}_n=\bzero \pmod{p}$. Consequently, the transcript $(t_i,\bz_i,h_i)_{i=1}^n$  of both real and simulated cases are statistically close, as desired.

To prove CRS indistinguishability, note that $\Trap\Sigma.\Gen$ returns a uniform matrix $\bA$ in $\mathbb{Z}_q^{v \times u}$, while $\Trap\Sigma.\TrapGen$ returns a matrix $\bA$ which is statistically close to uniform, thus the two CRSs are statistically close, as desired.

Next, for special soundness, we prove that, given $(\ba_{1i},\ba_{2i})_{i=1}^n$, there cannot be two valid responses $(\bz_{ci},\bh_{ci},\bt_{ci})_{i=1}^n$ for all $c \in \{0,1\}$. Indeed, suppose otherwise. Because we already have that $(\bA,\bb_i)_{i=1}^n \in \LL^\Key_{sound}$ (which is implied after the key generation process). Thus there exist $\bs_i,\be_i$ such that $\bb_i=\bs_i^\top\cdot \bA+\be_i \pmod{q}$ and $||\be_i|| \leq B^{\Enc\star}_{\be}$ for all $1 \leq i \leq n$. Now it holds that  $\bc_{2i}-\bs_i^\top\cdot\bc_{1i}=p\cdot (\bt_{1i}-\bt_{0i})+h_{1i}-h_{0i}+\be_i^\top(\bz_{1i}-\bz_{0i}) \pmod{q}$ and $(\bt_1^\top-\bt_0^\top)\cdot\bH^{t}_n=\bzero \pmod{p}$. Let $f'_i=h_{1i}-h_{0i}+\be_i^\top(\bz_{1i}-\bz_{0i})$ (as integer, not mod $q$).  Since $|f'_i| \leq 2\sigma^\Enc \cdot \sqrt{u+1}\cdot (B^{\Key\star}_\be+1)=B^{\Enc\star}_f$ and $(\bt_1^\top-\bt_0^\top)\cdot\bH^{t}_n=\bzero \pmod{p}$ is equivalent to $(\bt_1^\top-\bt_0^\top) \in \LL^\SSS_{n,t}$, this implies that $(\bA,n,t,(\bb_i,\bc_{1i},\bc_{2i})_{i=1}^n) \in \LL^\Enc_{sound}$, contradiction. Thus, special soundness must hold. 

Finally, we need to prove that the function $\BadChallenge$ provides the correct output. First, we prove that $\BadChallenge$ must return a bit in $\{0,1\}$. Indeed, the $\BadChallenge$ function returns $\perp$ iff 
for all $c \in \{0,1\}$ and $i \in [n]$, the obtained $f_{ci}$ satisfies $||f_{ci}||<B_f^{\Enc^\star}/2$ and $\bt^\top_c\cdot\bH^{t}_n=\bzero \pmod{p}$. Similar to the proof of soundness and recall that the function $\Invert$ correctly restores $\bs_i,\be_i$ for $\bb_i \in \LL^\Key_{sound}$, this implies $(\bA,n,t,(\bb_i,\bc_{1i},\bc_{2i})_{i=1}^n) \in \LL^{\Enc}_{sound}$, contradiction. 

Now, consider $c$ such that $|f_{ci}|>B^{\Enc\star}_f/2$ for some $i$ or $\bt^\top\cdot \bH^{t}_n\neq \bzero \pmod{p}$. Suppose $c=0$, thus the values  $(\ba_{1i},\ba_{2i})_{i=1}^n$ are such that: If we consider $f_{0i}=\ba_{2i}-\bs^\top\cdot \ba_{1i} \pmod{p}$, then cast $f_{0i}$ as an integer in $[-(p-1)/2,(p-1)/2]$ and compute $t_i=(\ba_{2i}-\bs^\top\cdot \ba_{1i}-f_{0i})/p \pmod{p}$, then there must exist some $i$ such that $|f_{0i}|> B^{\Enc\star}_f/2$ for some $i$, or $\bt^\top\cdot \bH^{t}_n\neq \bzero \pmod{p}$.  In this case, the $\BadChallenge$ function always returns $1$, and we prove that if the verifier's challenge is not equal to $1$, then there is no valid response for the prover.  Indeed, suppose there exists a valid response $(\bz_i,t_i,h_i)_{i=1}^n$ for $c=0$, then it holds that   $\bA\cdot \bz_i=\ba_{1i} \pmod{q},$ $\bb_i\cdot \bz_i+h_i+p\cdot t_i=\ba_{2i} \pmod{q}$, $||(\bz_i~||~h_i)|| \leq \sigma^\Enc\cdot \sqrt{u+1}$ for all $1 \leq i \leq n$ and $\bt^\top\cdot\bH^{t}_n=\bzero \pmod{p}$. In this case, recall that since $\bb_i=\bs_i^\top\cdot \bA+\be^\top_i \pmod{q}$ with $||\be_i|| \leq B^{\Key\star}_{\be}$ (due to our assumption that $(\bA,\bb_i) \in \LL^\Key_{sound}$in the theorem's statement), it holds that $\ba_{2i}-\bs_i^\top\cdot\ba_{1i}=pt_i+h_{i}+\be_i^\top\bz_{i} \pmod{q}$ and $\bt\cdot\bH^{t}_n=\bzero \pmod{p}$. Let $f'_{0i}=h_{i}+\be_i^\top\cdot \bz_{i}$ (as integer) then it holds that $|f'_{0i}| \leq B^{\Enc\star}_f/2<p/2$ for all $i$ and $\bt^\top\cdot \bH^{t}_n= \bzero \pmod{p}$. As a result, from the description of $\BadChallenge$, it will extract the values $f'_{0i},t_i$ above from $(\ba_{1i},\ba_{2i})$ and still returns $1$. However, this contradicts the property that $\BadChallenge$ only returns $1$ if the computed $f'_{0i},t_i$  satisfies $|f'_{0i}|> B^{\Enc\star}_f/2$ for some $i$, or $\bt^\top\cdot \bH^{t}_n\neq \bzero \pmod{p}$. We can prove similarly in the case $\BadChallenge$ returns $0$. Thus, the $\BadChallenge$ function always correctly returns the correct output. 
\end{proof}

\subsection{Proof of Theorem \ref{theorem-nizk-for-decryption}} \label{appendix-proof-of-decryption}

\begin{proof}[Proof of Theorem \ref{theorem-nizk-for-decryption}]
The correctness property is trivial: It is easy to verify that if $(\bb,(\bc_1,\bc_2),\bmm) \in \LL^\Dec_{zk}$,  then $\bz^\top\cdot \bA+\bt^\top=\bd+c\cdot \bb \pmod{q}$ and $\bz^\top\cdot \bc_1+t=h+c\cdot (\bc_2-p\cdot m) \pmod{q}$. In addition, according to Lemma \ref{lemma-rejection-sampling}, the distribution of $(\bz~||~\bt~||~t)$ is statistically close to the distribution of $(\bt~||~t)$ when sampled from $D_{\mathbb{Z}^{u+v+1},\sigma^\Dec}$ and returned with probability $1/M$. In addition, due to Lemma \ref{lemma-bound}, for any $(\bz~||~\bt~||~t)$ sampled from the latter distribution, it holds that $||(\bz~||~\bt~||~t)|| \leq \sqrt{v+u+1}\cdot \sigma^\Dec$ with probability $1-2^{O(-(u+v))}$. Thus, by combining the two lemmas, the verifier accepts with probability $1-2^{O(-(u+v))}$. To show zero-knowledge, given a challenge $c \in \{0,1\}$ and statement $(\bb,(\bc_1,\bc_2),m)$, the simulator proceeds as follows.
\begin{itemize}
\item Sample $(\bz~||~\bt~||~t) \leftarrow D_{\mathbb{Z}^{v+u+1},\sigma^\Dec}$. 
\item Compute  $\bd=\bz^\top \bA+\bt^\top-c \cdot \bb \pmod{q},~h=\bz^\top \bc_1+t-c\cdot  (\bc_2-p\cdot m) \pmod{q}$.
\item Return $(\bd,h,\bz,\bt,t)$ with probability $1/M$.
\end{itemize}

To prove that the simulated transcript is indistinguishable from the real transcript, we consider the distribution $(\bd,h,\bz,\bt,t)$. Note that, given $(\bz,\bt,t)$, the values $\bd,h$ are uniquely determined from $\bd=\bz^\top\cdot \bA+\bt^\top-c\cdot \bb \pmod{q}$ and $h=\bz^\top\cdot \bc_1+t-c\cdot (\bc_2-p\cdot m) \pmod{q}$. Thus, it suffices to prove that the transcript $(\bz,\bt,t)$ in both real and simulated transcripts is indistinguishable. 

Indeed, the component $(\bz,\bt,t)$ in the simulated transcript is sampled from $D_{\mathbb{Z}^{v+u++1},\sigma^\Dec}$ and outputted with probability $1/M$, while the component $(\bz,\bt,t)$ from the real transcript is sampled from  $D_{\mathbb{Z}^{v+u+1},\sigma^\Dec,c\cdot (\be~||~f)}$ and is outputted with probability $1-K$. Since $||(\bs~||~\be~||~f)||<\sqrt{(B^\Dec_\bs)^2+(B^\Dec_\be)^2+(B^\Dec_f)^2}$ and due to the choice of $\sigma^\Dec$ and Lemma \ref{lemma-rejection-sampling}, by letting $\bv$ in the lemma to be $c\cdot (\bs~||~\be~||~f)$, the two transcripts have statistical distance at most $2^{-100}/M$.  Consequently, the transcript $(\bz,\bt,t)$ of both real and simulated cases are statistically close, as desired.

Next, to prove special soundness, we prove that, given $(\bd,h)$, there cannot be two valid responses $(\bz_c,\bt_c,t_c)$ for all $c \in \{0,1\}$. Indeed, suppose otherwise, then it holds that $\bc_2-p\cdot m=(\bz_1^\top-\bz_0^\top)\cdot \bc_1+t_1-t_0 \pmod{q}$ and $\bb=(\bz_1^\top-\bz_0^\top)\cdot \bA+(\bt_1^\top-\bt_0^\top) \pmod{q}$. Since $||t_1-t_0|| \leq 2\sqrt{v+u+1}\cdot \sigma^\Dec=B^{\Dec\star}_f$, $||\bz_1-\bz_0|| \leq 2\sqrt{v+u+1}\cdot \sigma^\Dec=B^{\Dec\star}_\bs$ and $||\bt_1-\bt_0|| \leq 2 \sqrt{v+u+1}\cdot \sigma^\Dec=B^{\Dec\star}_\be$, we easily see that $(\bb,(\bc_1,\bc_2),\bmm) \in \LL^\Dec_{sound}$, contradiction. Thus, special soundness must hold.

To prove CRS indistinguishability, note that $\Trap\Sigma.\Gen$ returns a uniform matrix $\bA$ in $\mathbb{Z}_q^{v \times u}$, while $\Trap\Sigma.\TrapGen$ returns a matrix $\bA$ which is statistically close to uniform, thus the two CRS are statistically close, as desired.

Finally, we need to prove that the function $\BadChallenge$ provides the correct output. First, we prove that $\BadChallenge$ must return a bit in $\{0,1\}$. Indeed, the $\BadChallenge$ function returns $\perp$ iff 
for all $c \in \{0,1\}$, the value $(\bz_c,\bt_c,t_c)$ satisfies $||\bz_c||< B^{\Dec\star}_{\bs}/2$, $||\bt_c||< B^{\Dec\star}_{\be}/2$ and $|t_c|<B^{\Dec\star}_{f}/2$. In this case, similar to the proof of soundness, it leads to  $(\bA,(\bb,(\bc_1,\bc_2),m)) \in \LL^\Dec_{sound}$, contradiction. Thus, at least one bad challenge exists. Now, consider $c$ such that $(||\bs_c|| > B^{\Dec\star}_{\bs}/2)~\lor~(||\bt_c|| > B^{\Dec\star}_{\be}/2)~\lor~(|t_c|>B^{\Dec\star}_{f}/2)$.  Suppose $c=0$ and values $(\bz_0,\bt_0)$ such that $(\bz_0,\bt_0)=\Invert(\bA,\bT,\bd)$ and $t_0=h-\bz_0^\top \cdot \bc_1 \pmod{p}$ and $(||\bz_0|| > B^{\Dec\star}_{\bs}/2)~\lor~||\bt_0|| > B^{\Dec\star}_{\be}/2)~\lor~(|t_0|>B^{\Dec\star}_{f}/2)$.  In this case, the $\BadChallenge$ function always returns $1$, and we prove that if the verifier's challenge is not equal to $1$, then there is no valid prover's response. Indeed, suppose there exists a valid response for $c=0$, then it holds that $\bd=\bz_0^\top\cdot\bA+\bt_0^\top$ and   $h-\bz_0^\top \cdot \bc_1=t_0+p\cdot m_0 \pmod{q}$ for some $||(\bt_0~||~t_0)|| \leq \sqrt{u+1}\cdot \sigma^\Dec$ and $m \in \mathbb{Z}_p$. In this case, note that the algorithm $\Invert$ correctly inverts $(\bz_0,\bt_0)$ from $(\bA,\bT,\bd)$, then correctly computes $t_0$ such that $(||\bz_0|| < B^{\Dec\star}_{\bs}/2)~\land~(||\bt_0|| < B^{\Dec\star}_{\be}/2)~\land~(|t_0|<B^{\Dec\star}_{f}/2)$. Thus, from the description of the $\BadChallenge$ function, it would get the values $(\bz_0,\bt_0,t_0)$, above, then checks their norm and still returns $1$.  This contradicts the earlier property that $\BadChallenge$ only returns $1$ if $(||\bz_0|| > B^{\Dec\star}_{\bs}/2)~\lor~(||\bt_0|| > B^{\Dec\star}_{\be}/2)~\lor~(|t_0|>B^{\Dec\star}_{f}/2)$. We can prove similarly in the case $\BadChallenge$ returns $0$. Thus, the $\BadChallenge$ function always correctly returns the correct output. 
\end{proof}

\section{On Adaptive Soundness of Libert et al.'s NIZK (Asiacrypt 2020)}\label{appendix-adaptive-soundness}
Recall that in Subsection \ref{subsection-compiler}, we need to prove that the NIZK of \cite{LNPT20} provides adaptive soundness for languages $\LL$ with trapdoor $\tau$ to efficiently check whether an element is in $\LL_{sound}$ with probability $1$, and we also need to prove that, adaptive soundness also holds when we generate $\bA \uniformly \mathbb{Z}_q^{v \times u}$ instead from $\TrapGen$ (because when using the NIZKs for languages $\LL^\Key,\LL^\Enc,\LL^\Dec$ in Section \ref{section-generic-pvss}, we need them to satisfy the adaptive soundness property for these languages).  First, we sketch the proof that the simulation soundness of the NIZK in \cite{LNPT20} achieves adaptive soundness for such languages.  For any $\Adversary$, let 
$$P_1(\Adversary)=\condprob{x \not \in \LL_{sound}~\land 
\NIZK.\Verify(\crs,x,\pi)=1}{\crs\leftarrow \NIZK.\Setup(1^\lambda,\LL,\tau),\\ (x,\pi) \leftarrow \Adversary(\crs,\tau)},$$ $$P_2(\Adversary)=\condprob{x \not \in \LL_{sound}~\land 
\NIZK.\Verify(\crs,x,\pi)=1}{(\crs,\rho)\leftarrow \Simulator_{\crs},\\ (x,\pi) \leftarrow \Adversary(\crs,\tau)}.$$ We need to prove that $|P_1(\Adversary)-P_2(\Adversary)|$ is negligible. Indeed, suppose the contrary, consider the adversary $\Adversary'$ as follows: $\Adversary'$ receives $\crs$ from the ZK challenger, which is the CRS from $\NIZK.\Setup$ or $\Simulator_{\crs}$. $\Adversary'$ gives $\crs$ to $\Adversary$ and  receives $(x,\pi)$ from $\Adversary$ and then uses $\tau$ to efficiently check if $x \in \LL_{sound}$ and $\NIZK\Verify(\crs,x,\pi)=1$. It outputs $1$ iff $x \not \in \LL_{sound}$ and $\NIZK.\Verify(\crs,x,\pi)=1$. 

First, it is known that due to the simulation soundness property (see \cite[Definition A.3]{LNPT20}) of the NIZK in \cite[Theorem 3.4 and Theorem 4.5]{LNPT20}, $P_2(\Adversary)$ is negligible. So, in order to show that $P_1(\Adversary)$ is negligible it suffices to prove that $|P_1(\Adversary)-P_2(\Adversary)|$ is negligible. Indeed, consider the adversary $\Adversary'$ as follows: $\Adversary'$ receives $\crs$ from the ZK challenger, which is the CRS from $\NIZK.\Setup$ or $\Simulator_{\crs}$. $\Adversary'$ gives $\crs$ to $\Adversary$ and  receives $(x,\pi)$ from $\Adversary$ and then uses $\tau$ to efficiently check if $x \in \LL_{sound}$ and $\NIZK\Verify(\crs,x,\pi)=1$. It outputs $1$ iff $x \not \in \LL_{sound}$ and $\NIZK.\Verify(\crs,x,\pi)=1$. We easily see that $|P_1(\Adversary)-P_2(\Adversary)|\leq \Adv^{\mathsf{ZK}}(\Adversary')$. So it holds that $|P_1(\Adversary)-P_2(\Adversary)|$ is negligible because $\Adv^{\mathsf{ZK}}(\Adversary')$ is negligible. Therefore, $P_1(\Adversary)$ must be negligible.

We note that  $P_1(\Adversary)$ is negligible does not immediately imply adaptive soundness because $\Adversary$ is additionally the trapdoor $\tau$ of $\LL$, but in the real execution, $\Adversary$ will not be given $\tau$. However, we can easily prove that it indeed implies soundness as follows: Suppose an adversary $\Adversary$ can break the adaptive soundness property in Definition \ref{definition-adaptive-soundness}, then there exists some $\Adversary'$ that makes $P_1(\Adversary')$ non-negligible. $\Adversary'$, on input the CRS and trapdoor $(\crs,\tau)$, simply provides $\crs$ to $\Adversary$, and outputs whatever $\Adversary$ outputs. It is easy to see that the probability that $\Adversary$ breaks the adaptive soundness property is the same as $P_1(\Adversary')$. So, adaptive soundness is achieved for trapdoor languages.

But we are still not done. The adaptive soundness property only holds when we have a trapdoor $\tau=\bT$ in the languages $\LL^\Key,\LL^\Enc,\LL^\Dec$, so adaptive soundness only holds if the matrix $\bA$ from the PKE is generated from $\TrapGen(1^\lambda,u,v)$ (so we have to modify the algorithm   $\PKE.\Setup$ a bit to use the NIZKs), while in the real PVSS, $\bA$ is a uniformly distributed matrix in $\mathbb{Z}_q^{v \times u}$ without such $\bT$. Fortunately, the distribution of both matrices $\bA$ are statistically close (and so are the $\crs$ of the NIZKs), so with the same argument above, we see that, the distribution of $(x,\pi)$ outputted by $\Adversary$ when given a uniform $\bA \uniformly \mathbb{Z}_q^{u \times v}$ and a matrix $\bA$ generated from $\TrapGen$ are statistically close. Hence, the probability that $x \not \in \LL_{sound}$ and $\NIZK.\Verify(\crs,x,\pi)=1$  in two cases are negligibly close. As a result, if the probability that $x \not \in \LL_{sound}$ and $\NIZK.\Verify(\crs,x,\pi)=1$ is negligible when  $\bA$ is generated from $\TrapGen$, then the same also holds when  $\bA \uniformly \mathbb{Z}_q^{u \times v}$ (thus all the properties of the NIZKs and PVSS are preserved), as desired.

\begin{spacing}{0.9}
\bibliographystyle{plainnat}
\bibliography{refs}

\begin{thebibliography}{62}
\providecommand{\natexlab}[1]{#1}
\providecommand{\url}[1]{\texttt{#1}}
\expandafter\ifx\csname urlstyle\endcsname\relax
  \providecommand{\doi}[1]{doi: #1}\else
  \providecommand{\doi}{doi: \begingroup \urlstyle{rm}\Url}\fi

\bibitem[Abe and Fehr(2007)]{AF07}
Masayuki Abe and Serge Fehr.
\newblock Perfect {NIZK} with adaptive soundness.
\newblock In \emph{{TCC} 2007}, volume 4392 of \emph{LNCS}, pages 118--136. Springer, 2007.

\bibitem[Aho et~al.(1974)Aho, Hopcroft, and Ullman]{AHU74}
Alfred~V. Aho, John~E. Hopcroft, and Jeffrey~D. Ullman.
\newblock \emph{The Design and Analysis of Computer Algorithms}.
\newblock Addison-Wesley, 1974.
\newblock ISBN 0-201-00029-6.

\bibitem[Applebaum et~al.(2009)Applebaum, Cash, Peikert, and Sahai]{ACPS09}
Benny Applebaum, David Cash, Chris Peikert, and Amit Sahai.
\newblock Fast cryptographic primitives and circular-secure encryption based on hard learning problems.
\newblock In \emph{{CRYPTO} 2009}, volume 5677 of \emph{LNCS}, pages 595--618. Springer, 2009.

\bibitem[Applebaum et~al.(2023)Applebaum, Nir, and Pinkas]{ANP23}
Benny Applebaum, Oded Nir, and Benny Pinkas.
\newblock How to recover a secret with o(n) additions.
\newblock In \emph{{CRYPTO} 2023}, volume 14081 of \emph{LNCS}, pages 236--262. Springer, 2023.

\bibitem[Asharov and Lindell(2017)]{AL17}
Gilad Asharov and Yehuda Lindell.
\newblock A full proof of the {BGW} protocol for perfectly secure multiparty computation.
\newblock \emph{J. Cryptol.}, 30\penalty0 (1):\penalty0 58--151, 2017.

\bibitem[Bacho and Loss(2023)]{BL23}
Renas Bacho and Julian Loss.
\newblock Adaptively secure (aggregatable) {PVSS} and application to distributed randomness beacons.
\newblock In \emph{{ACM} {SIGSAC} 2023}, pages 1791--1804. {ACM}, 2023.

\bibitem[Bhat et~al.(2021)Bhat, Shrestha, Luo, Kate, and Nayak]{BSLKN21}
Adithya Bhat, Nibesh Shrestha, Zhongtang Luo, Aniket Kate, and Kartik Nayak.
\newblock Randpiper - reconfiguration-friendly random beacons with quadratic communication.
\newblock In \emph{{ACM}{CCS} 2021}, pages 3502--3524. {ACM}, 2021.

\bibitem[Boneh et~al.(2013)Boneh, Lewi, Montgomery, and Raghunathan]{BLMR13}
Dan Boneh, Kevin Lewi, Hart~William Montgomery, and Ananth Raghunathan.
\newblock Key homomorphic prfs and their applications.
\newblock In \emph{{CRYPTO} 2013}, volume 8042 of \emph{LNCS}, pages 410--428. Springer, 2013.

\bibitem[Boudot and Traor{\'e}(1999)]{B99}
Fabrice Boudot and Jacques Traor{\'e}.
\newblock Efficient publicly verifiable secret sharing schemes with fast or delayed recovery.
\newblock In \emph{ICICS’99}, pages 87--102. Springer, 1999.

\bibitem[B{\"{u}}nz et~al.(2018)B{\"{u}}nz, Bootle, Boneh, Poelstra, Wuille, and Maxwell]{BBBPWM18}
Benedikt B{\"{u}}nz, Jonathan Bootle, Dan Boneh, Andrew Poelstra, Pieter Wuille, and Gregory Maxwell.
\newblock Bulletproofs: Short proofs for confidential transactions and more.
\newblock In \emph{2018 {IEEE} Symposium on Security and Privacy, {SP} 2018, Proceedings, 21-23 May 2018, San Francisco, California, {USA}}, pages 315--334. {IEEE} Computer Society, 2018.

\bibitem[{\c{C}}alkavur et~al.(2022){\c{C}}alkavur, Bonnecaze, dela Cruz, and Sol{\'e}]{SARP22}
Selda {\c{C}}alkavur, Alexis Bonnecaze, Romar dela Cruz, and Patrick Sol{\'e}.
\newblock \emph{Code Based Secret Sharing Schemes: Applied Combinatorial Coding Theory}.
\newblock World Scientific, 2022.

\bibitem[Canetti et~al.(2004)Canetti, Goldreich, and Halevi]{CGH04}
Ran Canetti, Oded Goldreich, and Shai Halevi.
\newblock The random oracle methodology, revisited.
\newblock \emph{Journal of the ACM (JACM)}, 51\penalty0 (4):\penalty0 557--594, 2004.

\bibitem[Canetti et~al.(2018{\natexlab{a}})Canetti, Chen, Holmgren, Lombardi, Rothblum, and Rothblum]{CCHJLRR18}
Ran Canetti, Yilei Chen, Justin Holmgren, Alex Lombardi, Guy~N Rothblum, and Ron~D Rothblum.
\newblock Fiat-shamir from simpler assumptions.
\newblock \emph{Cryptology ePrint Archive}, 2018{\natexlab{a}}.

\bibitem[Canetti et~al.(2018{\natexlab{b}})Canetti, Chen, Reyzin, and Rothblum]{CCRR18}
Ran Canetti, Yilei Chen, Leonid Reyzin, and Ron~D Rothblum.
\newblock Fiat-shamir and correlation intractability from strong kdm-secure encryption.
\newblock In \emph{{EUROCRYPT} 2018}, LNCS, pages 91--122. Springer, 2018{\natexlab{b}}.

\bibitem[Canetti et~al.(2019)Canetti, Chen, Holmgren, Lombardi, Rothblum, Rothblum, and Wichs]{CCHLRRW19}
Ran Canetti, Yilei Chen, Justin Holmgren, Alex Lombardi, Guy~N. Rothblum, Ron~D. Rothblum, and Daniel Wichs.
\newblock Fiat-shamir: from practice to theory.
\newblock In \emph{{STOC} 2019}, pages 1082--1090. {ACM}, 2019.

\bibitem[Cascudo and David(2017)]{CD17}
Ignacio Cascudo and Bernardo David.
\newblock {SCRAPE:} scalable randomness attested by public entities.
\newblock In \emph{{ACNS} 2017}, volume 10355 of \emph{LNCS}, pages 537--556. Springer, 2017.

\bibitem[Cascudo and David(2020)]{CD20}
Ignacio Cascudo and Bernardo David.
\newblock {ALBATROSS:} publicly attestable batched randomness based on secret sharing.
\newblock In \emph{{ASIACRYPT} 2020}, volume 12493 of \emph{LNCS}, pages 311--341. Springer, 2020.

\bibitem[Cascudo and David(2024)]{CD24}
Ignacio Cascudo and Bernardo David.
\newblock Publicly verifiable secret sharing over class groups and applications to {DKG} and {YOSO}.
\newblock In \emph{{EUROCRYPT} 2024}, volume 14655 of \emph{LNCS}, pages 216--248. Springer, 2024.

\bibitem[Cascudo et~al.(2022)Cascudo, David, Garms, and Konring]{CDGK22}
Ignacio Cascudo, Bernardo David, Lydia Garms, and Anders Konring.
\newblock {YOLO} {YOSO:} fast and simple encryption and secret sharing in the {YOSO} model.
\newblock In \emph{{ASIACRYPT} 2022}, volume 13791 of \emph{LNCS}, pages 651--680. Springer, 2022.

\bibitem[Cascudo et~al.(2023)Cascudo, David, Shlomovits, and Varlakov]{CDSV23}
Ignacio Cascudo, Bernardo David, Omer Shlomovits, and Denis Varlakov.
\newblock Mt. random: Multi-tiered randomness beacons.
\newblock In \emph{{ACNS} 2023}, volume 13906 of \emph{LNCS}, pages 645--674. Springer, 2023.

\bibitem[Chor et~al.(1985)Chor, Goldwasser, Micali, and Awerbuch]{CGMA85}
Benny Chor, Shafi Goldwasser, Silvio Micali, and Baruch Awerbuch.
\newblock Verifiable secret sharing and achieving simultaneity in the presence of faults (extended abstract).
\newblock In \emph{{FOCS} 1985}, pages 383--395. {IEEE} Computer Society, 1985.

\bibitem[Ciampi and Xia(2023)]{CX23}
Michele Ciampi and Yu~Xia.
\newblock Multi-theorem fiat-shamir transform from correlation-intractable hash functions.
\newblock In \emph{{ACNS} 2023}, volume 13906 of \emph{LNCS}, pages 555--581. Springer, 2023.

\bibitem[Cormen et~al.(2022)Cormen, Leiserson, Rivest, and Stein]{Cormen22}
Thomas~H Cormen, Charles~E Leiserson, Ronald~L Rivest, and Clifford Stein.
\newblock \emph{Introduction to algorithms}.
\newblock MIT press, 2022.

\bibitem[Das et~al.(2022)Das, Krishnan, Isaac, and Ren]{DKI022}
Sourav Das, Vinith Krishnan, Irene~Miriam Isaac, and Ling Ren.
\newblock Spurt: Scalable distributed randomness beacon with transparent setup.
\newblock In \emph{{IEEE} {SP} 2022}, pages 2502--2517. {IEEE}, 2022.

\bibitem[Druk and Ishai(2014)]{DI14}
Erez Druk and Yuval Ishai.
\newblock Linear-time encodable codes meeting the gilbert-varshamov bound and their cryptographic applications.
\newblock In \emph{ITCS'14}, pages 169--182. {ACM}, 2014.

\bibitem[Feige et~al.(1999)Feige, Lapidot, and Shamir]{FLS99}
Uriel Feige, Dror Lapidot, and Adi Shamir.
\newblock Multiple noninteractive zero knowledge proofs under general assumptions.
\newblock \emph{{SIAM} J. Comput.}, 29\penalty0 (1):\penalty0 1--28, 1999.

\bibitem[Fouque and Stern(2001)]{FS01}
Pierre{-}Alain Fouque and Jacques Stern.
\newblock One round threshold discrete-log key generation without private channels.
\newblock In \emph{{PKC} 2001}, volume 1992 of \emph{LNCS}, pages 300--316. Springer, 2001.

\bibitem[Fujisaki and Okamoto(1998)]{FO98}
Eiichiro Fujisaki and Tatsuaki Okamoto.
\newblock A practical and provably secure scheme for publicly verifiable secret sharing and its applications.
\newblock In \emph{{EUROCRYPT} 1998}, volume 1403 of \emph{LNCS}, pages 32--46. Springer, 1998.

\bibitem[Gentry et~al.(2022)Gentry, Halevi, and Lyubashevsky]{GHL22}
Craig Gentry, Shai Halevi, and Vadim Lyubashevsky.
\newblock Practical non-interactive publicly verifiable secret sharing with thousands of parties.
\newblock In \emph{{EUROCRYPT} 2022}, volume 13275 of \emph{LNCS}, pages 458--487. Springer, 2022.

\bibitem[Goldreich et~al.(1986)Goldreich, Micali, and Wigderson]{GMW86}
Oded Goldreich, Silvio Micali, and Avi Wigderson.
\newblock Proofs that yield nothing but their validity and a methodology of cryptographic protocol design (extended abstract).
\newblock In \emph{27th Annual Symposium on Foundations of Computer Science}, pages 174--187. {IEEE} Computer Society, 1986.

\bibitem[Goldwasser and Kalai(2003)]{GK03}
Shafi Goldwasser and Yael~Tauman Kalai.
\newblock On the (in)security of the fiat-shamir paradigm.
\newblock In \emph{{FOCS} 2003}, pages 102--113. {IEEE} Computer Society, 2003.

\bibitem[Guo et~al.(2020)Guo, Shi, and Xu]{GSX20}
Zhaozhong Guo, Liucheng Shi, and Maozhi Xu.
\newblock Secrand: {A} secure distributed randomness generation protocol with high practicality and scalability.
\newblock \emph{{IEEE} Access}, 8:\penalty0 203917--203929, 2020.

\bibitem[Harvey and Van Der~Hoeven(2021)]{H21}
David Harvey and Joris Van Der~Hoeven.
\newblock Integer multiplication in time o(nlog$\backslash$,n).
\newblock \emph{Annals of Mathematics}, 193\penalty0 (2):\penalty0 563--617, 2021.

\bibitem[Heidarvand and Villar(2008)]{HV08}
Somayeh Heidarvand and Jorge~L. Villar.
\newblock Public verifiability from pairings in secret sharing schemes.
\newblock In \emph{{SAC} 2008}, volume 5381 of \emph{LNCS}, pages 294--308. Springer, 2008.

\bibitem[Holmgren and Lombardi(2018)]{HL18}
Justin Holmgren and Alex Lombardi.
\newblock Cryptographic hashing from strong one-way functions (or: One-way product functions and their applications).
\newblock In \emph{{FOCS} 2018}, pages 850--858. IEEE, 2018.

\bibitem[Jhanwar(2011)]{J11}
Mahabir~Prasad Jhanwar.
\newblock A practical (non-interactive) publicly verifiable secret sharing scheme.
\newblock In \emph{ISPEC 2011}, pages 273--287. Springer, 2011.

\bibitem[Jhanwar et~al.(2014)Jhanwar, Venkateswarlu, and Safavi{-}Naini]{JVS14}
Mahabir~Prasad Jhanwar, Ayineedi Venkateswarlu, and Reihaneh Safavi{-}Naini.
\newblock Paillier-based publicly verifiable (non-interactive) secret sharing.
\newblock \emph{Des. Codes Cryptogr.}, 73\penalty0 (2):\penalty0 529--546, 2014.

\bibitem[Kalai et~al.(2017)Kalai, Rothblum, and Rothblum]{KRR17}
Yael~Tauman Kalai, Guy~N Rothblum, and Ron~D Rothblum.
\newblock From obfuscation to the security of fiat-shamir for proofs.
\newblock In \emph{{CRYPTO} 2017}, LNCS, pages 224--251. Springer, 2017.

\bibitem[Khovratovich et~al.(2025)Khovratovich, Rothblum, and Soukhanov]{KRS25}
Dmitry Khovratovich, Ron~D. Rothblum, and Lev Soukhanov.
\newblock How to prove false statements: Practical attacks on fiat-shamir.
\newblock \emph{{IACR} Cryptol. ePrint Arch.}, page 118, 2025.

\bibitem[Kiayias et~al.(2017)Kiayias, Russell, David, and Oliynykov]{KRDO17}
Aggelos Kiayias, Alexander Russell, Bernardo David, and Roman Oliynykov.
\newblock Ouroboros: {A} provably secure proof-of-stake blockchain protocol.
\newblock In \emph{{CRYPTO} 2017}, volume 10401 of \emph{Lecture Notes in Computer Science}, pages 357--388. Springer, 2017.

\bibitem[Kung(1973)]{KT73}
Hsiang-Tsung Kung.
\newblock \emph{Fast evaluation and interpolation}.
\newblock Carnegie-Mellon University. Department of Computer Science, 1973.

\bibitem[Libert et~al.(2018)Libert, Stehl{\'{e}}, and Titiu]{LST18}
Beno{\^{\i}}t Libert, Damien Stehl{\'{e}}, and Radu Titiu.
\newblock Adaptively secure distributed prfs from {\textbackslash}mathsf {LWE}.
\newblock In \emph{{TCC} 2018}, volume 11240 of \emph{LNCS}, pages 391--421. Springer, 2018.

\bibitem[Libert et~al.(2020)Libert, Nguyen, Passel{\`{e}}gue, and Titiu]{LNPT20}
Beno{\^{\i}}t Libert, Khoa Nguyen, Alain Passel{\`{e}}gue, and Radu Titiu.
\newblock Simulation-sound arguments for {LWE} and applications to {KDM-CCA2} security.
\newblock In \emph{{ASIACRYPT} 2020}, volume 12491 of \emph{LNCS}, pages 128--158. Springer, 2020.

\bibitem[Lyubashevsky(2012)]{L12}
Vadim Lyubashevsky.
\newblock Lattice signatures without trapdoors.
\newblock In \emph{{EUROCRYPT} 2012}, volume 7237 of \emph{LNCS}, pages 738--755. Springer, 2012.

\bibitem[Lyubashevsky(2024)]{Lyu24}
Vadim Lyubashevsky.
\newblock Basic lattice cryptography: The concepts behind kyber ({ML}-{KEM}) and dilithium ({ML}-{DSA}).
\newblock Cryptology {ePrint} Archive, Paper 2024/1287, 2024.

\bibitem[Menezes et~al.(2018)Menezes, Van~Oorschot, and Vanstone]{MV18}
Alfred~J Menezes, Paul~C Van~Oorschot, and Scott~A Vanstone.
\newblock \emph{Handbook of applied cryptography}.
\newblock CRC press, 2018.

\bibitem[Micciancio and Peikert(2012)]{MP12}
Daniele Micciancio and Chris Peikert.
\newblock Trapdoors for lattices: Simpler, tighter, faster, smaller.
\newblock In \emph{{EUROCRYPT} 2012}, volume 7237 of \emph{LNCS}, pages 700--718. Springer, 2012.

\bibitem[Micciancio and Regev(2004)]{MR04}
Daniele Micciancio and Oded Regev.
\newblock Worst-case to average-case reductions based on gaussian measures.
\newblock In \emph{{FOCS} 2004}, pages 372--381. {IEEE} Computer Society, 2004.

\bibitem[Pass(2013)]{Pass13}
Rafael Pass.
\newblock Unprovable security of perfect {NIZK} and non-interactive non-malleable commitments.
\newblock In \emph{{TCC} 2013}, volume 7785 of \emph{LNCS}, pages 334--354. Springer, 2013.

\bibitem[Peikert and Shiehian(2019)]{PS19}
Chris Peikert and Sina Shiehian.
\newblock Noninteractive zero knowledge for {NP} from (plain) learning with errors.
\newblock In \emph{{CRYPTO} 2019}, volume 11692 of \emph{LNCS}, pages 89--114. Springer, 2019.

\bibitem[Peikert et~al.(2008)Peikert, Vaikuntanathan, and Waters]{PVW08}
Chris Peikert, Vinod Vaikuntanathan, and Brent Waters.
\newblock A framework for efficient and composable oblivious transfer.
\newblock In \emph{{CRYPTO} 2008}, volume 5157 of \emph{LNCS}, pages 554--571. Springer, 2008.

\bibitem[Peikert et~al.(2016)]{P16}
Chris Peikert et~al.
\newblock A decade of lattice cryptography.
\newblock \emph{Foundations and trends{\textregistered} in theoretical computer science}, 10\penalty0 (4):\penalty0 283--424, 2016.

\bibitem[Regev(2010)]{Re10}
Oded Regev.
\newblock The learning with errors problem (invited survey).
\newblock In \emph{{IEEE} {CCC}}, pages 191--204. {IEEE} Computer Society, 2010.

\bibitem[Regev(2024)]{Reg05}
Oded Regev.
\newblock On lattices, learning with errors, random linear codes, and cryptography.
\newblock \emph{CoRR}, abs/2401.03703, 2024.

\bibitem[Ruiz and Villar(2005)]{AJ05}
Alexandre Ruiz and Jorge~L Villar.
\newblock Publicly verifiable secret sharing from paillier's cryptosystem.
\newblock In \emph{WEWoRC 2005}, pages 98--108. Gesellschaft f{\"u}r Informatik eV, 2005.

\bibitem[Schindler et~al.(2020)Schindler, Judmayer, Stifter, and Weippl]{SJSW18}
Philipp Schindler, Aljosha Judmayer, Nicholas Stifter, and Edgar Weippl.
\newblock Hydrand: Efficient continuous distributed randomness.
\newblock In \emph{IEEE S{\&}P 2020}, pages 73--89. {IEEE}, 2020.

\bibitem[Schoenmakers(1999)]{Sch99}
Berry Schoenmakers.
\newblock A simple publicly verifiable secret sharing scheme and its application to electronic voting.
\newblock In \emph{Annual International Cryptology Conference}, pages 148--164. Springer, 1999.

\bibitem[Shamir(1979)]{Sha79}
Adi Shamir.
\newblock How to share a secret.
\newblock \emph{Commun. {ACM}}, 22\penalty0 (11):\penalty0 612--613, 1979.

\bibitem[Shor(1999)]{Shor99}
Peter~W Shor.
\newblock Polynomial-time algorithms for prime factorization and discrete logarithms on a quantum computer.
\newblock \emph{SIAM review}, 41\penalty0 (2):\penalty0 303--332, 1999.

\bibitem[Stadler(1996)]{Stadler96}
Markus Stadler.
\newblock Publicly verifiable secret sharing.
\newblock In \emph{{EUROCRYPT} 1996}, volume 1070 of \emph{LNCS}, pages 190--199. Springer, 1996.

\bibitem[Tomescu et~al.(2020)Tomescu, Chen, Zheng, Abraham, Pinkas, Golan{-}Gueta, and Devadas]{TCZAPGD20}
Alin Tomescu, Robert Chen, Yiming Zheng, Ittai Abraham, Benny Pinkas, Guy Golan{-}Gueta, and Srinivas Devadas.
\newblock Towards scalable threshold cryptosystems.
\newblock In \emph{{IEEE} {SP} 2020}, pages 877--893. {IEEE}, 2020.

\bibitem[Young and Yung(2001)]{YY00}
Adam Young and Moti Yung.
\newblock A pvss as hard as discrete log and shareholder separability.
\newblock In \emph{{PKC} 2001}, pages 287--299. Springer, 2001.

\end{thebibliography}
\end{spacing}
\newpage

%\begin{center}
%		\LARGE
%		\textbf{Appendix}
%	\end{center}

%	\appendix
%	\newpage
%	\begin{center}
%		\LARGE
%		\textbf{Supplementary Material}
%	\end{center}
%\input{appendix_materials}
%\input{appendix_cryptographic_primitives}

\end{document}